\documentclass[11pt]{amsart}
\usepackage{amsmath, amssymb, amsthm, mathrsfs, verbatim, epsf, graphicx, multirow, color, fullpage}

%*****************
%xypic packages
\usepackage[all]{xy}
\xyoption{poly}
\xyoption{arc}
%*****************
  \usepackage{geometry}
  \geometry{a4paper}
\usepackage[colorinlistoftodos,textsize=footnotesize]{todonotes}
%*****************
\usepackage{soul}
\setstcolor{red}
%***************************

%***************************
%Frontmatter Table of contents
%***************************

%Useful numeric rings and fields
%Other useful mathematical operations and functions
%Equation display shortcuts
%Shortcuts for frequently used special characters
%Theorem environments
%Useful Diagrams
%***************************

%*****************
%Useful numeric rings and fields

\newcommand{\Z}{\mathbb{Z}}
\newcommand{\C}{\mathbb{C}}
\newcommand{\R}{\mathbb{R}}
\newcommand{\N}{\mathbb{N}}

%*****************

%*****************
%Other useful mathematical operations and functions
\newcommand{\D}{\partial}
\newcommand{\rk}{\textrm{rk }}
\newcommand{\spn}{\textrm{span }}
%*****************

%*****************
%Equation display shortcuts
\def\ba #1\ea{\begin{align} #1 \end{align}}
\def\bas #1\eas{\begin{align*} #1 \end{align*}}
\def\bml #1\eml{\begin{multline} #1 \end{multline}}
\def\bmls #1\emls{\begin{multline*} #1 \end{multline*}}
%*****************

%*****************
%Shortcuts for frequently used special characters
\newcommand{\fB}{\mathfrak{B}}
\newcommand{\fP}{\mathcal{P}}

\newcommand{\cM}{\mathcal{M}}
\newcommand{\cA}{\mathcal{A}}
\newcommand{\cC}{\mathcal{C}}
\newcommand{\Prop}{\textrm{Prop}}

\newcommand{\rd}{\textrm{d}}
\newcommand{\cZ}{\mathcal{Z}}
\newcommand{\Dom}{\textrm{Dom}}
\newcommand{\detzr}[1] {\langle (\cZ_*^\mu|V_p)^{#1} \rangle}
%*****************
%Theorem environments
\newtheorem{thm}{Theorem}[section]

\newtheorem{lem}[thm]{Lemma}
\newtheorem{cor}[thm]{Corollary}
\newtheorem{prop}[thm]{Proposition}

\theoremstyle{remark}
\newtheorem{eg}[thm]{Example}

\theoremstyle{definition}
\newtheorem{dfn}[thm]{Definition}
\newtheorem{rmk}[thm]{Remark}
%*****************

\begin{document}
%\listoftodos\relax
\title{Wilson Loop diagrams and Positroids}
%\author{Susama Agarwala\footnote{University of Hamburg/Oxford University} \quad Owen Patashnick\footnote{Heilbronn Institute of Mathematical Research, Bristol University}{$\>$}\footnote{research partially supported by the Heilbronn Institute for Mathematical Research, Bristol University}}
\author{Susama Agarwala}
\author{Eloi Marin Amat}
%\author{Lionel Mason}
%\address{\quad o.patashnick@bristol.ac.uk}
\date{\today}

\begin{abstract}
In this paper, we study a new application of the positive Grassmanian to  Wilson loop diagrams (or MHV diagrams) for scattering amplitudes in N=4 Super Yang-Mill theory ($N=4$ SYM).  There has been much interest in studying this theory via the positive Grassmanians using BCFW recursion. This is the first attempt to study MHV diagrams for planar Wilson loop calculations (or planar amplitudes) in terms of positive Grassmannians. We codify Wilson loop diagrams
completely in terms of matroids. This allows us to apply the combinatorial tools in matroid theory used to identify positroids, (non-negative Grassmannians), to Wilson loop diagrams. In doing so, we find that certain non-planar Wilson loop diagrams define positive Grassmannians. While non-planar diagrams do not have physical meaning, this finding suggests that they may have value as an algebraic tool, and deserve further investigation.
\end{abstract}

\maketitle

\tableofcontents

\makeatletter
\providecommand\@dotsep{5}
\makeatother
%\listoftodos\relax

During the last decade, the computation of scattering amplitudes in N=4 SYM has evolved away from old-school Feynman diagrams to the use of twistors and recursive methods that are much more efficient computationally. The two most well-known methods are the  BCFW recursion relations \cite{Britto:2005fq} and MHV diagrams as introduced by Cacazo, Svrcek and Witten \cite{Cachazo:2004kj} and developed using the twistor action \cite{Adamo:2011cb}. Both methods produce a sum of terms.  The BCFW recursion relations express amplitudes as sums of terms built from amplitudes involving a smaller number of particles that can eventually be built from three-point amplitudes.  MHV diagrams express amplitudes as a sum of terms that have a representation as diagrams.  Recently, a map between contributions arising from the BCFW recursion relations and certain cycles in the positive Grassmanian have been shown to play a key role in the theory \cite{arkani:2012nw}. In a further development, these cycles can be pieced together inside another Grassmanian, to form a `polyhedron' that has been christened the  Amplituhedron \cite{Arkani-Hamed:2013jha}. In this paper we pursue just the first step in creating a parallel construction for MHV diagrams and study the correspondence between cycles in the positive grassmanian and MHV diagrams.

Scattering amplitudes for $N=4$ SYM can be obtained from a formulation of the theory via an action in twistor space that yields MHV diagrams as the Feynman diagrams \cite{Boels:2006ir,Boels:2007qn}. In a separate development, it emerges that amplitudes can be obtained  in a completely different way as certain correlation functions of Wilson loops in planar $N=4$ SYM \cite{Alday:2007hr}.  A polygon can be obtained from an amplitude by joining together the momentum vectors of the particles   taking part in the interaction process (the planarity assumption gives an ordering to the particles). The sides are null vectors so in space-time this polygon is constrained, but it can be reformulated as a generic polygon in twistor space which is complex projective 3-space $\mathbb{CP}^3$.  In order to take account of the different polarization states of particles, a supersymmetric formulation is often used in which twistor space acquire further fermionic coordinates.  We adopt a formalism developed in \cite{hodges:2013eliminating, Arkani-Hamed:2013jha} that expresses these in terms of $k$ additional bosonic coordinate that encode this extra fermionic dependence.  Here $k$ denotes the MHV degree for the amplitude that corresponds to there being  $k+2$  negative helicity  gluons in the interaction (the amplitude vanishes when $k=-2$ or $-1$ hence the maximality).   A key advantage of the MHV diagram formalism is that $k$ also counts the number of internal edges in a diagram and so amplitudes with low $k$ are very simple.

Thus the basic data on which an amplitude or Wilson-loop correlator depends, is $n$ twistors with values in $\C^{4+k}$ or projectively in $\mathbb{CP}^{3+k}$.   A tree amplitude, or more generally the integrand for a loop amplitude is a rational function of this  data.   In this paper we will be concerned with the representation of this rational function as a sum of contributions coming from MHV diagrams (as opposed to BCFW recursion which leads to so-called on-shell diagrams or even traditional Feynman diagrams obtained from a space-time rather than twistor space action).  We will see that the formula for these MHV diagrams  involves an integral over $4k$ parameters that have a natural interpretation as coordinates on a $4k$ cycle in the Grassmannian $G(k,n)$.

Because we are dealing with rational functions, we will be flexible as to whether the basic variables are taken to be real or complex, as the complexification will be unambiguous. Indeed we see that we have important additional structure if, rather than $\C^{4+k}$, we take the data to be in $\R^{4+k}$. In this case the Grassmannian $G(k,n)$ has a subspace $G_+(k,n)$, the positive Grassmannian, on which cyclic minors are positive. The intersection of the $4k$ cycles with the boundary of this space encodes the poles of the rational functions.  These poles fall into two classes, the spurious poles, which identify both the sums of diagrams that lead to the final, physical, amplitude, and the the physical poles that are an essential ingredient in the final amplitude.

In this paper, we follow in the footsteps of those, such as \cite{arkani:2012nw}, in that we use the language of positive Grassmannians and matroids to understand amplitudes arising in $N=4$ SYM theory. However, the similarity ends there. This paper focuses on MHV digrams, which allows one to study the entire family of off shell interactions missed by the BCFW approach. However, unlike the BCFW approach, there is not a relationship between the Amplitudes that arise in this context and the cycles in the positive Grassmannians that they diagrams naturally define. In particular, we find that for a particular class of Wilson loop diagrams, those with $\overline{MHV}$ subdiagrams, there is an ambiguity in the correspondence between the Amplitudes and Wilson loop diagrams, Theorem \ref{equivalencethm} and Remark \ref{amplitudeambiguity}. While it is well known that all $\overline{MHV}$ Wilson loop amplitudes (determined by sums of $\overline{MHV}$ Wilson loop diagram amplitudes) are trivial, there is no such result for the large class of diagrams the contain $\overline{MHV}$ \emph{sub}diagrams. Furthermore, we find, exactly in this case of Wilson loop diagrams with $\overline{MHV}$ subdiagrams, that non-planar Wilson loop diagrams also give rise to positive Grassmannians, Theorem \ref{crosspos}. While there are reasons to believe that non-planar diagrams are not physically significant for the understanding of this theory, initial calculations show that they may be important, if only as algebraic tools, for understanding the relationship between the Grassmanians that arise in the MHV context, and the Amplituhedron defined in the BCFW context. Whatever the case, this paper shows that the class of Wilson loop diagrams with $\overline{MHV}$ subdiagrams deserve significant further study.

There are other advantages to studying Wilson loop diagrams using matroids. All of the key properties defining a matroid are encoded pictorially in the Wilson loop diagram. Therefore, by undertanding both matroids and Wilson loop diagrams, one may literally read off properties of the associated Grassmannian from the picture. This powerful innovation allows one to prove theorems at all $N^kMHV$ levels, where as direct calculational methods are still restricted to the case of $k=2$. Using these techniques, we have proven the long standing adage that planar Wilson loops lead to positive Grassmannians, Corollary \ref{noncrosspos} and Theorem \ref{noncrosspartconn}. As mentioned above, this work also leads to the intriguing realization certain diagrams with crossing partitions also yeild positive Grassmannians, Theorem \ref{crosspos}. More concretely, the matroidal approach finds a way to decompose Wilson loop diagrams into indecomposable building blocks, consisting of fewer vertices and propagators, which will help in future calculations of complicated diagrams at higher MHV degree, Theorem \ref{disconnectedloopprop}.

The rest of this paper is organized as follows. Section \ref{physics} introduces the Wilson loop diagrams and the amplitudes they represent. It explains the connection between Wilson loop diagrams and the Grassmanians they define. This section also identifies the aforementioned class of Wilson loop diagrams with $\overline{MHV}$ subdiagrams, and give conditions for when these diagrams correspond to the same matroid, Theorem \ref{equivalencethm}. Finally, Theorem \ref{overdefinethm} show that any Wilson loop diagram containing a subdiagram at $n\leq k+2$ is trivial. Section \ref{matroiddefssection} introduces the key concepts from matroid theory that are crucial for this paper. This includes a criteria for understanding positivity in the language of matroids. Section \ref{Wilsonloopstopositroids} contains the key results of this paper. In it, we use the notation set up in section \ref{matroiddefssection} to define, graphically, which Wilson loop diagrams satisfy the positivity conditions outlined in section \ref{physics}. In doing so, we show how to decompose a Wilson loop diagram into its component building blocks Theorem \ref{disconnectedloopprop}. We show that planar Wilson loops lead to positive Grassmannians, Corollary \ref{noncrosspos} and Theorem \ref{noncrosspartconn}, and that certain diagrams with crossing partitions also yeild positive Grassmannians, Theorem \ref{crosspos}. Section \ref{futurework} discusses future directions for this work.

\section{Wilson loop diagrams and amplitudes \label{physics}}

This section is an introduction to the combinatorics of Wilson loop diagrams for mathematicians. As such, we leave the precise definitions of the physical objects involved, and the derivation of the diagrams and associated integrals to the existing literature, and focus what the defined amplitudes and the associated diagrams at the heart of this paper are.

\subsection{The Feynman rules of Wilson loops} \label{Feynmanrules}
As remarked above, amplitudes and Wilson loops in planar $N=4$ Super-Yang-Mills are dual (equivalent) objects.  They are given as sums of Feynman diagrams and we focus on those that arise from the Wilson-loop description, but reformulated as holomorphic Wilson-loops in twistor space
\cite{Mason:2010yk}. For more on the relation between Wilson loop diagrams and traditional Feynman diagrams, see \cite{Adamo:2011cb}.

For a tree level amplitude (no internal loops), a Wilson loop diagram consists of a boundary polygon, with $n$ cyclically ordered vertices and $k$ propagators, with $n \geq 4+k$. The planarity assumption in Yang-Mills arises from a limit in which a gauge group $SU(n)$ is chosen in which we take $n\rightarrow\infty$ and in that limit the leading contributions are the planar diagrams\footnote{although a non-planar Wilson-loop does make sense, non-planar diagrams come with different colour factors that encode the gauge group dependence of the particles.}.  We may however consider non-planar diagrams for mathematical reasons.  Propagators of the interaction are represented by wavy lines in the diagram connecting two sides of a Wilson loop diagram. Each propagator is defined by an ordered pair $(i, j)$, where $i, i+1$ and $j, j+1$ define the two edges of the boundary polygon. A key simplification of the MHV diagram formalism is that there are no additional vertices at the tree level.  For an $L$-loop amplitude, there are $L$ vertices in the holomorphic Wilson-loop diagram, and each vertex can be represented geometrically as a further line in twistor space that generically does not meet the polygon or other such lines.   We will ignore these for the most part and focus on tree level diagrams in this paper. The key ideas extend to that case quite simply.

\begin{dfn}
A Wilson loop diagram is comprised of a cyclic ordered set $[n]$, and a set of $k$ pairs of its elements: \bas \fP = \{(i_1, j_1), \ldots (i_k, j_k) | i_r, j_r \in [n]; \leq n\}\;. \eas Then a, Wilson loop diagram, $W$, is the pair $W = (\fP, [n])$.
\end{dfn}

Note that, for the moment, we make no requirement that $i_p < j_p$. Therefore, $(i, j)$ and $(j , i)$ represent the same propagator.

In the original Wilson loop diagram, the vertices of the boundary polygon correspond to complex super-twistors, but here we follow the amplituhedron convention \cite{Arkani-Hamed:2013jha} that eliminates the four fermionic variables in favour of $k$ bosonic ones and takes the twistors to be real so that the vertices $Z_1 \ldots Z_n \in \R^{4+k}$.   We write each $Z = (Z^\mu, \vec{z}) \in \R^{4+k}$, with $Z^\mu \in \R^4$ encoding the momentum of the particles, and $\vec{z} \in \R^k$, the bosonized fermions. The set of vectors $\{Z_1 \ldots Z_n\}$ are chosen such that any ordered subset of $4+k$ twistors defines a positive volume.

\begin{dfn} Indicate by $\cZ \in M(n, 4+k)$, the matrix with rows defined by $Z_1 \ldots Z_n$ in that order, all maximal determinants of $\cZ$ are positive. \end{dfn}

A Wilson loop diagram contributes to an $N^kMHV$ at $n$ points if there are $k$ internal propagators and $n$ external boundary vertices.  It also depends on a reference twistor denoted $Z_*$. The dependence on $Z_*$ only drops out in the final sum over all diagrams.

\begin{eg}
Here is an 8 point Wilson loop diagram with one MHV propagator, i.e., NMHV:
\bas \left[(2, 6), [8] \right] = {\begin{xy}
(-6, 10); (6, 10) **{\dir{-}}, (7, 11) *{Z_8},
(-4, 11); (-11, 4) **{\dir{-}}, (-13, 5) *{Z_2},
 (-10, 6); (-10, -6) **{\dir{-}}, (-12, -6) *{Z_3},
(-11, -4); (-4, -11) **{\dir{-}}, (-6, -12) *{Z_4},
(-6, -10); (6, -10)  **{\dir{-}}, (6, -12) *{Z_5},
(4, 11); (11, 4) **{\dir{-}}, (-7,11) *{Z_1},
(10, -6); (10, 6) **{\dir{-}}, (12, 5) *{Z_7},
(4, -11) ; (11, -4) **{\dir{-}}, (12, -4) *{Z_6},
(-10, 0); (10, 0) **@{~},
\end{xy}} \;. \eas

\end{eg}

The diagram represents an integral written in terms of a rational function of the $Z_i$ at the vertices, and a product of distributions, whose arguments are determined by the propagators. The distributions in question are defined as follows.
\begin{dfn}
Write $z^r$ to be the $r^{th}$ component of $\vec{z}$. Then write a modified Dirac delta function \bas \delta^4_p(Z)= (z^p)^4 \prod_{I=1}^4 \delta(Z^I)\;,\eas where $I$ indexes the first four (momentum twistor) coordinates of $Z$.
\end{dfn}
For a $n$ point $N^kMHV$ diagram with $k$ propagators, index the propagators by $p \in \{ 1, \ldots k\}$.
For the $p^{th}$ propagator corresponding to the pair $(i_p, j_p)$ we assign the a function, called the diagram integral
\ba
I(W(\cZ_*) = \int_{R^k} \prod_p \frac{\rd c_{p,0}}{\mathrm{Vol}(Gl(1)) c_{p_0}}\int_{\R^{4k}} \frac{\rd \hat c_{p,i_p}\rd \hat c_{p,i_p+1}\rd\hat c_{p,j_p}\rd\hat c_{p,j_p+1}}{\hat  c_{p,i_p}\hat c_{p,i_p+1}\hat c_{p,j_p}\hat c_{p,j_p+1}}  \delta_p^4(Y_p)\, .
\label{amplitudeeq} \ea
where
\bas
Y_p = c_{p, 0} Z_* + c_{p,i_p} Z_{i_p} + c_{p,i_p+1} Z_{i_p+ 1} + c_{p,j_p} Z_{j_p} + c_{p,j_p+1} Z_{j_p + 1} \in R^{4+k}\, .
\eas

Notice that the vector $Y_p$ is a function of the twistors defining the propagator $p$.

There are a number of details of these formulae that require further explanation
\begin{enumerate}
\item
Here $Z_* \in \R^{4+k}$ is the arbitrary reference twistor mentioned previously, that corresponds to the choice of gauge that leads to MHV diagrams \cite{Adamo:2011pv}. We can choose $Z_*$ such that $\vec{z}_* = 0$. Write $\cZ_* = \left( \begin{array}{c}Z_* \\ \hline \cZ \end{array} \right) \in M(n+1, 4+k)$ to be the augmented matrix defined from $\cZ$ by adding an initial row $Z_*$. Furthermore, we choose $\Z_*$ such that $\cZ_*$ has no maximal minors of determinant $0$. As before, any maximal minor of $\cZ_*$ that only involves the $Z_i$ (i.e. that do not involve the first row) has positive determinant.
\item We write $I(W)(\cZ_*)$ as a shorthand for the fact that the Wilson diagram integralis a function of the twistors forming the rows of $\cZ_*$, namely  $\{Z_*, Z_1, \ldots, Z_n)$.
\item
The Vol$(Gl(1))$ in the denominator is meant in the Fadeev-Popov sense that the integrand has a $Gl(1)$ symmetry under rescaling all the $c_{p,\cdot}$s.  The integral is therefore formally infinite, but can be defined by contracting the 5-form integrand with the generator of $Gl(1)$ and integrating the corresponding 4-form over some 4-cycle that intersects each $Gl(1)$ orbit once.  In practice this can be done by setting one of the $c_{p,\cdot}$s to 1, and the symmetry guarantees that the answer will be independent of the choice.
\item The coefficients $\hat c_{p,s}$ are inductively defined functions of the coefficients $c_{p,s}$ defining $Y_p$. Namely, suppose $\{p_{1} \ldots p_{m}\}$ are the propagators with endpoints on the edge defined by $i$ and $i+1$, ordered inversely to the cyclic ordering of the other endpoints. That is, write $p_{r} = (i, j_r)$, where $j_r >_i j_{r+1}$ in the total ordering on $[n]$ where $i \leq i_1 \ldots \leq i-1$. Then define \bas c_{p_{r}, i} = \hat c_{p_{r-1}, i} \hat c_{p_{r}, i} \quad ; \quad c_{p_{r}, i+1} = \hat c_{p_{r-1}, i+1} \hat c_{p_{r}, i} + \hat c_{p_r, i_1} \;,\eas where $c_{p_{1}, i} = \hat c_{p_{1}, i}$ and $c_{p_{1}, i+1} = \hat c_{p_{1}, i+1}$.
\item
 In order to obtain a rational function, the integrals are performed essentially algebraically against the delta functions allowing us to solve for the $c_{p,\cdot}$s yielding a rational function\footnote{In fact in this real formulation, certain modulus signs will also arise, but we will ignore these in this context; they do not arise in the complex formulation.}. We give more details of this in Section \ref{Wilsonamplitudes}.
\end{enumerate}

Writing the arguments of $Y_p$ in terms of the $\hat c_{p,s}$ enforces a sort of planarity to the diagrams. In particular, if one assumes that the coefficients $c_{p_r, i}$ and $c_{p_r, i+1}$ all have the same sign, for all propagators $p_r$ defined by $i$, this bounds $c_{p_r, i+1}$ on one end by $c_{p_{r-1}, i+1}$. Pictorally, where the endpoints of $p = (i,j)$ are thought of as lying on the line defined by $Z_i, Z_{i+1}$ and $Z_j, Z_{j+1}$, this bound is represented by inserting the end point of $p_r$ further along the line defined by $Z_i, Z_{i+1}$ than $p_{r-1}$.

\begin{dfn}
Write $V_p = \{i_p, i_p+1, j_p, j_p+1\}$ to indicate the set of twistors defining the propagator $p$, and $V^*_p = * \cup V_p$ to be the set including the reference twistor. For a set of propagators $P \in \fP$, write $V_P = \cup_{p \in P}V_p$ and $V_P^* = \cup_{p \in P}V^*_p$ to indicate the set of twistors defining the set of propagators, excluding and including the reference twistor, respectively.
\label{dependencysets}\end{dfn}

In this notation, $Y_p = C_p \cdot \cZ_*$, where $C_p \in \R^{n+1}$ is the vector with entries \bas C_{p, s} = \begin{cases}c_{p, s} & \textrm{if} s\in V_p \\ 0 & \textrm{else}\end{cases}\;. \eas That is, the entries of $C_p$ are the coefficient of $Z_s$ in the representation of $Y_p$ given by the Wilson loop diagram. Each factor of $\delta^4(Y_p)$  implies that the values $c_{p,s}$ define vectors, $C_p$ with are in the kernel of $\cZ_*$. The product of these delta functions implies that we are interested in the span of these vectors.

\begin{dfn}
Let $\fP$ be the set of propagators for a Wilson loop $W$. Write $\cC(W(\cZ)*))$ to be the matrix defined by the row vectors $\{C_p\}_{p \in \fP}$.
\end{dfn}

This matrix defines the span of the vectors $C_p \in \R^{n+1}$. By abuse of notation, we refer to $\cC(W(\cZ)*))$ as the Grassmannian defined by the Wilson loop $W$ and the twistors $\cZ_*$.

In this paper, we are primarily interested in which Wilson loops to study. We mostly concern ourselves with the space $\cC(W(\cZ_*))$ for any Wilson loop. In future work, where we are concerned with the properties of the integrals, $I(W)$, in particular their poles, and in specific physically meaningful sums of diagrams, we explore the integrand and the coefficients $\hat c_{p,r}$, in greater detail.

\subsection{Amplitudes and Grassmannians of Wilson loop diagrams\label{Wilsonamplitudes}}
The diagram integral of Wilson loop diagram defined above by the Feynman diagrams is a map from the twistor configuration space to distributions on the space of supertwistors.

\begin{dfn} For any natural numbers $m, n \in \N$, let $M_{*, +}(m+1, n) \subset M(m+1, n)$ be the subset of $m+1 \times n$ matrices with the property that \begin{enumerate} \item No maximal minors have $0$ determinant, \item All maximal minors that do not involve the first row have strictly positive determinant. \end{enumerate} We call  $M_{*,+}(n+1, k+4)$, with $n  \geq k+4$, the space of twistor configurations. \label{twistconfig} \end{dfn}

Each $\cZ_* \in M_{*,+}(n+1, k+4)$ as defined in section \ref{Feynmanrules} is a twistor configuration.

As shown in equation \eqref{amplitudeeq}, for $W$, a $N^kMHV$ diagram on $n$, the associated integral $I(W)$ is a distribution valued functional mapping from the space of twistor configurations. In this paper, we concern ourselves only with the Grassmannians represented by $\cC(W(\cZ_*))$ that define this map.

However, we wish to avoid the complication of distribution valued amplitudes in this paper. Therefore, we restrict to a well behaved subspace of the twistor configuration space. For $Z = (Z^\mu, \vec{z}) \in \R^{4+k}$, a bosonized supertwistor, let $\pi_4 (Z) \in \R^\mu$ be the four vector defined by the first four (momentum twistor) components of $Z$, as discussed in Section \ref{Feynmanrules}. We write \bas \pi_4 (Z) = Z^\mu \;.\eas

\begin{dfn} \label{genericdfn}
Consider the twistor configuration $\cZ_*$. Define a matrix $\cZ_*^\mu$, where each row is given by the projection  $\{Z_*^\mu, Z_1^\mu , \ldots, Z_n^\mu\}$. The twistor configuration $\cZ_*$ is called generic if $\cZ_*^\mu \in M_{*, +}(n+1, 4)$.
\end{dfn}

Given a fixed twistor configuration, $\cZ_*$, the propagators in $W$ define a subspace of the kernel of $(\cZ_*^\mu)^T$, explicitly, that spanned by $\{C_p\}_{p \in \fP}$. The matrix $\cC(W(\cZ_*)) \subset \ker \cZ_*^\mu$ represents said subspace. In other words, we may view a diagram, $W$, as a map from twistor configurations to Grassmannians, \ba W : M_{*,+}(n+1, 4+k) & \rightarrow \bigoplus_{d =1}^{n-3} G(d, n+1) \label{Wilsonmap}\\ \cZ_* & \rightarrow \cC(W(\cZ_*)) \nonumber \;.\ea The integral, $I(W) (\cZ_*)$, assigns a function of $\{Z_*, Z_1, \ldots Z_n\}$ to the Grassmannian $\cC(W(\cZ_*))$. We may evaluate the form of this matrix explicitly.

\begin{dfn}Denote by $\cZ_*^\mu|V_p$, the minor of $\cZ_*^\mu$ defined by the set $V_p$. Define the determinant \bas \detzr{} =  \det (\cZ_*^\mu|V_p)\;. \eas Write $\detzr{m}$ to be the determinant of the matrix formed by replacing the vector $Z_m^\mu$ of $\cZ_*^\mu|V_p$ with $Z^\mu_*$. The quantity $\detzr{m}$ is not defined if $Z_m^\mu$ is not a row of $\cZ_*^\mu|V_p $.  \end{dfn}

The expression $\delta^4_p(Y_p)$ is non-zero when $C_p$ defines the kernel of the matrix \ba (\cZ_{*}^\mu|V_p^*)^T\in M(4, 5), \label{propagatorkernel}\ea By Cramer's rule, under the above definition, integrating against the functions $\delta^4_p(Y_p)$ sets \ba \frac{c_{p, m}}{c_{p,0}} = - \frac{\detzr{m}}{\detzr{}} \label{cfunctions}\;,\ea for $m \in V_p$. Since the kernel defined by $\delta^4_p(Y_p)$ is one dimensional, we are free to set $c_{p,0}$ as a free variable.

The dimension of $\cC(W(\cZ_*))$ as a Grassmanin, i.e. the rank of the matrix, is bounded above both by $k = |\fP|$, the number of propagators of $W$, and by $|V(\fP)|-3$. Recall from Definition \ref{dependencysets} that $|V(\fP)| \leq n+1$ is the number of twistors involved in defining the propagators of $W$. Then $|V(\fP)|-3$ is a bound for the rank of $\cC(W(\cZ_*))$, as it is the dimension of $\ker (\cZ_*^\mu|V(\fP))T$.

We are now ready to explicitly calculate the diagran Integral defined by a Wilson loop diagram. For the sake of simplicity, we consider a diagram that does not have two propagators ending on the same edge. In this case, $\hat c_{p,s} = c_{p,s}$ for all $p$ and $s$. The integrand for a Wilson loop diagram with multiple propagators ending on the same edge is derived similarly, with the added complication of solving for $c_{p,s}$ in terms of the$\hat c_{p,s}$.

For a such a Wilson loop diagram, integrating against the variable $c_{p,i}$, with $i\neq 0$ sets evaluates the $c_{p,s}$ according to \eqref{cfunctions}. It remains to evaluate the integrals of the form \bas  \int_{\R}  \frac{\rd c_{p,0}}{\mathrm{Vol}(Gl(1)) c_{p,0}}\;.\eas The factor of $\mathrm{Vol}(Gl(1))$ in the denominator of this integrand allows us to set $c_{p, 0}$ to a constant of our choice. For ease of notation, we write $c_{p, 0} = - \detzr{}$. Then equation \eqref{cfunctions} gives $c_{p,m} = \detzr{m}$.

To calculate the numerator, recall that there is also a twistor component to $\delta_p(Y_p) = (y^p)^4 \prod_{I=1}^4 Y_p^I$, where \bas y^p = c_{p, i_p}z_{i_p}^p + c_{p, i_p+1}z_{i_p+1}^p + c_{p, j_p}z_{j_p}^p + c_{p, j_p+1}z_{j_p+1}^p \eas is the bosonized component of the twistor $Y_p$. Recalling that $Z_*$ is chosen such that $\vec{z}_*=0$, the integral given in equation \eqref{amplitudeeq}, evaluates to \bas I(W(\cZ_*)) =  \frac{(\detzr{i_p}z_{i_p} + \detzr{i_p+1}z_{i_p+1} +\detzr{j_p}z_{j_p}+ \detzr{j_p+1}z_{j_p+1})^4}{\detzr{}\detzr{i_p}\detzr{i_p+1}\detzr{j_p}\detzr{j_p+1}} \;\eas for $W$ a Wilson loop diagram with no two propagators sharing a boundary edge.

For physical reasons, we are interested only in the Grassmannians with certain properties on $\cC(W(\cZ_*))$.

\begin{dfn}Define $\cM(W(\cZ_*)) \in M_{k,n}$ to be the matrix derived from $\cC(W(\cZ_*))$ by ignoring the first column. \label{mdef} \end{dfn}

We are interested in Wilson loops $W$ such that $\cM(W(\cZ_*))$ is a non-negative Grassmannian:  \bas \cM(W(\cZ_*)) \in G_\geq (k, n) \;.\eas In particular, $\cM(W(\cZ_*))$ must have rank $k$.

In light of this, we make the following definition.

\begin{dfn}
A Wilson loop diagram, $W$, is admissible if there is a generic twistor configuration such that $\cM(W(\cZ_*))\in G_{\geq 0}(k, n)$. \label{admissibledfn}
\end{dfn}

That is, the matrix $\cM(W(\cZ_*))$ has full rank, and all maximal minors have non-negatives determinants for some generic twistor configuraton.

In this paper, we study the Grassmannians $\cM(W(\cZ_*))$ defined by a Wilson loop diagram, as defined by $W(\cZ_*)$ using the language of matroids. We are only interested in Wilson loop diagrams In the next section, we further classify Wilson loop diagrams, to eliminate a large class of inadmissibly diagrams from consideration entirely.

\subsection{Admissible Wilson loop diagrams\label{welldefinedsect}}

In the previous section, we show that Wilson loop diagrams define a subspace of the kernel of the matrix $(\cZ_*^\mu)^T$, for a given twistor configuration e $\cZ_*$. In this section, we examine the properties of that subspace.

\begin{dfn}\label{overexact}
Any diagram $W= (\fP, [n])$ that has a set of propagators $P \subset \fP$, such that \ba |V_P| < |P|+3 \label{overdefinedcond}\ea is called overdefined. If $W$ is not overdefined, it is well defined. If a well defined diagram contains a set of propagators $P$ such that \ba |V_P| = |P|+3 \label{exactcond}\ea it is called exact.
\end{dfn}

Notice that any subdiagram of $W$, $(P, V_P)$, satisfying \eqref{overdefinedcond} is, for physical reasons, known to be trivial. Any subdiagram $(P, V_P)$ satisfying \eqref{exactcond} is an $\overline{MHV}$ diagram, and uninteresting, as one knows that the sums of all such diagrams, for a fixed $|P|$ is trivial. However, there is no generalization of this to all diagrams, even those at $n > k+4$, that contain such interactions as subdiagrams. In this section, we show that overdefined Wilson loops are not admissible, and that exact diagrams, in some sense, uniquely define their Grassmannians.

We begin with overdefined Wilson loops.

\begin{thm}\label{overdefinethm}
If $W$ is an overdefined Wilson loop, then it is not admissible.
\end{thm}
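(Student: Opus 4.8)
The plan is to prove the contrapositive concretely: I will show that when $W$ is overdefined the defining rows $\{C_p\}_{p\in\fP}$ of $\cC(W(\cZ_*))$ must be linearly dependent, so that neither $\cC(W(\cZ_*))$ nor the submatrix $\cM(W(\cZ_*))$ obtained by deleting its first column (Definition \ref{mdef}) can attain the full rank $k=|\fP|$ demanded by admissibility (Definition \ref{admissibledfn}). Since the dependence will be forced for \emph{every} generic twistor configuration, no choice of $\cZ_*$ can place $\cM(W(\cZ_*))$ in $G_{\geq 0}(k,n)$.

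First I would recall the structural constraint from Section \ref{Feynmanrules}: the factor $\delta^4_p(Y_p)$ annihilates the momentum part of $Y_p=C_p\cdot\cZ_*$, so $C_p\cdot\cZ_*^\mu=0$ and hence $C_p\in\ker\big((\cZ_*^\mu)^T\big)$. Because $C_p$ is supported on $V_p^*$, its restriction to $V_P^*$ lies in $\ker\big((\cZ_*^\mu|V_P^*)^T\big)$ for any $P\ni p$.

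Next, fix a set $P\subset\fP$ witnessing overdefinedness, so $|V_P|<|P|+3$ (Definition \ref{overexact}), and compute the dimension of the common kernel $\ker\big((\cZ_*^\mu|V_P^*)^T\big)$. The matrix $\cZ_*^\mu|V_P^*$ has $|V_P^*|=|V_P|+1$ rows and $4$ columns, and genericity of $\cZ_*$ (Definition \ref{genericdfn}) guarantees that \emph{every} maximal minor of $\cZ_*^\mu$—including those built from the reference row $Z_*^\mu$—is nonzero, so its rank is $\min(|V_P|+1,4)$. The kernel therefore has dimension $\max(0,|V_P|-3)$, which matches the rank bound $|V(\fP)|-3$ noted earlier for the full diagram, and by the overdefined inequality is strictly less than $|P|$. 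Thus the $|P|$ restricted vectors $\{C_p|_{V_P^*}\}_{p\in P}$ lie in a space of dimension below $|P|$ and so satisfy a nontrivial relation $\sum_{p\in P}\lambda_p\,C_p|_{V_P^*}=0$. As each $C_p$ vanishes off $V_P^*$, this lifts to $\sum_{p\in P}\lambda_p\,C_p=0$ in $\R^{n+1}$, and restricting away the first column preserves the relation; hence $\mathrm{rk}\,\cM(W(\cZ_*))\le\mathrm{rk}\,\cC(W(\cZ_*))<k$ for all generic $\cZ_*$, so $W$ is not admissible.

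I expect the one delicate point to be the dimension count: one must invoke the full force of genericity, in particular that maximal minors of $\cZ_*^\mu$ using the reference row do not vanish, to conclude $\mathrm{rk}(\cZ_*^\mu|V_P^*)=4$ rather than something smaller, and one must track the reference twistor carefully in the bookkeeping—the shift from $+4$ to $+3$ in the overdefined condition is exactly the contribution of $*$ to $V_P^*$. The degenerate cases $|V_P|\le 3$, where the kernel is trivial and the rows in question are simply zero, deserve a remark but present no difficulty.
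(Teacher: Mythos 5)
Your argument is correct and is essentially the paper's own proof: both restrict the kernel vectors $C_p$ to the coordinate subspace indexed by $V_P^*$, use genericity to compute that $\ker\big((\cZ_*^\mu|V_P^*)^T\big)$ has dimension $|V_P|-3 < |P|$, and conclude that the $C_p$ for $p\in P$ (and hence the rows of $\cM(W(\cZ_*))$) are linearly dependent for every generic twistor configuration. Your version is slightly more explicit about the rank computation and the lifting of the relation back to $\R^{n+1}$, but the route is the same.
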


\begin{proof}
It is enough to show that if $W$ is overdefined, then $\cM(W(\cZ_*)$ does not have full rank for a generic twistor configuration.

For this calculation, we work in a basis $\{e_*, e_1 \ldots e_n\}$ of $\R^{n+1}$.

Let $P$ be a set of propagators of $W$ satisfying \eqref{overdefinedcond}. Suppose $|V_P|= m$. Let \bas U_P = \spn\langle\{e_j\}_{j \in V_P^*}\rangle\eas be the $m+1$ dimensional subspace of $\R^{n+1}$ coressponding to the twistors defining the propagators in $W$. Write $\cZ_*^\mu|V_P^* \in M_{*,+}(m+1, 4)$ to be the momentum twistor matrix of the twistors defining the propagators in $P$. Write $C_p|U_P$ to be the projection of $C_p \in \ker \cZ_*^\mu$ onto $U_P$. Restricted to this vector space, $\dim (\ker \cZ_*^\mu)|U_P = m-3$. If $|P| > m-3$, then the set $\{C_p|U_P\}$ are not all independent.

Since the propagators $p\in P$ do not depend on any vertex outside of $V_P^*$, $C_p|U_P^\perp = \vec{0}$. This implies that the $C_p$ are not all independent. The matrix $\cM(W(\cZ_*))|V_P^*$ does not have full rank in any generic twistor configuration. Therefore, $W$ is not admissible.
\end{proof}

In fact, we have shown that for an overdefined Wilson loop, the matrix $\cM(W(\cZ_*))|V_P^*$ does not have full rank in a generic twistor configuration. Below we show that the converse is also true. Given a generic twistor configuration, the matrix $\cM(W(\cZ_*))$ has full rank only if the Wilson loop diagram is well defined.

The generic twistor condition is essential to the arguments in this paper. Without this requirement, it is possible to choose a twistor configuration $\cZ_*$ such that the associated matrices, $\cM(W(\cZ_*))$, have maximal minors with $0$ determinants. However, for well defined Wilson loops, such a twistor configuration corresponds to a low dimensional hypersurface in twistor space. These hypersurfaces lead to poles of the holomorphic versions of the integrals $A(W(\cZ_*))$ expressed in \eqref{amplitudeeq}. Identifying these hypersurfaces for families of Wilson loops and studying the structures of the poles at these hypersurfaces is the the subject of ongoing study in the physics \cite{Lam14} and also the subject of future work for these authors.

\begin{thm}
Given a Wilson loop $W = (\fP, [n])$, and a $\cZ_*$, a generic twistor configuraton,  the matrix $\cM(W(\cZ_*)) \in G(|\fP|, [n])$, if and only if $W$ is well defined.  \label{genericindep}
\end{thm}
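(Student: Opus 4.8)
The plan is to prove the two implications separately, observing that one of them is essentially already in hand. The matrix $\cM(W(\cZ_*))$ lies in $G(|\fP|,[n])=G(k,[n])$ precisely when its $k=|\fP|$ rows are linearly independent, so the assertion is that this full-rank condition is equivalent to well-definedness in the sense of Definition \ref{overexact}. For the direction ``$\cM(W(\cZ_*))\in G(k,[n])\Rightarrow W$ well defined'' I would simply invoke the contrapositive of Theorem \ref{overdefinethm}: its proof shows that an overdefined diagram has a set $P$ for which $\{C_p\}_{p\in P}$ are forced to be dependent (since each $C_p$ vanishes off $V_P^*$ and the restrictions to $V_P^*$ are dependent), so $\cM$ never attains rank $k$ at a generic configuration. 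Hence if $\cM(W(\cZ_*))$ has rank $k$ then $W$ is not overdefined, i.e.\ it is well defined. The substance of the proof is the converse.

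For the converse I would first reduce to a single nonvanishing statement. The entries of $\cM$ are the determinants $\detzr{m}$, hence polynomials in the entries of $\cZ_*^\mu$, so every $k\times k$ minor of $\cM$ is a polynomial in the twistor data and the full-rank locus is Zariski open. Since rank is lower semicontinuous and $M_{*,+}(n+1,4)$ is a nonempty Euclidean-open (hence Zariski dense) subset of the space of all $(n+1)\times 4$ matrices, it suffices to exhibit a \emph{single} configuration at which some $k\times k$ minor of $\cM$ is nonzero: non-vanishing at one point forces non-vanishing at a generic point of $M_{*,+}$. This is what frees us to evaluate at a convenient configuration.

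Next I would pin down which minor to use. Recall from the proof of Theorem \ref{overdefinethm} that, restricted to the coordinates $V_P^*$, the vectors $\{C_p\}_{p\in P}$ all lie in $\ker(\cZ_*^\mu|V_P^*)^T$, a space of dimension $|V_P|-3$; this gives the upper bound $\rk \cM|_{V_P^*}\le\min(|P|,|V_P|-3)$ for every $P\subseteq\fP$, which for well defined $W$ equals $k$ when $P=\fP$. To realize this bound I would use that well-definedness gives $|V_P|\ge|P|+3\ge|P|$ for all $P$, so by Hall's theorem there is a system of distinct representatives $r\colon\fP\hookrightarrow[n]$ with $r(p)\in V_p$. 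The candidate minor is then the $k\times k$ submatrix $B$ of $\cM$ on the columns $\{r(p)\}_{p\in\fP}$, whose ``diagonal'' product is $\prod_{p}\detzr{r(p)}$.

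The crux is showing $\det B\not\equiv 0$, that is, that this diagonal term is not annihilated by the off-diagonal permutation terms despite the algebraic (Pl\"ucker) relations among the minors $\detzr{m}$. I would evaluate $B$ at an explicit maximally nondegenerate configuration, namely twistors on the moment curve $Z_s^\mu=(1,t_s,t_s^2,t_s^3)$ with distinct parameters $t_s$, which can be arranged to lie in $M_{*,+}$. There each $C_p$ becomes an explicit Cauchy-type vector with $C_{p,s}\propto\prod_{l\in V_p^*,\,l\ne s}(t_s-t_l)^{-1}$, so $\det B$ is a concrete rational function of the free parameters $\{t_s\}$, and its non-vanishing can be checked by a leading-term analysis: the term indexed by $r$ survives because the defect condition $|V_P|\ge|P|+3$ prevents a competing permutation from matching its monomial. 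The easy instance is when some propagator $p_0$ owns a private vertex $m_0\in V_{p_0}\setminus V_{\fP\setminus\{p_0\}}$, where the column $m_0$ has a single nonzero entry and one may induct on $k$ after deleting $p_0$; the genuine difficulty, and exactly where the strength of $|V_P|\ge|P|+3$ (rather than mere matchability) is used, is the case in which every vertex is shared so that no private vertex exists. I expect this term-survival estimate for the Cauchy-type determinant indexed by the sets $V_p$ to be the main obstacle, and I would isolate it as a separate combinatorial lemma.
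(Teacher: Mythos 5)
Your forward direction is fine and coincides with the paper's: Theorem \ref{overdefinethm} already shows an overdefined diagram never attains full rank at a generic configuration, so full rank forces well-definedness. The reduction of the converse to exhibiting a single witness configuration, via Zariski-openness of the full-rank locus, is also sound, and the Hall's-theorem selection of a transversal $r\colon\fP\hookrightarrow[n]$ with $r(p)\in V_p$ is legitimately available from $|V_P|\ge|P|+3$. The genuine gap is that the step carrying all the content --- showing that the chosen $k\times k$ minor does not vanish identically, i.e.\ your ``term-survival estimate'' for the Cauchy-type determinant on the moment curve --- is not proved but only announced as a lemma you expect to hold. This is exactly the point where the strength of $|V_P|\ge|P|+3$ (as opposed to the mere matchability $|V_P|\ge|P|$ that Hall's theorem consumes) has to enter: an SDR guarantees the diagonal monomial appears in the determinant expansion, but says nothing about cancellation against the other permutation terms, and you concede that the case where every vertex is shared is open. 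As it stands the proposal is a correct reduction plus an unproven core, so it does not yet constitute a proof.

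For comparison, the paper argues the converse directly and at an arbitrary generic configuration: if $\cM(W(\cZ_*))$ fails to have full rank, then some $C_p$ lies in the span of $\{C_q\}_{q\in\fP\setminus p}$; since $C_p$ is supported on $V_p^*$ and spans $\ker(\cZ_*^\mu|V_p^*)^T$, such a dependence forces the containment $\spn\langle\{Z_i^\mu\}_{i\in V_p^*}\rangle\subset\spn\langle\{Z_i^\mu\}_{i\in V_{\fP\setminus p}^*}\rangle$, which is a non-generic condition on $\cZ_*$ in the sense of Definition \ref{genericdfn}. That argument is much shorter, at the cost of leaning on genericity as a catch-all; your route, if the non-vanishing lemma were actually supplied (say by extending the private-vertex induction you sketch to the shared-vertex case), would produce an explicit open dense set of good configurations and be more self-contained. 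Either way, you must prove the non-vanishing statement before the argument closes.
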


\begin{proof}
It is equivalent to show that the matrix $\cM(W(\cZ_*))$ has full rank if and only if $W$ is well defined.

In Theorem \ref{overdefinethm}, we have shown that, for a generic twistor condition, if $W$ is not well defined, then it does not have full rank.

Suppose $W$ is a well defined Wilson loop with $k$ propagators, but that $\cM(W(\cZ_*))$ does not have full rank. This implies that there exists $p \in \fP$ such that \bas C_p \in \spn \langle \{ C_q \}_{q \in \fP \setminus p}\rangle\; .\eas Since $C_p = \ker \cZ_*^\mu|V_p^*$, this implies that \bas \spn\langle \{\cZ^\mu_i\}_{i \in V_p^*} \rangle \subset \spn\langle \{\cZ^\mu_i\}_{i \in V_{\fP \setminus p}^*} \rangle\;. \eas

In other words, the twistor configuration $\cZ_*$ is not generic.
\end{proof}

Henceforth, we only consider well defined Wilson loops. These have the further property that they are closed under taking subdiagrams.

\begin{dfn}
Consider a Wilson loop $W = (\fP , [n])$. For any subset of propagators, $P \subset \fP$, and set of vertices, $S$, such that $V_P \subset S \subset [n]$, the Wilson loop diagram $(P, S)$ is a subdiagram of $W$, where a cyclic ordering on $S$ is induced from the cyclic ordering on $[n]$.
\label{subdiagramdfn}\end{dfn}

For ease of future reference, we define a family of subdiagrams we refer to frequently in the sequel. For $W = (\fP, [n])$, and $P \subset \fP$, \bas W|P := (P, [n])\;.\eas

We show that Theorem \ref{genericindep} extends to all subdiagrams of $W$ of the form $W|P$.

\begin{cor}
Consider a well defined Wilson loop, $W$, and a generic twistor configuration $\cZ_*$. The matrix $\cM(W|P(\cZ_*)) \in G(|P|, n)$. I.e., it has full rank.
\label{subdiagramcor}\end{cor}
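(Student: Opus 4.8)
The plan is to reduce Corollary \ref{subdiagramcor} to Theorem \ref{genericindep} by recognizing that $W|P = (P, [n])$ is itself a Wilson loop diagram on the \emph{same} vertex set $[n]$, merely with the smaller propagator set $P \subset \fP$. The central observation is that well-definedness is inherited: since $W = (\fP, [n])$ is well defined, no subset of $\fP$ violates the bound \eqref{overdefinedcond}, and in particular no subset of $P$ does either. Hence $W|P$ is a well-defined Wilson loop diagram in its own right. Likewise, the genericity of $\cZ_*$ is a condition on the momentum-twistor matrix $\cZ_*^\mu$ alone, independent of which propagators we choose to draw, so the same $\cZ_*$ serves as a generic twistor configuration for $W|P$.

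With these two facts in hand, the argument is essentially a direct citation. First I would verify that $W|P$ satisfies Definition \ref{overexact}: for every subset $Q \subset P$ we have $Q \subset \fP$, so the well-definedness of $W$ gives $|V_Q| \geq |Q| + 3$, which is exactly the condition that $W|P$ be well defined. Second, I would note that $\cM(W|P(\cZ_*))$ is built from precisely the row vectors $\{C_p\}_{p \in P}$, each of which is defined from $\cZ_*^\mu|V_p^*$ exactly as in the construction for $W$; dropping propagators in $\fP \setminus P$ simply deletes rows and changes nothing about how the remaining $C_p$ are computed. Applying Theorem \ref{genericindep} to the well-defined Wilson loop $W|P$ with the generic configuration $\cZ_*$ then yields that $\cM(W|P(\cZ_*)) \in G(|P|, n)$, i.e.\ has full rank $|P|$.

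I do not anticipate a genuine obstacle here, as the statement is a clean specialization of the preceding theorem; the only point requiring a moment's care is the bookkeeping that subsets of $P$ are subsets of $\fP$, so that the well-definedness hypothesis transfers verbatim and no new overdefined subdiagram can appear. I would present the whole corollary in two short sentences establishing these two inheritance claims and then invoke Theorem \ref{genericindep}.
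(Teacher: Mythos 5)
Your proposal is correct and follows the same route as the paper: the paper's own proof simply observes that well-definedness passes to subdiagrams (since any subset of $P$ is a subset of $\fP$) and then implicitly invokes Theorem \ref{genericindep}. Your version is more explicit about the inheritance of both well-definedness and genericity, but the argument is identical.
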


\begin{proof}
If $W$ is a well defined Wilson loop, then there does not exists any subset $P$ such that $|V_P| \leq |P| + 2$, by defintion \ref{overexact}. Therefore, any subdiagram of $W$ is also well defined.
\end{proof}

Finally, we consider exact Wilson loops. We define an equivalence relation on exact Wilson loops.

\begin{dfn}\label{equivdiagramdef}
Two exact Wilson loops, $W = (\fP, n)$ and $W'= (\fP', n)$ are equivalent if
\begin{enumerate}
\item There exist sets of propagators $P_i$ in $W$ and $P_i'$ in $W'$ such that $|P_i| + 3 = |P_i'| + 3 = |V_{P_i}|$.
\item The sets defining these propagators $V_{P_i} =  V_{P_i'}$ are equal.
\item The remaining propagators of both diagrams are the same, $\fP \setminus \cup_i P_i = \fP' \setminus \cup_i P_i'$.
\end{enumerate}
\end{dfn}

It is clear that this is an equivalence relation. This equivalence relation leads to a question of some importance in understanding the nature of the integrals associated to Wilson loop diagrams. This is discussed in further detail after Theorem \ref{noncrosspos}.

\begin{thm}
If two exact Wilson loop diagrams are equivalent, and $\cZ_*$ is generic, the diagrams define the same subspace of $\ker \cZ_*^\mu$
\label{equivalencethm}\end{thm}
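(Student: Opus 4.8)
The plan is to isolate a single structural fact and then assemble the theorem from it. The fact is this: for a well-defined diagram and a generic $\cZ_*$, if $P$ is an exact subset of propagators (so $|V_P| = |P|+3$), then the span of its propagator vectors depends only on the vertex set $V_P$, because it equals the full kernel $\ker(\cZ_*^\mu|V_P^*)^T$, viewed inside $\R^{n+1}$ via the coordinates indexed by $V_P^*$. Granting this, the theorem is immediate. Writing $Q = \fP \setminus \bigcup_i P_i = \fP' \setminus \bigcup_i P_i'$ for the common remaining propagators of condition (3) of Definition \ref{equivdiagramdef}, each propagator $(i,j)\in Q$ determines, for fixed $\cZ_*$, one and the same vector $C_p$ through the Cramer's-rule formula \eqref{cfunctions}, so $\spn\langle\{C_p\}_{p\in Q}\rangle$ is literally identical for the two diagrams. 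For the exact blocks, the structural fact gives $\spn\langle\{C_p\}_{p\in P_i}\rangle = \ker(\cZ_*^\mu|V_{P_i}^*)^T = \ker(\cZ_*^\mu|V_{P_i'}^*)^T = \spn\langle\{C_p\}_{p\in P_i'}\rangle$, where the middle equality is condition (2), namely $V_{P_i}=V_{P_i'}$. Summing these subspaces with the common $\spn\langle\{C_p\}_{p\in Q}\rangle$ produces the same total span for $W$ and $W'$, which is exactly the asserted subspace of $\ker\cZ_*^\mu$.

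The heart of the argument is therefore the structural fact, which I would prove by containment plus a dimension count. First, each $C_p$ with $p\in P$ is supported on $V_p^*\subseteq V_P^*$ and satisfies $\sum_{s\in V_p^*}C_{p,s}Z_s^\mu=0$; reading this relation as a sum over all of $V_P^*$ (the additional coordinates contributing zero) shows $C_p\in\ker(\cZ_*^\mu|V_P^*)^T$, hence $\spn\langle\{C_p\}_{p\in P}\rangle\subseteq\ker(\cZ_*^\mu|V_P^*)^T$. Next, genericity of $\cZ_*$ places $\cZ_*^\mu|V_P^*$ in $M_{*,+}(|V_P|+1,4)$, forcing rank $4$, so the kernel has dimension $|V_P^*|-4=(|P|+3)+1-4=|P|$, where exactness $|V_P|=|P|+3$ enters. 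Finally, applying Corollary \ref{subdiagramcor} to the subdiagram $W|P$ shows the $|P|$ vectors $\{C_p\}_{p\in P}$ are linearly independent, so their span already has dimension $|P|$. Equal finite dimensions together with the established inclusion force equality.

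The step I expect to demand the most care is this dimension count, precisely because it is where the three hypotheses conspire: genericity pins the rank of the restricted momentum matrix at $4$, exactness makes $|V_P^*|-4$ equal to $|P|$, and well-definedness (through Corollary \ref{subdiagramcor}) upgrades the inclusion to an equality of spans — dropping any one of them breaks the argument. I would also verify the bookkeeping in the case where the blocks $P_i$, or their vertex sets $V_{P_i}$, are not mutually disjoint. The per-block identity $\spn\langle\{C_p\}_{p\in P_i}\rangle=\ker(\cZ_*^\mu|V_{P_i}^*)^T$ holds irrespective of disjointness, so the total span $\spn\langle\{C_p\}_{p\in Q}\rangle+\sum_i\ker(\cZ_*^\mu|V_{P_i}^*)^T$ is well defined and manifestly unchanged when each $P_i$ is replaced by $P_i'$; this suffices to conclude equality of the two subspaces without any independence assumption across distinct blocks.
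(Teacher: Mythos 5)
Your proposal is correct and takes essentially the same route as the paper's proof: both show that for an exact block $P$ the vectors $\{C_p\}_{p\in P}$ are linearly independent (by well-definedness) and hence span the full kernel of $(\cZ_*^\mu|V_P^*)^T$, whose dimension $|V_P^*|-4=|P|$ is fixed by exactness and genericity, so the span depends only on $V_P=V_{P'}$; the common remaining propagators then give identical vectors. Your explicit remark on non-disjoint blocks is a careful touch the paper handles only by saying the argument applies to each block separately.
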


\begin{proof}
Suppose there are two sets of propagators, $P$ in $W$ and $P'$ in $W'$ such that \bas |V_P|= |P| + 3 = |P_i'| + 3 = |V_{P'}|\;. \eas If there are more such sets, apply the arguments below separately to each.

Suppose $|V_P|= m$. Write $\cZ_*^\mu|P \in M_{*,+}(m+1, 4)$ to be the momentum twistor matrix of the twistors defining the propagators in $P$. As in Theorem \ref{overdefinethm}, define $U_P = \textrm{span}\langle\{e_j\}_{j \in V_P^*}\rangle$. Then $\{C_p|U_P\}_{p \in P} \in \R^{m+1}$ are the vectors in $\ker \cZ_*^\mu$ defined by the propagators, $P$. Note that $\dim \ker \cZ_*^\mu|P = m-3$. Since $|P| = m-3$ and $W$ is not overdefined, the vectors $C_p|U_P \in \R^{m+1}$ are linearly independent and span $\ker \cZ_*^\mu|P$, as in Theorem \ref{genericindep}.

Similarly, in $W'$, the vectors $C_{p'}|U_{P'} \in \R^{m+1}$ are linearly independent and span $\ker \cZ_*^\mu|P$. Since $V_P = V_{P'}$ the two sets of vectors define the same vector space in $\R^{n+1}$, \bas \spn \langle \{C_{p'}\}_{p' \in P'} \rangle = \spn \langle \{C_{p}\}_{p \in P}\rangle  \;. \eas

This holds for each pair of sets of propagators, $P_i \subset \fP$ and $P_i' \subset \fP'$ that satisfy \eqref{exactcond}.

Since the remaining propagators in $W$ and $W'$ are the same, the subspaces of $\ker \cZ_*^\mu$ defined are the same.
\end{proof}

In the remainder of this paper, we leave the matrix representations of Wilson loops behind, and study only the combinatorics of the diagrams, using the language of matroids. In doing so, we obtain the following results.

Theorem \ref{noncrosspos} shows that any well defined Wilson loop diagram with non-crossing propagators is admissible.

This last point goes against the positivity conjecture for Wilson loops, which states that a Wilson loop is planar if and only if the associated Grassmanians are positive. We give a partial solution to this problem. Theorem \ref{equivalencethm} shows that any admissible exact Wilson loop diagram with crossing propagators is equivalent to a Wilson loop diagram with non-crossing propagators. Therefore, at least in the exact case, an exact Wilson loop diagram with crossing propagators is admissible if and only if it is equivalent to an exact diagram with non-crossing propagators.

\section{The Matroidal language \label{matroiddefssection}}

In this section, we review the concept of matroids. The results and definitions set forth in this section are not new. For a more comprehensive review on of the material, see \cite{flacets05, positroids13, WelshMatroid}. In the most abstract sense, a matroid is a set of independency data on a set. It is a generalization of the concept of a matrix. However, a \emph{realizable} matroid, which is the only sort we examine in this paper, is a set of independency data that can be represented by a collection of vectors in $\R^n$, i.e., by a Grassmannian. The natural objects to study non-negative Grassmannians are positroids, a subclass of realizable matroids that can be realized by non-negative Grassmannians.

In this paper, we use caligraphic script ($\cM$) to denote matrices and plain text ($M$) to denote matroids and Grassmannians.

As a matroid only considers the independency data, a realizable matroid can be realized by a family of Grassmannians of a form that makes it ideal for studying Wilson loop diagrams. In particular, given two different generic twistor configurations $\cZ_*$ and $\cZ_*'$, the matrices $\cM(W(\cZ_*))$ and $\cM(W(\cZ_*'))$ define different Grassmanians. However, as suggested by Theorems \ref{overdefinethm} and \ref{genericindep}, the independence of the colums is determined by the propagator structure. Therefore, $\cM(W(\cZ_*))$ and $\cM(W(\cZ_*'))$ define the same matroid.

In this section, we draw parallels between the matroidal concepts and Grassmannians whenever possible. The advantage of matroids over Grassmannians is that in their component data, sets and collections of subsets, lends itself easily to combinatorial approaches. Therefore, when studying combinatorial or diagramatic objects that represent Grassmannians, such as Wilson loops, one may apply the combinatorics inherent in matroids to the diagramatics of the desired system without ever having to study the actual associated Grassmannians.

\begin{dfn}
A matroid, $M$, is given by a pair of sets $(E, \fB)$, where $E$ is a finite set, called the ground set of $M$. The set $\fB$ is a set of subsets of $E$ with the property that if $B_1, B_2 \in \fB$, there exists $b_1 \in B_1 \setminus B_2$ and $b_2 \in B_2 \setminus B_1$, and $(B_1 \setminus b_1) \cup b_2 \in \fB$.
\end{dfn}

Notice that all elements of $\fB$ are the same size. This is the rank of $M = (E, \fB)$, denoted $\rk{M}$. For a general subset $S \subset E$, and basis $\fB$ defining a matriod, we say that the rank of $S$ is $\rk (S) = \max\{|B \cap S| | B \in \fB\}$. Furthermore, any subset of the ground set, $S \subset E$, is independent if and only if there is a basis set containing it, $S \subset B \in \fB$.

\begin{dfn}Let the matroid $M$ have rank $d$; $\rk(M) = d$. If the ground set $E$ can be realized by a set of vectors $\{a_1 \ldots a_n \}\in k^d$, for some field $k$, with each subset $\{a_i\}_{i \in B}$ for $B \in \fB$ is linearly independent, then the matroid $(E, \fB)$ is realizable over $k$.\label{realizabledef}\end{dfn}

In other words, a realizable matroid can be represented by a set of vectors $a_1, \ldots, a_n \in k^d$, or as matrix $\cA$ over $k$, $\cA \in M_k(d,n)$. However, this is not the unique realization of the matroid. Any set of $n$ vectors in $k^d$ satisfying the independency data laid out by the set $\fB$ realizes $(E, \fB)$. In particular, the set of realizing matrices is invariant unter a $GL_k(d)$ action. Therefore, it is natural to think of a matroid to be realizable by a family of Grassmanians in $G(d, n)$.  All the matroids considered in this paper are realizable.

\begin{eg}\label{coimagegrassmannian}
Consider a realizable matroid $M = (E, \fB)$, with $|E| = n$ and $\rk (E) = d$. Let $\{e_i\}_{i \in E}$ be the set of basis vectors for $k^n$. Then $M$ can be realized as a matrix, $\cM$, mapping from $k^n$ to $k^d$. Write $V_{\cM} = \textrm{ coimage }\cM \subset k^n$. In this manner, a matroid defines a family of $d$ planes in $k^n$, $\{V_{\cM} | \cM \textrm{ realization of } M\} \subset G(d,n)$, corresponding to the Grassmannians that realize it. A basis, $B \in \fB$, is realized by a non-zero minor of $\cM$. The collection $\fB$ is the set of all non-zero minors of $\cM$.
\end{eg}

For the purposes of this paper, we are interested in non-negative Grassmannians. This too can be captured by matroids.

\begin{dfn}
A positroid is a realizable matroid with the following data $([n], \fB)$, where $[n]$ is a cyclically ordered set, that has a realization in the non-negative Grassmanians.
\end{dfn}

In other words, a positroid is a matriod for which there exists a non-negative Grassmannian with the relevant independency data.

There are a few natural operations on matriods. First, we define the dual of a matroid.

\begin{dfn}
For $M = (E, \fB)$, the dual matroid is defined $M^* = (E, \fB^*)$, where $\fB^* = \{E \setminus B |B \in \fB\}$.
\end{dfn}

\begin{eg} Given a matroid $M$, and a matrix, $\cM$ that realizes it, as in Example \ref{coimagegrassmannian}, the dual matroid can be realized by the Grassmanian of the form $V_{\cM}^\perp$. This is, of course, the kernel, $\ker \cM$. Concretely, the dual matroid $M^*$ is defined by the independency data of the matrix $(\ker \cM)^T$.
\label{dualkernel} \end{eg}

There are two dual operations on $M$: restriction and contraction.

\begin{dfn}
Consider the subset $S \subset E$. The restriction of the matroid $M$ to $S$, $M|S$, is defined by the sets $(S, \fB|S)$, where \bas \fB|S = \{B\cap S | \;|B\cap S|\textrm{  maximal among }B\in \fB \}\;. \eas
\end{dfn}

The dual operation, contraction, is defined as follows:

\begin{dfn}
Consider the subset $S \subset E$. The contraction of the matroid $M$ by $S$, $M/S$, is defined by the sets $(E \setminus S, \fB/S)$, where \bas \fB/S = \{B\setminus S | \;|B\cap S| \textrm{ maximal among }B\in \fB\} \;. \eas
\end{dfn}

These two operations are dual in the following sense.

\begin{prop} \label{restrictconstractdual}
Given a matriod $M = (E, \fB)$, and a set $S \subset E$, \bas (M/S)^* = M^*|(E \setminus S) \;. \eas
\end{prop}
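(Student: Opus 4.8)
The plan is to show that both sides are matroids on the same ground set with the same collection of bases, after which the claim follows by matching definitions. Both $(M/S)^*$ and $M^*|(E\setminus S)$ have ground set $E\setminus S$: the contraction $M/S$ is defined on $E\setminus S$ and dualizing preserves the ground set, while restricting $M^* = (E,\fB^*)$ to $E\setminus S$ produces ground set $E\setminus S$ as well. So the entire content is to check that the two basis collections coincide.

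First I would unwind the left-hand side. Set $r = \rk(S) = \max\{|B\cap S| : B\in\fB\}$, so that $\fB/S = \{B\setminus S : B\in\fB,\ |B\cap S| = r\}$. Dualizing inside $E\setminus S$ sends each such basis $B\setminus S$ to its complement $(E\setminus S)\setminus(B\setminus S)$, and a routine complement computation gives $(E\setminus S)\setminus(B\setminus S) = (E\setminus B)\setminus S$. Hence the bases of $(M/S)^*$ are exactly $\{(E\setminus B)\setminus S : B\in\fB,\ |B\cap S| = r\}$.

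Next I would unwind the right-hand side. The bases of $M^*$ are the sets $E\setminus B$ for $B\in\fB$, and restricting to $E\setminus S$ retains those sets $(E\setminus B)\cap(E\setminus S) = (E\setminus B)\setminus S$ of maximal size. The key identity is the cardinality count $|(E\setminus B)\setminus S| = |E\setminus S| - d + |B\cap S|$, where $d = \rk M$ is the common size of all bases; since $|E\setminus S|$ and $d$ are fixed, maximality of $|(E\setminus B)\setminus S|$ is equivalent to maximality of $|B\cap S|$, i.e. to $|B\cap S| = r$. Thus the bases of $M^*|(E\setminus S)$ are again $\{(E\setminus B)\setminus S : B\in\fB,\ |B\cap S| = r\}$, which agrees with the left-hand side.

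The argument is essentially a bookkeeping exercise in set complements, so I do not anticipate a serious obstacle. The one step that deserves genuine care is confirming that the ``maximal intersection'' selection rule survives complementation, namely the identity $|(E\setminus B)\setminus S| = |E\setminus S| - d + |B\cap S|$. It is precisely this identity that makes the maximality condition defining contraction (maximize $|B\cap S|$) line up with the maximality condition defining the restriction of the dual (maximize $|(E\setminus B)\setminus S|$); once it is established, both sides are literally the same set of bases on the same ground set, and the proposition follows.
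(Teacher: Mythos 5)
Your argument is correct. Note, however, that the paper itself offers no proof of this proposition at all: it is stated as one of the standard facts imported from the matroid literature (the section opens by saying its results are not new, pointing to the cited references), so there is no in-paper argument to compare against. Your direct verification from the paper's own definitions is sound: both sides live on the ground set $E\setminus S$; the complement identity $(E\setminus S)\setminus(B\setminus S)=(E\setminus B)\setminus S$ is right; and the cardinality count $|(E\setminus B)\setminus S|=|E\setminus S|-d+|B\cap S|$ (with $d$ the common size of all bases) correctly shows that the ``maximize $|B\cap S|$'' rule defining contraction matches the ``maximize the trace on $E\setminus S$'' rule defining the restriction of the dual, so the two basis collections coincide. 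The only point worth making explicit, since the paper's definition of duality is phrased for a matroid on $E$, is that in forming $(M/S)^*$ you are complementing within the ground set $E\setminus S$ of $M/S$ rather than within $E$; you do use this correctly, but stating it removes any ambiguity.
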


\begin{eg} \label{restrictcontracteg}
Consider a realizable matroid $M= (E, \fB)$ with $|E| = n$. Consider $k^n$, with $E$ the basis set. For any subset $S \subset E$, define the vector space $V_S\subset k^n$ spanned by the basis vectors $\{e_i\}_{i\in S}$. Then, for $\cM$, a realization of $M$, the matrix $\cM|S$, defined by the columns of $\cM$ corresponding to $S$, is a realization of the matroid $M|S$. The coimage of this matrix is given by the projection of $V_{\cM}$ onto $V_S$: $V_{\cM|S} := proj_{V_S} (V_\cM) = proj_{V_S}\textrm{ coimage } \cM|S \subset V_S$.

By Proposition \ref{restrictconstractdual}, the matroid $M/S$ is realized by the subspace of $V_S^\perp$, $(proj_{V_S^\perp} V_{\cM}^\perp)^\perp) \subset V_S^\perp$.
\end{eg}

Matroid data is essentially about the set of independent subsets of a given set. To this end, there are several important concepts to keep in mind.

\begin{dfn} \label{matroiddefs}
Consider a matroid $M = (E, \fB)$.
\begin{enumerate}
\item  The set $F \subset E$ is a flat if it is a maximally dependent subset of $E$. Equivalently, $\forall e \in E \setminus F$, $\rk (F \cup e) = \rk (F) +1$.
\item The set $C \subset E$ is a circuit if it is a minimially dependent set. That is, $C$ is not contained in any $B \in \fB$, but, for any $e \subset C$, there is a $B \in \fB$ such that $C \setminus e \subset \fB$ .
\item A cyclic flat is a flat that can be written as a union of circuits: $F = \cup_i C_i$, for $C_i$ circuits.
\item A matroid is connected if and only if there is not a partition of $E$, $E = \amalg_i E_i$, $E_i \cap E_j = \emptyset$ for $i \neq j$ such that all the elements of each $E_i$ are mutually independent. That is, given any collection $\{S_i \in E_i \}$ of independent sets, the unition $\cup_i S_i$ is also independent.
\end{enumerate}
\end{dfn}

In the following example, we illustrate a few of these definitions.

\begin{eg}
Let $M= (E, \fB)$ be a realizable matroid as before, where $E$ is the set of basis vectors for $k^n$. A set $F$ is a flat if and only if, for every element $e \in E\setminus$, $\rk M|F <  \rk M|(F\cup e)$.

A circuit is simply a set $C$ such that $\rk C = |C| -1$. The nomenclature comes from the point of view that matroids are generalizations of graphs, which is not addressed in this paper.

The direct sum of two matroids, $M_1= (E_1, \fB_1)$  and $M_2= (E_2, \fB_2)$ is written \ba M_1 \oplus M_2 = (E_1 \amalg E_2, \fB_1 \cup \fB_2)\;. \label{directsumeq}\ea

The intersection of two flats is a flat. Indeed, for two flats $F_1$ and $F_2$ of $(E, \fB)$, any element $e \in (F_1 \cap F_2)^c$ is either not an element of $F_1$ or not an element of $F_2$. Therefore, $\rk ((F_1 \cap F_2) \cup e) =  \rk (F_1 \cap F_2) + 1$.

\end{eg}

Next, we note a result about cyclic sets of disconnected matroids which we use later.

\begin{lem}
Let $C$ be a cyclic set of $M$. If the matroid $M|C$ is disconnected, then $C$ can be partitioned into $C = C_1 \amalg C_2$, where $C_1$ and $C_2$ are cyclic flats.
\label{disconnectedcyclic}\end{lem}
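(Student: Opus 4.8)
The plan is to read off the two-block partition directly from the definition of disconnectedness, and then to verify separately that each block is a union of circuits and that each block is closed. Throughout I treat $C$ as a cyclic flat of $M$ (this is forced by the conclusion, since if $C$ were not a flat its pieces could not be flats of $M$; if one only assumes $C$ is a union of circuits, the same argument produces cyclic flats of the restriction $M|C$).

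First I would unpack the hypothesis. By Definition \ref{matroiddefs}(4), the disconnectedness of $M|C$ means there is a nontrivial partition $C = C_1 \amalg C_2$ across which independence splits, i.e. $M|C = (M|C_1) \oplus (M|C_2)$ in the sense of \eqref{directsumeq}. This is the partition claimed in the statement, and it reduces everything to understanding the circuits and closures of a direct sum.

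Next, the cyclic property. I would use two facts: (i) for $D \subseteq C$, the set $D$ is a circuit of $M|C$ if and only if it is a circuit of $M$, since a subset of $C$ is dependent in $M|C$ exactly when it is dependent in $M$, so the minimal dependent subsets of $C$ coincide; and (ii) every circuit of a direct sum lies entirely in one summand, because a minimal dependent set cannot meet both $C_1$ and $C_2$ (if it did, its intersection with whichever block carries the dependence would already be dependent, contradicting minimality). Since $C$ is cyclic it equals the union of the circuits of $M$ it contains; by (i)--(ii) each such circuit sits entirely inside $C_1$ or inside $C_2$. Grouping them gives $C_1 = \bigcup\{D : D \text{ a circuit}, D \subseteq C_1\}$ and likewise for $C_2$, so both blocks are unions of circuits, hence cyclic.

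Finally the flat property, which I expect to be the main obstacle. To show $C_1$ is a flat I would verify $\rk(C_1 \cup e) > \rk(C_1)$ for every $e \notin C_1$. For $e \in C_2$, the direct-sum structure gives $\rk(C_1 \cup e) = \rk(C_1) + \rk_{M|C_2}(\{e\})$, which strictly exceeds $\rk(C_1)$ as long as $e$ is not a loop; and since $C_2$ is a union of circuits, in the realizable setting in force throughout this paper, where every element is a nonzero vector and hence no loops occur, this holds. For $e \in E \setminus C$, I would use that $C$ is itself a flat, so $e \notin \mathrm{cl}(C) \supseteq \mathrm{cl}(C_1)$, giving $\rk(C_1 \cup e) > \rk(C_1)$ again. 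Thus $C_1$, and symmetrically $C_2$, is a flat, so both are cyclic flats. The delicate point is precisely the loop subtlety in the $e \in C_2$ step: closures always absorb loops, so if loops were distributed to both blocks neither block would be closed. Here this is handled by the no-loop (realizability) hypothesis of the paper; in general one would first collect all loops into a single block before splitting, which leaves the direct-sum decomposition intact.
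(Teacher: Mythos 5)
Your proposal is correct and follows essentially the same route as the paper: extract the partition $C = C_1 \amalg C_2$ from the direct-sum decomposition of $M|C$, show that no circuit can meet both blocks (so each block inherits a covering by circuits), and conclude that each block is a flat. Your treatment is in fact somewhat more careful than the paper's on the second half --- the paper simply asserts that mutual independence makes $C_1$ and $C_2$ flats of $M|C$, whereas you verify closure explicitly and correctly flag the loop subtlety and the need for $C$ itself to be a flat of $M$ if the blocks are to be flats of $M$ rather than of the restriction.
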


\begin{proof}
If $M|C$ is disconnected, then $M|C = M|C_1 \oplus M|C_2$, where $C_1$ and $C_2$ are mutually independent. In other words, they are flats of $M|C$. Since $C$ is a cyclic set, one can write $C = \cup S_i$, where each $S_i$ is a circuit. We claim that, for any such covering of $C$ by circuits, there cannot be a circuit $S$ that intersects both $C_1$ and $C_2$ non-trivially.

In fact, if there were such an $S$, consider the element $e_1 \in C_1 \cap S$. Since $S \setminus e_1$ and $S \cap C_1$ are both independent, $\rk ((S \setminus e) \cap C_1) = |S \cap C_1| -1$. But $S$ is a circuit. Therefore, $\rk (S) = \rk S\setminus e_1$. This implies either that $S_1 \cap C_1$ is a dependent subset of $C_1$ (if $\rk (S \cap C_1) = |S \cap C_1| -1$), violating the minimal dependence of circuits, or that $e_1$ is not independent of $S \cap C_2$ (if $\rk (S \cap C_1) = |S \cap C_1|$), contradicting the fact that $M|C$ is disconnected.

Therefore, any covering of $C$ by circuits can be separated into coverings of $C_1$ and $C_2$ by circuits, implying that they are both cyclic fats.
\end{proof}

We have seen that a matroid, $(E, \fB)$, defines a subset of Grassmannians over the field $F$, $G(\rk(E), |E|)$. There is a geometric interpretation of matroids in $k^{|E|}$, with basis $\{e_i \}_{i \in E}$ as usual.

\begin{dfn}
Given a connected matroid $M = (E, \fB)$, let $F^E$ be the vector space indexed by the elements of $E$, as usual. For $B\in \fB$, the indicator vector of the basis is defined \bas e_B = \sum_{s \in B} e_s \;.\eas The matroid polytope is defined \bas \Gamma_M = \textrm{convex hull}(e_B | B \in \fB) \;.\eas
\end{dfn}

This polytope has the property that every codimension $1$ face of a matroid polytope is also a matroid polytope. In particular, the codimension one faces, or facets, of the matrix polytope is defined by a subset $S \subset E$, called a flacet.

\begin{thm} \label{flacetcondition}
Let $M = (E, \fB)$ be a connected matroid. The facets of $\Gamma_M$ are defined by flacets. A subset $F \subset E$ is a flacet if an only if $M|F$ and $M/F$ are both connected.
\end{thm}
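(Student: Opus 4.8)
The plan is to turn the statement into a single dimension count: for each proper nonempty subset $S\subset E$ I will describe the face of $\Gamma_M$ on which the functional $x\mapsto x(S):=\sum_{i\in S}x_i$ is maximised, and then determine when that face has codimension one. The prerequisite is the dimension formula $\dim\Gamma_N=|E_N|-c(N)$, valid for any matroid $N$ on a ground set $E_N$ with $c(N)$ connected components. To see it, note that each basis of $N$ meets a given component in a basis of that component, so the $c(N)$ equations $x(E_i)=\rk(E_i)$ (one per component $E_i$) hold on all of $\Gamma_N$ and force $\dim\Gamma_N\le|E_N|-c(N)$; that there are no further affine relations is exactly where connectivity of each component, through the basis-exchange axiom defining $\fB$, is used. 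In particular $\dim\Gamma_M=|E|-1$ since $M$ is connected.

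Next I identify the faces. The maximum of $x(S)$ over $\Gamma_M$ is $\rk S$, attained precisely at the vertices $e_B$ with $|B\cap S|=\rk S$, and these are exactly the bases $B$ for which $B\cap S$ is a basis of $M|S$ and $B\setminus S$ is a basis of $M/S$. Hence the assignment $B\mapsto(B\cap S,B\setminus S)$ identifies this face $F_S$ with the product $\Gamma_{M|S}\times\Gamma_{M/S}$, sitting in the complementary coordinate subspaces indexed by $S$ and $E\setminus S$, so that
\bas
\dim F_S=\big(|S|-c(M|S)\big)+\big(|E\setminus S|-c(M/S)\big)=|E|-c(M|S)-c(M/S).
\eas
Comparing with $\dim\Gamma_M=|E|-1$, the face $F_S$ is a facet if and only if $\dim F_S=|E|-2$, i.e. $c(M|S)+c(M/S)=2$; since each term is at least one this forces $c(M|S)=c(M/S)=1$, that is, $M|S$ and $M/S$ are both connected. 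This is the claimed characterisation of flacets.

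The remaining point, which I expect to be the main obstacle, is that every facet is of the form $F_S$ for some subset $S$; only then does the dimension count above characterise all facets. The clean way to see this is that $\Gamma_M$ is a generalised permutohedron. Indeed, it is standard that the edges of $\Gamma_M$ are exactly the segments $[e_B,e_{B'}]$ whose endpoints are bases related by a single exchange $B'=(B\setminus b_1)\cup b_2$, so every edge is parallel to $e_{b_1}-e_{b_2}$. A polytope whose edges all point in such directions has its normal fan refined by the braid fan, so each facet is exposed by an indicator functional $x\mapsto x(S)$ for a unique proper nonempty $S$, which is precisely the statement that facets are defined by flacets. Finally, Proposition \ref{restrictconstractdual} gives a consistency check: it yields $(M/F)^*=M^*|(E\setminus F)$ and, after dualising, $M|F=(M^*/(E\setminus F))^*$, so the pair $(M|F,M/F)$ is connected exactly when $(M^*|(E\setminus F),M^*/(E\setminus F))$ is, matching the behaviour of facets under the complementation $x\mapsto e_E-x$ that carries $\Gamma_M$ to $\Gamma_{M^*}$ and $F_S$ to $F_{E\setminus S}$.
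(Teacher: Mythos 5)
The paper does not actually prove this theorem --- its ``proof'' consists of the sentence ``This is proved in \cite{flacets05}'' --- so there is no argument of the paper's to compare yours against step by step; what you have written is, in effect, a correct reconstruction of the standard proof from that literature. Your three ingredients are the right ones: the dimension formula $\dim\Gamma_N=|E_N|-c(N)$, the identification of the face of $\Gamma_M$ maximising $x\mapsto\sum_{i\in S}x_i$ with the product $\Gamma_{M|S}\times\Gamma_{M/S}$ (this is where the ``if and only if'' comes from, by comparing codimensions), and the fact that every facet of $\Gamma_M$ is exposed by a $0/1$ functional because $\Gamma_M$ is a generalised permutohedron. The face identification and the codimension count are complete and correct, including the observation that $c(M|S)+c(M/S)=2$ forces both summands to equal $1$. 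The two places where you appeal to ``standard'' facts rather than proving them are (i) the full-dimensionality of the polytope of a connected matroid inside the hyperplane $x(E)=\rk M$, which is the lower bound in your dimension formula and is exactly where connectivity enters, and (ii) the claim that every edge of $\Gamma_M$ joins two bases differing by a single exchange and hence is parallel to some $e_{b_1}-e_{b_2}$, which is the nontrivial half of the Gelfand--Goresky--MacPherson--Serganova characterisation of matroid polytopes. Both are true, and (ii) is genuinely load-bearing: it is the only thing ruling out a facet not of the form $F_S$, which you correctly identify as the main obstacle. If the argument were to be made fully self-contained these two facts would need proofs, but as written your account is already more complete than the paper's, which delegates the entire statement to a citation, and your closing duality check via Proposition \ref{restrictconstractdual} is a sound consistency test even though it is not needed for the proof.
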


\begin{proof}
This is proved in \cite{flacets05}. The proof is not reproduced here.
\end{proof}

It is worth noting one fact about flacets. This is well known in the matroid literature, but worth going through as an exercise for those not familiar with matroid calculations.

\begin{prop} \label{flacetcyclicflat}
All flacets are cyclic flats.
\end{prop}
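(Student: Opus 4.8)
The plan is to establish the two defining properties of a cyclic flat separately, starting from the flacet characterization of Theorem \ref{flacetcondition}: for a flacet $F$ both $M|F$ and $M/F$ are connected, and I will extract flatness of $F$ from the connectivity of $M/F$ and cyclicity of $F$ from the connectivity of $M|F$. The single input that drives both halves is the standard fact, immediate from the direct-sum form of connectivity in Definition \ref{matroiddefs}(4), that a connected matroid on at least two elements has neither a loop nor a coloop: if $e$ were a loop or a coloop of a matroid $N$ on ground set $E_N$, then $N = N|\{e\} \oplus N|(E_N \setminus \{e\})$ would be a nontrivial separation, since in either case the independent sets split between $\{e\}$ and its complement, contradicting connectedness.

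For flatness, the key observation is that $F$ is a flat if and only if the contraction $M/F$ is loopless. Indeed, for $e \in E \setminus F$ the rank of $\{e\}$ in $M/F$ equals $\rk(F \cup \{e\}) - \rk(F)$, so $e$ is a loop of $M/F$ exactly when $\rk(F \cup \{e\}) = \rk(F)$, i.e. exactly when adjoining $e$ to $F$ fails to raise the rank. By the equivalent flat condition of Definition \ref{matroiddefs}(1), $F$ is a flat precisely when no such $e$ exists, that is, precisely when $M/F$ has no loops. Since $F$ is a flacet, $M/F$ is connected and therefore loopless, so $F$ is a flat.

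For cyclicity, I use that $M|F$ is connected. Being a restriction of the loopless matroid $M$ (connected, hence loopless for $|E|\geq 2$), the matroid $M|F$ is loopless, and being connected it has no coloops; in a loopless matroid an element lying in no circuit would be a coloop, so every element of $F$ lies in some circuit of $M|F$. The circuits of $M|F$ are exactly the circuits of $M$ contained in $F$, since dependence of subsets of $F$ is unchanged by restriction. Taking the union of these circuits over all elements of $F$ recovers $F$, so $F$ is a union of circuits; combined with the previous paragraph this shows $F$ is a cyclic flat.

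The main obstacle is the careful treatment of the degenerate cases, where conventions matter: when $|F| = 1$ or $|E \setminus F| = 1$ the relevant restriction or contraction is a single-element matroid, and one must adopt the convention that a lone loop or lone coloop counts as disconnected for the separation argument to exclude spurious candidates such as $F = E \setminus \{e\}$ with $e \in \mathrm{cl}(F)$. With that convention fixed, the only nontrivial input is the bridge between the direct-sum definition of connectivity used here and the circuit-theoretic facts invoked above ("a loop or a coloop is a separator", "every element of a connected matroid lies in a circuit"), both of which are standard. Alternatively, flatness can be obtained dually from the cyclicity of $E \setminus F$ in $M^*$: by Proposition \ref{restrictconstractdual} one has $(M/F)^* = M^*|(E \setminus F)$, so connectivity of $M/F$ transfers to $M^*|(E\setminus F)$ and the same coloop-free argument applies in the dual.
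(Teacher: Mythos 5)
Your proof is correct, and the cyclicity half takes a genuinely different route from the paper. For flatness the two arguments are essentially the same separator argument in different clothing: you observe that an element $e\in\mathrm{cl}(F)\setminus F$ is a loop of $M/F$ and hence a separator, while the paper takes $G$ to be the smallest flat containing $F$ and splits off $G\setminus F$ as a rank-$0$ direct summand of $M/F$ --- a set of loops is exactly a rank-$0$ summand, so these coincide. For cyclicity, however, the paper dualizes: it uses Proposition \ref{restrictconstractdual} to convert connectivity of $M|_F$ into connectivity of $M^*/F^c$, applies the flatness step in $M^*$ to conclude $F^c$ is a flat of $M^*$, and then proves a complementation lemma (circuits of $M$ complement to flats of $M^*$, hence cyclic sets complement to intersections of flats) to pull the conclusion back. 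You instead argue directly inside $M|F$: connected implies no coloops, a non-coloop of a loopless matroid lies in a circuit, and circuits of the restriction are circuits of $M$ contained in $F$. Your route is shorter and more elementary, avoiding the duality detour entirely; the paper's route has the side benefit of establishing the flat/cyclic complementation fact, which it reuses implicitly in Corollary \ref{partialcyclicequiv}, though your coloop argument would serve there just as well. Your explicit attention to the degenerate cases ($|F|=1$ or $|E\setminus F|=1$, where the single-element convention for connectivity matters) is a point the paper passes over silently and is worth retaining.
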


\begin{proof}
Let $M$ be a matroid with ground set $E$. Let $F$ be a flacet of $M$. For $S \subset E$ as subset of the groundset, let $S^c = E\setminus S$ indicate the set complement.

If $F$ is not a flat, consider $G$, the smallest flat containing $F$. We claim that $M/F$ is not connected. Since one is contracting by $F$, no element of $G\setminus F$ lies within a basis of $M/F$. If it did, that would violate the maximality of $B \cap F$. Therefore, we may write \bas M/F = M/F|_{(G\setminus F)c} \bigoplus M/F|_{(G\setminus F)}\;, \eas where $M/F|_{(G\setminus F}$ is a matroid of rank $0$, showing $M/F$ to be disconnected.

The condition that $M|_F$ is connected is equivalent, by Proposition \ref{restrictconstractdual}, to requiring $(M^*/F^c)^*$, and thus $M^*/F^c$ to be connected. This implies that $F^c$ is a flat in $M^*$. We claim that if $F$ is a flat in $M$ if and only if $F^c$ is cyclic in $M^*$. Therefore, if both $M/F$ and $M|_F$ are connected, then $F$ is a cyclic flat.

To see the claim, check that if $F$ is a circuit in $M$, then $F^c$ is a flat in $M^*$. As above, write $M = (E, \fB)$ and $M^* = (E, \fB^*)$. The set $F$ is a circuit if and only if, for any $e \in F$, there is a $B \in \fB$ such that $F \setminus e \in B$. That is, \bas (F\setminus e)^c = F^c \cup e \subset B^c \in \fB^* \;.\eas This shows that for all $e \not \in F^c$, $e$ is linearly independent of $F^c$, making $F^c$ a flat of $M^*$. Let $C = \cup F_i$ be a cyclic set in $M$, for $F_i$ circuits. Then $C^c = \cap_i F_i^c$ is an intersection of flats. Since the intersection of flats is a flat, $C^c$ is a flat. By exchanging the roles of $M$ and $M^*$, one sees that $F$ is a flat in $M$, if and only if $F^c$ is cyclic in $M^*$, as desired.
\end{proof}

In fact, we may say more than this.

\begin{cor}
Let $M$ be a matroid with ground set $E$. If a subset $S \subset E$ cannot be partitioned into cyclic flats then the restricted matroid $M|S$ is connected if and only if $S$ is cyclic.
\label{partialcyclicequiv}\end{cor}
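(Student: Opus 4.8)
The plan is to establish the two implications of Corollary \ref{partialcyclicequiv} separately, since they rely on different ingredients and the hypothesis that $S$ admits no partition into cyclic flats is needed for only one of them.

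First I would treat the implication that if $S$ is cyclic then $M|S$ is connected, arguing by contradiction straight from Lemma \ref{disconnectedcyclic}. Assume $S$ is a cyclic set and that $M|S$ is disconnected. Since $S$ is cyclic, Lemma \ref{disconnectedcyclic} applies verbatim and produces a partition $S = C_1 \amalg C_2$ into two cyclic flats. This contradicts the standing hypothesis that $S$ cannot be partitioned into cyclic flats, so $M|S$ must be connected. This is the only place the partition hypothesis is used.

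For the converse, that if $M|S$ is connected then $S$ is cyclic, I would prove the contrapositive using coloops. If $S$ is not cyclic, then it is not the union of the circuits it contains, so writing $S'$ for the union of all circuits of $M|S$ we have $S' \subsetneq S$; choose $e \in S \setminus S'$. By construction $e$ lies in no circuit of $M|S$, hence is a coloop of $M|S$. A coloop is independent of every subset and lies in every basis, so $M|S$ splits as the direct sum $M|(S \setminus e) \oplus M|\{e\}$ in the sense of \eqref{directsumeq}; provided $|S| \geq 2$, this exhibits a nontrivial partition of $S$ into mutually independent parts, so $M|S$ is disconnected by the connectivity criterion of Definition \ref{matroiddefs}. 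Contraposing yields the claim, and this direction does not invoke the partition hypothesis at all.

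The argument is short because the real work resides in Lemma \ref{disconnectedcyclic}. The step I expect to need the most care is the handling of degenerate cases, together with fixing the precise reading of ``partition into cyclic flats''. The delicate configuration is a lone coloop with $|S| = 1$: such an $S$ is vacuously connected yet is not a union of circuits, and under the natural convention that a singleton coloop is not a cyclic flat it still satisfies the hypothesis, so this boundary case must be either excluded or absorbed by convention. Apart from that, the only facts required are Lemma \ref{disconnectedcyclic}, the identification of an element lying in no circuit as a coloop, and the resulting direct-sum splitting.
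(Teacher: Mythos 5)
Your proof is correct, and one of its two halves diverges from the paper's. For the direction ``$S$ cyclic $\Rightarrow$ $M|S$ connected'' you do exactly what the paper does: invoke Lemma \ref{disconnectedcyclic} and use the no-partition hypothesis to rule out the splitting into two cyclic flats. For the converse the paper instead points back to the duality argument inside Proposition \ref{flacetcyclicflat}: $M|F$ connected is equivalent (via Proposition \ref{restrictconstractdual}) to $M^*/F^c$ connected, which forces $F^c$ to be a flat of $M^*$, which by the flat/cyclic duality established there means $F$ is cyclic in $M$. Your coloop argument --- if $S$ is not the union of the circuits it contains, pick $e \in S$ lying in no circuit of $M|S$, observe it is a coloop, and split off $M|\{e\}$ as a direct summand --- reaches the same conclusion without passing through the dual matroid, and is arguably more elementary and self-contained. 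It also surfaces the genuine boundary case that the paper glosses over: for $|S|=1$ with $e$ a coloop, $M|S$ is connected under Definition \ref{matroiddefs} yet $S$ is not cyclic, so the statement as written needs a convention excluding singletons (the paper's own proof, which asserts ``$M|S$ connected $\Rightarrow$ $S$ cyclic'' unconditionally, has the same latent issue). Your explicit flagging of this is a point in favour of your write-up rather than a gap in it.
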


\begin{proof}
Arguments in Theorem \ref{flacetcyclicflat} show that if $M|S$ is connected, then $S$ is cyclic. Lemma \ref{disconnectedcyclic} shows that if $S$ cannot be partitioned into cyclic flats then the restricted matroid $M|S$ is either connected or $S$ is not cyclic. These two combine to give the desired result.
\end{proof}

We use this notation to define the matrix polytope by a set of inequalities and constraints.

Finally, one can use flacets to determined whether a matroid is a positroid or not. First, we recall that a positroid is defined only on a cyclically ordered set $[n]$.

\begin{dfn}
A cyclic interval is a set
\bas [k, l] = \begin{cases} \{k, k+1 \ldots l-1, l \} & k < l \\ \{k, k+1 \ldots n, 1 \ldots l-1, l \} & l < k \end{cases} . \eas
\end{dfn}

This gives the following theorem about which matroids are are positroids.

\begin{thm} \label{positroidcondition}
A connected matrioid $M = (E, \fB)$ is a positroid if and only if every flacet is a cyclic interval.
\end{thm}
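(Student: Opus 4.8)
The plan is to argue entirely at the level of the matroid polytope $\Gamma_M$, translating both ``$M$ is a positroid'' and ``every flacet is a cyclic interval'' into statements about which inequalities cut out $\Gamma_M$. Recall that for any matroid the polytope is the solution set of the rank inequalities,
\[
\Gamma_M = \Big\{ x : \sum_{i \in E} x_i = \rk(M),\ x_i \ge 0,\ \sum_{i \in S} x_i \le \rk(S)\ \forall S \subset E \Big\},
\]
and that by Theorem \ref{flacetcondition} the inequalities which actually define facets are exactly those indexed by flacets, each flacet $F$ contributing the supporting hyperplane $\sum_{i\in F} x_i = \rk(F)$. The key external input I would invoke is the polytopal characterization of positroids (see \cite{positroids13}): a connected matroid $M$ on $[n]$ is a positroid if and only if $\Gamma_M$ is already cut out by the cyclic-interval inequalities alone, i.e.
\[
\Gamma_M = \Big\{ x : \sum_{i \in [n]} x_i = \rk(M),\ x_i \ge 0,\ \sum_{i \in [a,b]} x_i \le \rk([a,b])\ \text{for every cyclic interval } [a,b] \Big\}.
\]
Since cyclic intervals are a special kind of subset, the right-hand polytope, call it $P_{\mathrm{cyc}}$, always contains $\Gamma_M$; the content of the characterization is the reverse containment.

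For the ``if'' direction I would assume every flacet is a cyclic interval. Then every facet-defining inequality of $\Gamma_M$ appears among the inequalities describing $P_{\mathrm{cyc}}$ (the nonnegativity facets $x_i \ge 0$ correspond to the flacets $E \setminus \{i\}$, which are themselves cyclic intervals). As a polytope is the intersection of its facet half-spaces inside its affine hull, any point of $P_{\mathrm{cyc}}$ satisfies all the facet constraints of $\Gamma_M$ and hence lies in $\Gamma_M$; combined with $\Gamma_M \subseteq P_{\mathrm{cyc}}$ this gives $\Gamma_M = P_{\mathrm{cyc}}$, so $M$ is a positroid.

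For the ``only if'' direction I would assume $M$ is a positroid, so $\Gamma_M = P_{\mathrm{cyc}}$, and let $F$ be any flacet. By Theorem \ref{flacetcondition} the flacet $F$ supports a facet of $\Gamma_M$ through $\sum_{i \in F} x_i = \rk(F)$. Because $\Gamma_M = P_{\mathrm{cyc}}$, this same facet is supported by one of the cyclic-interval inequalities, say $\sum_{i \in [a,b]} x_i = \rk([a,b])$. A facet determines its defining functional uniquely up to a positive scalar and up to adding a multiple of the ambient equation $\sum_i x_i = \rk(M)$, so $\mathbf 1_F = \alpha\, \mathbf 1_{[a,b]} + \beta\, \mathbf 1_E$ for some $\alpha > 0$. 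Comparing entries, which lie in $\{0,1\}$ on the left and in $\{\beta, \alpha+\beta\}$ with $\alpha + \beta > \beta$ on the right, forces $\beta = 0$ and $\alpha = 1$, whence $F = [a,b]$ is a cyclic interval.

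The main obstacle is the polytopal characterization of positroids that bridges the geometric positivity condition and the combinatorial facet condition; everything else is bookkeeping with supporting hyperplanes. I would either cite this characterization or, for a self-contained treatment, reprove it by exhibiting from a cyclic-interval-described polytope a non-negative matrix realizing the cyclic rank data, and conversely reading the cyclic rank inequalities off a non-negative realization. A secondary point to handle carefully is the functional-matching step: one must confirm that two distinct $0/1$ supporting functionals cannot define the same facet unless their supports coincide, which is precisely the entrywise comparison above, and one should dispatch the trivial flacets $F = E$ and the complementary singletons separately, noting these are cyclic intervals automatically.
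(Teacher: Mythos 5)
Your argument is correct, but it is not the route the paper takes: the paper's entire proof of this theorem is a citation of Proposition 5.6 of \cite{positroids13}, which (in slightly different nomenclature) is the statement itself. You instead reduce to a different, though closely related, external input --- the description of a connected positroid's polytope by the cyclic-interval inequalities alone --- and then derive the flacet formulation by matching facet-defining functionals, using Theorem \ref{flacetcondition} to know that the facets of $\Gamma_M$ are exactly the flacet inequalities together with nonnegativity. What this buys is an explicit bridge between the two formulations that the paper leaves implicit; the cost is that you still rest on an unproved polytopal characterization of essentially the same depth as the theorem being proved. Two small points to tighten: the aside that the nonnegativity facets ``correspond to the flacets $E \setminus \{i\}$'' is unnecessary and not quite right (for a connected matroid $E \setminus \{i\}$ need not satisfy the flacet condition of Theorem \ref{flacetcondition}; all you actually need is that $x_i \ge 0$ occurs in both inequality descriptions); and in the ``only if'' direction you should note that a flacet's facet could a priori be supported by a nonnegativity inequality rather than a cyclic-interval one, in which case your entrywise comparison forces $F = E \setminus \{i\}$, which is again a cyclic interval, so the conclusion still stands.
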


\begin{proof}
This is proved, though using slightly different nomenclature, in Proposition 5.6 of \cite{positroids13}.
\end{proof}

It is sufficient to consider only connected matroids in order to identify positroids.

\begin{thm}
For $M = ([n], \fB)$ a disconnected matroid, write $M = M_1 \oplus \ldots M_k$, where $M_i = (E_i, \fB_i)$. The matroid $M$ is a positroid if and only if the sets $\{E_1 \ldots E_k\}$ form a non-crossing partion of $[n]$, and each $M_i$ is a positroid. \label{disconnpositroid}
\end{thm}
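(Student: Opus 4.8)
The plan is to prove both implications by reducing the general statement to the case of two components and then inducting, using that positroids are invariant under cyclic rotation of $[n]$ and are preserved under restriction to a cyclic interval. The geometric engine will be the realization definition of a positroid together with the criterion of Theorem~\ref{positroidcondition} applied to the connected pieces. The first combinatorial ingredient I would establish is that every non-crossing partition of a cyclically ordered set with at least two blocks has a block that is a cyclic interval: taking two elements that are consecutive within some block $B$ and looking at the arc strictly between them, the non-crossing hypothesis forces that arc to be a union of blocks, so an induction on $n$ produces an interval block inside it. Note also that for a two-block partition $\{E_1,E_2\}$ of $[n]$, non-crossing is equivalent to $E_1$ (hence $E_2$) being a cyclic interval, so the $k=2$ case of the theorem reads: $M_1\oplus M_2$ is a positroid if and only if $E_1$ is a cyclic interval and $M_1,M_2$ are positroids.

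For the sufficiency direction I would argue by induction on $k$. Given a non-crossing partition with interval block $E_{i_0}$, I rotate $[n]$ so that $E_{i_0}=\{1,\dots,m\}$ carries no wrap-around; the remaining parts form a non-crossing partition of the complementary interval $[n]\setminus E_{i_0}$, so by induction $\bigoplus_{j\neq i_0}M_j$ is a positroid on that interval. It then suffices to treat $k=2$ with $E_1=\{1,\dots,m\}$ a genuine (non-wrapping) interval. Placing non-negative realizations of $M_1$ and $M_2$ block-diagonally in the columns $\{1,\dots,m\}$ and $\{m+1,\dots,n\}$, every maximal minor of the direct sum is either zero or factors as a product of a maximal minor of the $M_1$-block and one of the $M_2$-block, with no extra sign since the two index blocks are already in increasing order. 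Hence all maximal minors are non-negative and $M_1\oplus M_2$ is a positroid.

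For the necessity direction, assume $M=\bigoplus_i M_i$ is a positroid with non-negative realization $\cA$. To see the partition is non-crossing I would argue by contradiction: a crossing gives indices $a<b<c<d$ (cyclically) with $a,c\in E_i$ and $b,d\in E_j$ for $i\neq j$. Because $E_i$ and $E_j$ lie in complementary summands of the ambient space, after padding by a common basis $R$ of the remaining part one forms the four bases $\{a,b\}\cup R$, $\{c,d\}\cup R$, $\{a,d\}\cup R$, $\{b,c\}\cup R$; the direct-sum structure makes each corresponding maximal minor factor through the $E_i$- and $E_j$-block determinants, and the three-term Plücker relation then forces $\Delta_{\{a,b\}\cup R}\,\Delta_{\{c,d\}\cup R}=-\,\Delta_{\{a,d\}\cup R}\,\Delta_{\{b,c\}\cup R}$, so the two products cannot both be non-negative. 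This contradicts positivity of $\cA$, exactly as in the rank-two model $U_{1,2}\oplus U_{1,2}$ on the crossing partition $\{\{1,3\},\{2,4\}\}$. Once non-crossing is known, each $M_i$ is a positroid: peeling interval blocks as in the sufficiency argument and using that positroids are closed under restriction to a cyclic interval exhibits each component as a restriction of $M$ to a cyclic interval, hence a positroid; since $M_i=M|E_i$ is connected, this is consistent with Theorem~\ref{positroidcondition} applied to the pieces.

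The step I expect to be the main obstacle is the sign computation proving necessity of non-crossing: it must be realization-independent, i.e. the forced sign conflict has to come only from the direct-sum (independence) structure of the components and from the cyclic order $a<b<c<d$, not from any particular choice of $\cA$. Setting up the four exchanged bases, reducing a general crossing to the rank-two picture, and identifying the correct three-term Plücker relation (together with verifying the factorization of each minor through the block determinants) is the delicate part; by comparison, the lemma on the existence of an interval block and the closure of positroids under cyclic-interval restriction are routine and can be cited or proved by the short inductions indicated above.
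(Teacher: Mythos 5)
The paper does not actually prove this theorem: it defers entirely to Section~7 of \cite{positroids13}, so any complete argument you give is necessarily a different route. Your outline is the standard direct proof and is essentially sound. The sufficiency direction (find an interval block of the non-crossing partition, rotate, realize block-diagonally, observe that every maximal minor is either zero or an unsigned product of a minor from each block, then induct) is correct, granting the standard facts that positroids are closed under cyclic shift and under restriction to a cyclic interval. For the necessity of non-crossing, the three-term Pl\"ucker relation is the right mechanism, and in fact your argument is slightly over-engineered: since $R\cup\{a,c\}$ meets $E_i$ in $\rk(M_i)+1$ elements it is dependent, so $\Delta_{\{a,c\}\cup R}=\Delta_{\{b,d\}\cup R}=0$, and the relation $0=\Delta_{\{a,b\}\cup R}\Delta_{\{c,d\}\cup R}+\Delta_{\{a,d\}\cup R}\Delta_{\{b,c\}\cup R}$ already yields the contradiction once the four exchanged sets are bases; you do not need the minors to factor through block determinants. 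Deducing that each $M_i$ is a positroid by peeling interval blocks and restricting is also fine.

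The one genuine gap is exactly the step you flag, and it is worth naming precisely because the theorem as literally stated is false without the implicit hypothesis that the $M_i$ are the \emph{connected components} of $M$: the rank-$0$ matroid on $[4]$ is a positroid and equals $M_1\oplus M_2$ with $E_1=\{1,3\}$, $E_2=\{2,4\}$, a crossing partition. Connectivity of the summands is what makes your four exchanged sets bases. Concretely you need $R=R_i\cup R_j\cup R_0$ with $R_0$ a basis of $\bigoplus_{l\neq i,j}M_l$ and $R_i\subseteq E_i$ of size $\rk(M_i)-1$ such that $R_i\cup\{a\}$ and $R_i\cup\{c\}$ are \emph{both} bases of $M_i$ (likewise $R_j$ for $b,d$); your ``common basis $R$ of the remaining part'' only literally covers $\rk M_i=\rk M_j=1$. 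Such an $R_i$ exists precisely because $M_i$ is connected: a crossing forces $|E_i|\geq 2$, so the connected $M_i$ has no loops and $a,c$ lie on a common circuit $C$; extend $C\setminus\{a\}$ to a basis $B\ni c$ and set $R_i=B\setminus\{c\}$. Then $R_i\cup\{c\}=B$ is a basis, and if $a$ lay in the closure of $R_i$ then, since $C\setminus\{c\}\subseteq R_i\cup\{a\}$, we would get $c\in\mathrm{cl}(C\setminus\{c\})\subseteq\mathrm{cl}(R_i)$, contradicting the independence of $B$. With that lemma supplied, and the rotation/restriction closure facts either cited or proved, your proof goes through.
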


To understand the statement of this theorem, we define non-crossing partitions.

\begin{dfn}
A non-crossing partition of $[n]$ is a set of sets $\{E_1 \ldots E_k\}$ that partition $[n]$, that is, $E_i \cap E_j = \emptyset$ for $i \neq j$, and $\amalg_{i = 1}^k E_i = [n]$ with the further condition that, if the set $[n]$ labels points on a circle, and $P_i$ corresponds to the inscribed polygons defined by $E_i$, then none of the polygons intersect, (that is, none of their lines cross).
\label{noncrossingpartitiondfn}\end{dfn}

\begin{proof}
The proof of this theorem is given in section $7$ of \cite{positroids13} and is not repeated here.
\end{proof}

Having defined positroids, we use these definitions to discern, graphically, which Wilson loop diagrams correspond to positive Grassmannians. This is the subject of the section \ref{Wilsonloopstopositroids}.

\section{Wilson loop diagrams and positroids \label{Wilsonloopstopositroids}}

In the previous section, we introduced the idea of positroids, and criteria for determining when a matroid is a positroid. In this section, we return to the definitions of Wilson loop diagrams and their related matroids, as realized by the matrices in Section \ref{Feynmanrules}. We address the question of how to determine if a Wilon loop diagram is admissible. This requires two conditions. One that it is well defined, the second is, that it corresponds to a positroid. In fact, we show that a well defined Wilson loop diagram with non crossing propagators is a positriod. We also show that exact Wilson loop diagrams with crossing propagators only in the purely exact subdiagrams can be "untangled". Therefore, these too define positroids. This is done in several steps. The first section, \ref{Wilsonspace}, defines a map from Wilson loops to matroids realized over $\R$.

\subsection{The space of Wilson Loop Diagrams\label{Wilsonspace}}

In this section, we define the set of Wilson loops. Section \ref{Wilsondefsect} defines the set of well defined Wilson loop diagrams consistent with those developed in Section \ref{physics} and translates the physically motivated diagrams into combinatorially defined objects. Section \ref{thematroid} defines the matroid $M(W)$, defined by the matrix $\cM(W(\cZ_*))$, from a Wilson loop diagram.

\subsubsection{Combinatorial Wilson Loop diagrams\label{Wilsondefsect}}
A well defined Wilson loop diagram is comprised of an $n$ sided boundary polygon, and $k$ propagators with endpoints on the edges of the polygon. These $k$ edges are all drawn in the interior of the polygon. As the edges of the polygon and the propagators are intrinsically different edges, we denote the propagators by wavy lines.

\begin{dfn}
A well defined Wilson loop diagram consists of the following data.
\begin{enumerate} \item Label the vertices of the polygon by the cyclic set $[n]$, by making a choice of which vertex to label $1$. The ordering of the vertices is always counterclockwise. In the diagrams in the sequel, the first vertex is labeled by a bullet, and the other labels inferred.
\item Label the edges of the polygon $\{e_1 \ldots e_n\}$, where $e_i$ corresponds to the edge connecting the vertex $i$ and $i+i$. It is also useful to think of an edge $e_i$ as the pair $(i, i+1)$.
\item Let $\fP$ be the set of propagators of a Wilson loop diagram. A propagator $p \in \fP$, is the set of ordered pairs, $\{(i_p, j_p)\}$ indicating the edges defining the endpoints of $p$.
\item Label the propagators $p_1 \ldots p_k$ in a counterclockwise order in the diagram, that is, let $i_{p_j,1} \leq i_{p_{j+1},1}$.
\item Let $V_p = i_p, i_p+1, j_p, j_p+1$ be the dependency set of the propagator $p$. Then, for a set of propagators $P$, the dependency set is $V_P = \cup_{p \in P}V_p$.
\item There is no set of propagators $P \in \fP$ such that $|P| +2 \geq |V_P|$.
\end{enumerate} \label{Wilsondef}
\end{dfn}
Given this notation, we write a Wilson loop diagram $W=(\fP, n)$. The last condition restricts all Wilson loop diagrams studied in this section to be well defined, under definition \ref{overexact}.

\begin{eg} \label{disconnected3eg}
Below is an example of a Wilson loop diagram. Consider the Wilson loop diagram $W = (\{(2, 4),(4, 7), (5,7)\},[8])$.
\bas W =
{\begin{xy}
(-5,10) *{\bullet},
(-6, 10); (6, 10) **{\dir{-}},
(-4, 11); (-11, 4) **{\dir{-}},
(-10, 6); (-10, -6) **{\dir{-}},
(-11, -4); (-4, -11) **{\dir{-}},
(-6, -10); (6, -10) **{\dir{-}},
(4, 11); (11, 4) **{\dir{-}},
(10, 6); (10, -6) **{\dir{-}},
(11, -4); (4, -11) **{\dir{-}},
(-10, 0); (0, -10) **@{~},
(2, -10); (7, 8) **@{~},
(8, -7); (8, 7) **@{~},
\end{xy}} \;. \eas

\end{eg}

For the rest of this paper, there are two important classes of Wilson loop diagrams to consider: those with crossing propagators, and those with non-crossing propagators.

\begin{dfn} \label{crossingpropagatordef}
Let $p$ and $p'$ be two propagators of $W(\fP, n)$. We say that $p$ and $p'$ are crossing propagators if $i_{p,1} < i_{{p'},1}$ and $i_{p,2} < i_{{p'},2}$. A Wilson loop is a planar diagram if it has no crossing propagators. Otherwise, it is a non-planar diagram.
\end{dfn}

This nomenclature comes from the fact that any graphical depiction of a Wilson loop with crossing propagators must have intersecting propagator edges (under the usual condition that propagator edges are all drawn on the interior of the polygon). Note that we place no restriction on the number of propagators incident upon an edge of the boundary polygon. If two propagators share a boundary edge, by the last condition of definition \ref{Wilsondef}, the other endpoints of those two propagators must be different. Furthermore, by definition \ref{crossingpropagatordef}, these are non-crossing propagators.

\subsubsection{Wilson loops as matroids \label{thematroid}}

We now define the matroid associated to a Wilson loop.

Write $M(W)$ to be the matroid assoicated to $W = (\fP, [n])$. This is the matroid realized by the matrix $\cM(W(\cZ_*))$ in Section \ref{physics}.

We use the structure of $\cM(W(\cZ_*))$ as outlined in Theorem \ref{genericindep} and Corollary \ref{subdiagramcor} to define the bases of $\cM(W(\cZ_*))$.

We begin with a definition.

\begin{dfn}
Given a set of vertices $V \in [n]$, we say that $\Prop(V) \in \fP$ is the set of propagators that depend on $V$.
\end{dfn}

Under this notation, $V \subset V_{\Prop(V)}$.

\begin{lem}
Let $V$ be a set of columns of $\cM(W(\cZ_*))$. The set $V$ is linearly independent if and only if there does not exist a subset $U \subsetneq V$ such that $|U| > |\Prop (U)|$.
\end{lem}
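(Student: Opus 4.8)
The plan is to identify the matroid $M(W)$ realized by $\cM(W(\cZ_*))$ with the \emph{transversal matroid} of the bipartite incidence graph between the vertices $[n]$ (the columns) and the propagators $\fP$ (the rows), in which a vertex $v$ is joined to a propagator $p$ exactly when $v \in V_p$. Under Definition \ref{mdef} the $(p,v)$ entry of $\cM(W(\cZ_*))$ is $c_{p,v}$ if $v \in V_p$ and $0$ otherwise, and for a generic twistor configuration each on-support entry $c_{p,v} = \detzr{v}$ is a $4\times 4$ maximal minor of $\cZ_*^\mu$ involving the $*$-row, hence nonzero by Definition \ref{genericdfn} and condition (1) of Definition \ref{twistconfig}. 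Thus column $v$ is supported precisely on the rows $\Prop(\{v\})$, and I would prove the equivalent Hall-type statement: $V$ is linearly independent if and only if $|\Prop(U)| \ge |U|$ for every nonempty $U \subseteq V$ (the extremal case $U = V$, i.e. $|V| > |\Prop(V)|$, is subsumed here and disposed of by the easy direction below).

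For the easy direction I argue by contraposition. If some $U \subseteq V$ satisfies $|U| > |\Prop(U)|$, then every column indexed by $U$ is supported on the rows $\Prop(U)$, so these $|U|$ columns lie in the coordinate subspace $\spn\langle \{e_p\}_{p \in \Prop(U)}\rangle$ of dimension $|\Prop(U)| < |U|$; they are therefore linearly dependent, and hence so is the larger set $V$. This step is pure linear algebra and uses no genericity.

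The substance is the converse. Assuming $|\Prop(U)| \ge |U|$ for all $U \subseteq V$, Hall's marriage theorem produces a system of distinct representatives, i.e. an injection $\sigma : V \to \fP$ with $\sigma(v) \in \Prop(\{v\})$ for each $v$. Consider the square submatrix of $\cM(W(\cZ_*))$ obtained by keeping the columns $V$ and the rows $\sigma(V)$; it suffices to show this submatrix is nonsingular for a generic twistor configuration, for then $\cM(W(\cZ_*))$ has column rank $|V|$ and $V$ is independent. Its determinant is a polynomial $D$ in the coordinates of $\cZ_*^\mu$ whose monomial expansion contains the matching term $\pm \prod_{v \in V} c_{\sigma(v), v}$. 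The key point, and the main obstacle, is that the entries $c_{p,v}$ are not independent indeterminates but $4\times 4$ determinants of a common underlying configuration, so I must rule out an accidental cancellation forcing $D \equiv 0$. I would reduce this to showing that $D$ is not the zero polynomial: the generic locus $M_{*,+}(n+1,4)$ is a dense open subset of the space of all matrices, so a polynomial vanishing identically on it vanishes everywhere, while conversely a nonzero $D$ is nonzero on a dense open subset of generic configurations. To see $D \not\equiv 0$ I would exhibit a single configuration on which the matched submatrix is nonsingular, for instance by ordering $V$ so that the submatrix is brought to (block-)triangular form with nonzero diagonal entries $c_{\sigma(v),v}$, exploiting that the distinct propagators $\sigma(v)$ involve distinct quadruples $V_{\sigma(v)}$ and can therefore be steered by independent twistor data. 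This nonvanishing argument is exactly the genericity principle already invoked in Theorems \ref{overdefinethm} and \ref{genericindep}, now applied to a matched minor rather than to the full matrix, and it is where essentially all the work lies; the matroid-theoretic conclusion that $M(W)$ is the transversal matroid of the incidence graph then follows formally.
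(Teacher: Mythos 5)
Your route is genuinely different from the paper's, and it targets exactly the direction the paper glosses over. The paper disposes of the lemma in a few lines by appealing to Theorem \ref{genericindep} and Corollary \ref{subdiagramcor}: it argues that a basis $B$ of the column matroid can contain no subset $U$ with $|U|>|\Prop(U)|$ (else the matrix of the subdiagram $W|\Prop(U)$ would fail to have full rank), hence any $V$ containing such a $U$ is dependent. That is only the ``independent $\Rightarrow$ no bad subset'' implication; the converse is not argued there at all. You, by contrast, frame the claim as the identification of $M(W)$ with the transversal matroid of the vertex--propagator incidence, prove the easy implication by the correct coordinate-subspace argument (also catching that the $U\subsetneq V$ in the statement must really be read as $U\subseteq V$, since $|V|>|\Prop(V)|$ already forces dependence), and then attack the converse head-on with Hall's theorem plus a genericity argument. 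This is the honest shape of the problem.

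The gap is in the one step you yourself flag as carrying all the weight: the non-vanishing of the matched minor $D$. Your proposed witness --- order $V$ so that the matched submatrix becomes triangular with nonzero diagonal, ``steering'' distinct propagators by ``independent twistor data'' --- does not obviously exist. The entries $c_{p,v}=\detzr{v}$ are $4\times 4$ minors of the single $(n+1)\times 4$ matrix $\cZ_*^\mu$, all containing the row $Z_*^\mu$ and sharing further rows whenever propagators overlap; forcing an off-diagonal entry to vanish means forcing a specific quadruple $\{Z_*^\mu\}\cup\{Z_l^\mu: l\in V_p\setminus v'\}$ to be degenerate, and these degeneracy conditions interact with one another and with the required non-vanishing of the diagonal entries. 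For $|V|$ of any size you are imposing many codimension-one conditions on one small configuration, and nothing in the sketch shows they are simultaneously satisfiable. (A smaller issue: even granting $D\not\equiv 0$, you obtain $D\neq 0$ only on a dense open set, which is not literally ``for every generic twistor configuration'' in the sense of Definition \ref{genericdfn} --- though the paper is equally loose here, and Definition \ref{admissibledfn} only needs one good configuration.) To close the gap you would need an actual construction of a configuration with $D\neq 0$, or an identity expressing $D$ in terms of minors of $\cZ_*^\mu$ that is visibly nonzero; as written, the crux of the converse is asserted rather than proved.
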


\begin{proof}
Recall, from Definition \ref{Wilsondef}, that we only consider well defined Wilson loop diagrams. Therefore, by Theorem \ref{genericindep}, $\cM(W(\cZ_*)$ has full rank, as do all matrices associated to subdiagrams.

Let $\fB$ be the set of sets colums of $\cM(W(\cZ_*)$ defining maximal minors with non-zero determinant.

A subset $V$ of columns of $\cM(W(\cZ_*))$ is linearly independent if and only if it is contained in a set $B \in \fB$.  Since, $\cM(W(\cZ_*))$ has full rank, $|B| = |\Prop B|$. In particular, there does not exist $U \subset B$ such that $|U| > |\Prop(U)|$, as this would imply that the matrix $\cM(W\Prop(U)(\cZ_*))$ does not have full rank, contradicting Corollary \ref{subdiagramcor}. Therefore, any such $U$ is linearly dependent, as is any set $V$ containing it.
\end{proof}

This gives a prescription for defining the basis sets of the matroids realized by matrices of the form $\cM(W(\cZ_*))$.

\begin{thm}
The matroid $M(W)$ assoiciated to the well defined Wilson loop $W = (\fP, [n])$, and realized by $\cM(W(\cZ_*)$ is defined $M(W) = ([n], \fB)$ with \ba \fB = \{B\subset [n] | |B| = |\fP| ; \; \not \exists U \subset B \textrm{ such that } |U| > |\Prop (U)|\} \label{basisset} \;. \ea
\label{correctmatroid} \end{thm}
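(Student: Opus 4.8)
The plan is to deduce this almost entirely from the Lemma immediately preceding the statement, which already characterizes linear independence of a set of columns of $\cM(W(\cZ_*))$ purely in terms of the combinatorial data $\Prop$. The only additional ingredients are the identification of the ground set and the pinning down of the rank of the realized matroid.

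First I would fix a generic twistor configuration $\cZ_*$ and recall, following Example \ref{coimagegrassmannian}, that the matroid realized by $\cM(W(\cZ_*))$ has ground set equal to the columns of the matrix (here indexed by $[n]$), and that its collection of bases $\fB$ consists precisely of the sets of columns forming maximal nonzero minors, equivalently the maximal linearly independent column sets. Since $W$ is well defined, Theorem \ref{genericindep} gives that $\cM(W(\cZ_*))$ has full rank $|\fP|$; hence $\rk M(W) = |\fP|$, so every element of $\fB$ has size $|\fP|$, and the maximal independent sets are exactly the independent column sets of size $|\fP|$ (at least one of which exists, by full rank).

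Next I would invoke the preceding Lemma: a set $B$ of columns is linearly independent if and only if no subset $U \subseteq B$ satisfies $|U| > |\Prop(U)|$. Combining this with the previous step, a set $B \subseteq [n]$ lies in $\fB$ if and only if $|B| = |\fP|$ and no $U \subset B$ has $|U| > |\Prop(U)|$, which is exactly the description \eqref{basisset}. Because the right-hand condition never refers to $\cZ_*$, this simultaneously confirms that $\fB$ is independent of the choice of generic twistor configuration, as anticipated in Section \ref{matroiddefssection}.

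The one point requiring care is the interplay between the full-rank hypothesis and the quantifier over subsets $U$: I must ensure that restricting to sets $B$ of size exactly $|\fP|$ genuinely selects the \emph{maximal} independent sets rather than merely the independent sets, and that the extreme case $U = B$ (which forces $\Prop(B) = \fP$) is correctly subsumed by the Lemma's criterion. Once the rank is pinned to $|\fP|$ via Theorem \ref{genericindep} and Corollary \ref{subdiagramcor}, this bookkeeping is routine, so the theorem is essentially a repackaging of the Lemma rather than an independent argument.
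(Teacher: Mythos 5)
Your proposal is correct and follows exactly the route the paper intends: the paper gives no separate proof of Theorem \ref{correctmatroid}, presenting it as an immediate consequence of the preceding lemma on linear independence of column sets together with the full-rank statement of Theorem \ref{genericindep}, which is precisely your argument. Your extra care about the case $U = B$ (where the lemma's quantifier over proper subsets must be supplemented by $\Prop(B) = \fP$ to rule out sets supported on too few propagators) is a worthwhile clarification of a point the paper glosses over, but it does not change the approach.
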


\begin{rmk}
Since, at this juncture, we are interested in whether or not the $\cM(W(\cZ_*))$ is a positive Grassmannian for a generic twistor configuration, we drop the explicit dependence on the set of twistors defining $\cZ_*$ when referring to matroids.
\end{rmk}

Since the vertices of the polygon of $W$ correspond to the ground set of $M(W)$, we identify the two sets. In particular, if $V \subset [n]$ is a dependent set (resp. independent set, basis set, cyclic set, etc.) then we say that the corresponding vertices of the Wilson loop $W = (\fP, [n])$  are a  dependent set (resp. independent set, basis set, cyclic set, etc.).

As a direct corrollary of Theorem \ref{correctmatroid} and Corrollary \ref{subdiagramcor}, we may classify independent sets of $W$ by their propagator set.

\begin{cor}
If $B$ is an independent set of $W$, with $P = \Prop(B)$, then $B$ is independent on $W|P$.  \label{basissetcor}
\end{cor}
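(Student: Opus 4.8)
The plan is to reduce the claim to the combinatorial description of independent sets furnished by Theorem \ref{correctmatroid}, combined with the fact, guaranteed by Corollary \ref{subdiagramcor}, that the subdiagram $W|P = (P, [n])$ is again well defined (so that Theorem \ref{correctmatroid} applies to $M(W|P)$ as well). The only point requiring care is that the operator $\Prop$ is taken relative to an ambient propagator set, so I must compare $\Prop(U)$ as computed in $W$ against the analogous set computed in $W|P$; once these are shown to agree on subsets of $B$, the two independence conditions become literally identical.

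First I would record the monotonicity of $\Prop$. If $U \subseteq B$ and a propagator $p$ depends on some vertex of $U$, then that same vertex lies in $B$, so $p$ depends on $B$; hence $\Prop(U) \subseteq \Prop(B) = P$, where $\Prop$ is computed in $W$. Passing to $W|P$, whose propagator set is exactly $P$, the propagators of $W|P$ that depend on $U$ form the set $\Prop(U) \cap P$, which equals $\Prop(U)$ by the inclusion just established. Writing $\Prop_{W|P}$ for the operator in the subdiagram, this says $\Prop_{W|P}(U) = \Prop(U)$, and in particular $|\Prop_{W|P}(U)| = |\Prop(U)|$, for every $U \subseteq B$. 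This is the crux of the argument: restricting the propagator set from $\fP$ down to $P = \Prop(B)$ discards no propagator that is relevant to any subset of $B$, precisely because $P$ was defined as the full set of propagators depending on $B$.

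Next I would invoke Theorem \ref{correctmatroid} in both diagrams. Since $W$ is well defined, $B$ being independent in $M(W)$ is equivalent, by the characterization of $\fB$ in \eqref{basisset}, to the condition that no $U \subseteq B$ satisfies $|U| > |\Prop(U)|$. By Corollary \ref{subdiagramcor} the subdiagram $W|P$ is itself well defined, so Theorem \ref{correctmatroid} describes $M(W|P)$ too, and $B$ is independent in $M(W|P)$ exactly when no $U \subseteq B$ satisfies $|U| > |\Prop_{W|P}(U)|$. Substituting $|\Prop_{W|P}(U)| = |\Prop(U)|$ from the previous step, the two conditions coincide verbatim. Hence independence of $B$ in $M(W)$ forces independence of $B$ in $M(W|P)$, which is the assertion. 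The result is therefore a genuinely direct corollary: the sole content beyond Theorem \ref{correctmatroid} and Corollary \ref{subdiagramcor} is the bookkeeping observation that $\Prop$ is monotone and stabilizes on subsets of $B$ after restriction to $P$, so no real obstacle arises.
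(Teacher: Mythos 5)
Your proof is correct and takes essentially the same route as the paper: both arguments rest on the characterization of independence from Theorem \ref{correctmatroid} together with the observation that $\Prop(S) \subseteq \Prop(B) = P$ for every $S \subseteq B$, so the independence condition is unchanged upon restricting to $W|P$. Your version merely spells out more explicitly (via Corollary \ref{subdiagramcor} and the comparison of $\Prop$ in the two diagrams) what the paper leaves implicit.
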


\begin{proof}
By equation \eqref{basisset}, $B$ is independent in $W$ if there is not a subset $S \subset B$ such that $|S| > |\Prop(S)|$. Since $S \subset B$, $\Prop (S) \subset P$. Therefore, if $B$ is independent on $W$, it is independent on $W|P$.
\end{proof}

We may go further and this, to see how the basis sets of Wilson loop diagrams vary as one adds and subtracts propagators. For any sub-Wilson loop $W|P$, the matroid $M(W|P)$ has rank $\rk M(W|P) = |P|$. This gives the following result.

\begin{prop}
Consider $W = (P, V_P)$, and $W'$ a well defined Wilson loop diagram containing $W$. Any set $B$ is independent in $W$ if and only if $|B| \leq |P|$, and $B$ is independent all subsdiagrams of $W'$ containin $W$ as a sub-Wilson loop diagram. Furthermore, it is a basis if and only if, in addition, $|B| = |P|$. \label{indepincreasecor}
\end{prop}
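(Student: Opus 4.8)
The plan is to reduce everything to the combinatorial characterization of independence furnished by Theorem~\ref{correctmatroid} and the lemma preceding it, namely that for a well defined Wilson loop a set is independent exactly when every subset $U$ of it satisfies $|U| \leq |\Prop(U)|$. First I would fix the interpretation: since the ground set of $W = (P, V_P)$ is $V_P$, independence of $B$ in $W$ only makes sense for $B \subseteq V_P$, and because $V_P$ is the minimal vertex set occurring among the subdiagrams $(\tilde P, S)$ of $W'$ with $P \subseteq \tilde P$ and $V_P \subseteq S$, this is exactly the range of $B$ for which ``independent in all subdiagrams of $W'$ containing $W$'' is meaningful. I would also record at the outset that, by Theorem~\ref{genericindep}, $\cM(W(\cZ_*))$ has full rank for $W = (P, V_P)$, so $\rk M(W) = |P|$; hence any independent $B$ automatically satisfies $|B| \leq |P|$, which disposes of the cardinality condition in the forward direction.

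The heart of the argument is a monotonicity observation for the independence criterion under enlarging the propagator set. For a subdiagram $(\tilde P, S)$ with $P \subseteq \tilde P$ and any $U \subseteq B \subseteq V_P$, the propagators of $P$ depending on $U$ form a subset of those of $\tilde P$ depending on $U$, so the quantity $|\Prop(U)|$ computed in $(\tilde P, S)$ is at least as large as the one computed in $W = (P, V_P)$. Consequently, if $B$ is independent in $W$ — so that $|U| \leq |\Prop(U)|$ holds in $W$ for every $U \subseteq B$ — then the same inequalities hold, with an even larger right-hand side, in every such $(\tilde P, S)$; note that $B \subseteq V_P \subseteq V_{\tilde P}$ keeps $B$ away from any isolated (loop) vertices of $S \setminus V_{\tilde P}$, so these play no role. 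By Corollary~\ref{subdiagramcor} each subdiagram of the well defined $W'$ is itself well defined, so Theorem~\ref{correctmatroid} applies and gives that $B$ is independent in each $(\tilde P, S)$. This completes the forward implication.

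For the reverse implication I would simply observe that $W = (P, V_P)$ is itself a subdiagram of $W'$ containing $W$ (take $\tilde P = P$, $S = V_P$ in Definition~\ref{subdiagramdfn}), so the hypothesis that $B$ is independent in \emph{all} subdiagrams of $W'$ containing $W$ already forces independence of $B$ in $W$; the cardinality clause is not even needed here. Finally, for the ``furthermore'' statement, I would use that a basis of $M(W)$ is precisely an independent set of size $\rk M(W) = |P|$, equivalently an element of $\fB$ as described in \eqref{basisset}. Combining this with the independence equivalence just established yields that $B$ is a basis if and only if it is independent in all subdiagrams of $W'$ containing $W$ and, in addition, $|B| = |P|$.

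I do not expect a genuine obstacle here; the proposition is essentially bookkeeping on top of Theorem~\ref{correctmatroid}. The only point requiring care is the monotonicity step — verifying that $\Prop(U)$ is monotone in the propagator set and that enlarging $S$ by isolated vertices cannot destroy independence of a $B$ contained in $V_P$ — together with being precise about the admissible range of $B$ so that ``independent in all subdiagrams'' is well posed.
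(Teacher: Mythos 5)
Your proposal is correct and follows essentially the same route as the paper: both arguments rest on the characterization of independence from Theorem~\ref{correctmatroid} together with the observation that $\Prop(U)$ can only grow when the propagator set is enlarged, so the inequalities $|U|\leq|\Prop(U)|$ persist in every subdiagram containing $W$. Your reverse direction (simply noting that $W$ itself is among the subdiagrams of $W'$ containing $W$) is in fact a slightly cleaner shortcut than the paper's, which routes through extending $B$ to a basis of $W'$, but the substance is the same.
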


\begin{proof}
Write $W' =(P', [n'])$. If $W$ is a sub-Wilson loop diagram of $W$, by definition \ref{subdiagramdfn}, $P \subset P'$ and  $V_P\subset [n']$.

By Theorem \ref{correctmatroid}, $B$ is independent in $W = (P, [n])$, if and only if $|B| \leq |P|$, and it does not contain a subset $V \subsetneq B$ such that $|V| > |\Prop (V)|$. Consider any Wilson loop diagram, $U$, that is a subdiagram of $W'$ and such that $W$ is a subdiagram of $U$. Then the set $\Prop (V)$ in $W$ is contained in the set $\Prop(V)$ in $U$. Therefore, $B$ is independent in $U$.

If $W$ is a submatroid of $W'$, then the propagator structure of $W$ does not change, and the rank increases. The set $B$ is independent in $W'$ if and only if there is a basis set $B'$ of $W'$ containing it. Since $|B'| > |B|$, and there is not a subset $U \subsetneq B$ such that $|U| > |\Prop (U)|$, $B$ is independent.

If $B$ is an independent set of every sub diagram $U = (P', [n'])$ of $W'$, such that $\Prop(B) \subset P'$ and $V_{\Prop(B)} \subset [n']$, then $B$ is contained in a basis, $B'$, of $W'$. In particular, there is not a subset $U \subsetneq B$ such that $|U| > |\Prop (U)|$. Since $|B| \leq |\Prop(B)|$, by hypothesis, $B$ is independent in $W = (\Prop(B), V_{\Prop(B)})$, and a basis if and only if, in addition, $|B| = |\Prop(B)|$
\end{proof}

Thus we have established a set of conditions mediating the independence of vertices among subdiagrams. However, showing that a set of vertices is independent in all subdiagrams is onerous. Next, we show that one may always find some subdiagram for which an independents set is a basis.

Thus we have given a condition such that that a independent set, $B$ on a Wilson loop diagram restricts to an independent set on the subdiagram $W | \Prop(B)$.  Next, we show that, with possible further restriction of the diagram, one may always find a basis.

\begin{prop}
If $B$ is an independent set in $W = (\fP, [n])$, then there exists a $P \subset \Prop(B)$ such that $|P| = |B|$, and $B$ is a basis of $W|P$.
\label{indeppropagatorprop}\end{prop}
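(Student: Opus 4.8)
The plan is to recognize the independence condition defining $M(W)$ as exactly Hall's marriage condition on a bipartite incidence graph, and then read off $P$ as a system of distinct representatives.

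First I would set up the bipartite graph $G$ whose two vertex classes are the set $B$ and the propagator set $\Prop(B)$, joining $v \in B$ to $p \in \Prop(B)$ by an edge whenever $v \in V_p$ (equivalently, $p$ depends on $v$). The key observation is that for any $U \subseteq B$ the neighbourhood of $U$ in $G$ is precisely $\Prop(U)$, since a propagator lies in $\Prop(U)$ exactly when it is incident to some vertex of $U$. Hence the independence criterion (the Lemma preceding Theorem \ref{correctmatroid}, or equivalently the basis description \eqref{basisset}) says that $B$ is independent in $W$ if and only if $|\Prop(U)| \geq |U|$ for every $U \subseteq B$, which is exactly Hall's condition for $G$.

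Next I would invoke Hall's marriage theorem to produce a matching of $G$ saturating $B$: an injection $\phi \colon B \to \Prop(B)$ with $v \in V_{\phi(v)}$ for each $v \in B$. Setting $P = \phi(B)$ immediately gives $P \subseteq \Prop(B)$ and $|P| = |B|$ by injectivity. It then remains to check that $B$ is a basis of $W|P = (P, [n])$. Since the propagator set of $W|P$ is $P$, for a vertex set $U$ the relevant dependency set is $\Prop(U) \cap P$. For $U \subseteq B$ the restriction $\phi|_U$ is an injection into $\Prop(U) \cap P$, so $|\Prop(U) \cap P| \geq |\phi(U)| = |U|$. Applying the independence criterion now to $W|P$ shows $B$ is independent there, and since $\rk M(W|P) = |P| = |B|$, the independent set $B$ is in fact a basis of $W|P$.

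The only real content beyond bookkeeping is the identification $\Prop(U) = N_G(U)$, which turns the matroid independence condition into Hall's condition; once that is in place, the statement is a direct application of the marriage theorem. The main point to be careful about is that independence is tested against all of $\fP$ in the hypothesis but only against $P$ in the conclusion, so I must verify that the chosen matching $\phi$ supplies enough distinct representatives inside $P$, which is exactly the inequality $|\Prop(U) \cap P| \geq |U|$ established above.
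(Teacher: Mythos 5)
Your proof is correct, but it takes a genuinely different route from the paper's. The paper argues by an iterative removal process: starting from $W|\Prop(B)$, it repeatedly strips off propagators (and, whenever some $S\subseteq B$ already satisfies $|S|=|\Prop(S)|$, splits off $S$ as a basis of $W|\Prop(S)$ and continues with $B\setminus S$), until the surviving set of propagators has size $|B|$, at which point $B$ is a basis of the restricted diagram. You instead observe that the criterion of Theorem \ref{correctmatroid} --- no $U\subseteq B$ with $|U|>|\Prop(U)|$ --- is precisely Hall's condition for the bipartite incidence graph between $B$ and $\Prop(B)$, apply the marriage theorem to obtain a matching saturating $B$, and take $P$ to be the set of matched propagators; restricting the matching to any $U\subseteq B$ then witnesses $|\Prop(U)\cap P|\geq |U|$, so $B$ is a basis of $W|P$ by the same criterion. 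The two arguments are cousins --- both amount to extracting a system of distinct representatives, i.e.\ to viewing $M(W)$ (on the vertices meeting propagators) as a transversal matroid presented by the vertex--propagator incidence --- but yours is the cleaner and more complete one: existence and termination come for free from Hall's theorem, and the final verification is a one-line count, whereas the paper's removal step is only sketched (its stopping condition is written as a strict inequality where an equality is intended, and the claim that the Hall-type condition survives each deletion is left implicit). The point you rightly isolate and handle --- that independence in $W$ is tested against all of $\fP$ while the conclusion requires it only against $P$ --- is exactly where the matching earns its keep.
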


\begin{proof}
Suppose $B$ is independent in $W$. by Theorem \ref{correctmatroid}, $B$ does not contain any subset $S \subset B$ such that $|S| < |\Prop (S)|$. For any subset $S \in B$ $\Prop(S) \subset \Prop(B)$. Therefore, $B$ is an independent set in $W | \Prop(B)$.

If $|B| = |\Prop(B)|$, then $P = \Prop(B)$, and $B$ is a basis, again by Theorem \ref{correctmatroid}. It remains to consider when $|B| < |\Prop(B)|$.

Suppose, without loss of generality, that there is no subset $S \subset B$ such that $|S| = |\Prop S|$. If there is such a set, $S$  is a basis of $W | S$. Consider the independents set $B\setminus S$ in $W|(\Prop(B) \setminus \Prop(S))$.

Then for any $p \in B$ such that $|V_p \cap B| \neq 1$, $|B| \leq |\Prop(B) \setminus p|$, and every subset, $S \subset B$ has the property that $|S| \leq |\Prop (S) \setminus p|$ in $W| (\Prop(B) \setminus p)$.

We may proceed to remove propagators and subbasis of $B$ in this fashion until $|B| < |\Prop(B) \setminus Q|$, for some set of propagators $Q \subset \Prop(B)$. Since $B$ is independent, it is a basis of $W | (\Prop(B) \setminus Q)$.
\end{proof}

Furthermore, we see that for any set of vertices $V$ in $W$, we have \ba \rk V \leq \min \{|V|, |\Prop(V)|\} \label{rankbound} \;. \ea Indeed, by definition, $\rk (V)  = \max \{|V \cap B| | B \textrm{ is a basis set of }W\}$. Therefore, $\rk(V) \leq |V|$. By Proposition \ref{indeppropagatorprop}, we see that the $\rk(V)$ in $W$ is the same as $\rk(V)$ in $\Prop(V)$.

If there are more propagators than vertices in a set $V$, that is, if $|\Prop(V)| \leq |V|$, then by Corrollary \ref{basissetcor}, $\rk (V)$ in $W$ is the same as $\rk (V)$ in $W|\Prop(V)$. Therefore, $\rk (V) \leq |\Prop(V)|$.

Finally, we give an example of the matrices associated to Wilson loop diagrams and the associated matroid structures.

\begin{eg}\label{Wilsontomatroideg}
As developed in Section \ref{physics}, the matrix $\cM(W(\cZ_*))$ is defined as follows: \ba \cM(W)_{ij} = \begin{cases} 0 & \textrm{ if } j \not \in V_{p_i} \\ c_{ij} & \textrm{else; } j \in V_{p_i}\end{cases} \; , \label{realizationmatix}\ea where each $c_{ij}$ is a function on the twistor configuration $\cZ_*^\mu$.

Consider the Wilson loop diagram presented in Example \ref{disconnected3eg}. Write \bas W = (\{(2, 4),(4, 7), (5,7)\},[8]) =
{\begin{xy}
(-5,10) *{\bullet},
(-6, 10); (6, 10) **{\dir{-}},
(-4, 11); (-11, 4) **{\dir{-}},
(-10, 6); (-10, -6) **{\dir{-}},
(-11, -4); (-4, -11) **{\dir{-}},
(-6, -10); (6, -10) **{\dir{-}},
(4, 11); (11, 4) **{\dir{-}},
(10, 6); (10, -6) **{\dir{-}},
(11, -4); (4, -11) **{\dir{-}},
(-10, 0); (0, -10) **@{~},
(2, -10); (7, 8) **@{~},
(8, -7); (8, 7) **@{~},
\end{xy}} \;, \eas as above.
 Then, for generic $\cZ_*$,
\bas M(W(\cZ_*)) = \left(
\begin{array}{cccccccc}
0 & c_{1,2} & c_{1,3} & c_{1,4} & c_{1,5} & 0 & 0 & 0 \\
0 & 0 & 0 & c_{2,4} & c_{2,5} & 0 & c_{2,7} & c_{2,8} \\
0 & 0 & 0 & 0 & c_{3,5} & c_{3,6} & c_{2,7} & c_{3,8} \\
\end{array}
\right) \eas

The basis of $M(W)$ are
\bmls \fB =(\{(2,4,5), (2,4,6), (2,4,7), (2, 4, 7), (2, 5, 6), (2, 5, 7), (2, 6, 7), (2, 7, 8), \\ (4, 5, 6), (4, 5, 7), (5, 6, 7), \textrm{ and all distinct sets formed from these by substituting} \\ \textrm{3 for 2, and 8 for 7}\} ) \;.\emls
\end{eg}

\begin{rmk} In section \ref{Wilsonamplitudes}, we discussed that in this paper, we relax the physical condition that, for a Wilson loop diagram $W = (\fP, [n])$,  $n \geq |V_{\fP}| + 4$, to allow the combinatorial studying of Wilson loop diagrams. In the combinatorial approach, we are not directly interested in $n$ particle $N^kMHV$ diagrams, but in the subdiagrams that determine it. In particular, we are concerned with the number of particles that interact independently with each other, or subsets of propagators with distinct dependency sets, as this is what defines the matroid $M(W)$. For instance, in the diagram $W(\{(e_2, e_4),(e_4, e_7), (e_5,e_7)\},8)$ in Example \ref{disconnected3eg} none of the propagators depend on the first vertex. Therefore, the matrix $\cM(W)$ has $7$ non-zero columns. Similarly, in the diagram \bas W = (\{(e_2, e_6),(e_4, e_8)\},8) =
{\begin{xy}
(-5,10) *{\bullet},
(-6, 10); (6, 10) **{\dir{-}},
(-4, 11); (-11, 4) **{\dir{-}},
(-10, 6); (-10, -6) **{\dir{-}},
(-11, -4); (-4, -11) **{\dir{-}},
(-6, -10); (6, -10) **{\dir{-}},
(4, 11); (11, 4) **{\dir{-}},
(10, 6); (10, -6) **{\dir{-}},
(11, -4); (4, -11) **{\dir{-}},
(-8, 7); (8, 7) **@{~},
(-8, -7); (8, -7) **@{~},
\end{xy}} \;, \eas one propagator has the dependency set  $V_{p_1} = \{1, 2, 7, 8\}$, while the other has the dependency set $V_{p_2} =\{3, 4, 5, 6\}$. In other words, the two propagators are independent. The matroid $M(W)$ can be written as a direct sum of two matroids, $M_1$ and $M_2$ depending on $V_{p_1}$ and $V_{p_2}$ respectively. That is, $M(W)$ is disconnected. We may write $\cM_1 = \cM(W_1)$ and $\cM_2 = \cM(W_2)$, for $W_1$ and $W_2$ two different exact Wilson loops with one propagator and four vertices. These diagrams uniquely determine the interaction above, but neither of these sub-Wilson loops do not correspond to directly to $n$ particle $N^kMHV$ diagrams.
\end{rmk}

\subsubsection{Circuits and flats of Wilson loops\label{circuitsandflats}}

In this section, we define circuits of $M(W)$, and an important family of flats.

\begin{thm}
A set $C \subset [n]$ is a circuit of $M(W)$ if and only if \begin{enumerate}
\item $|C| = \rk (C) +1$
\item There does not exist a proper subset $S \subsetneq C$ such that $|S| > |\Prop (S)|$.
\end{enumerate}
\label{circuitcor}\end{thm}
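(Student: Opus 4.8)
The plan is to prove both implications directly from the definition of a circuit as a minimally dependent set (Definition \ref{matroiddefs}), using the independence criterion for $M(W)$ established just before Theorem \ref{correctmatroid}. That criterion characterizes dependence combinatorially: a vertex set is \emph{dependent} in $M(W)$ exactly when it possesses some subset $U$ witnessing the deficiency $|U| > |\Prop(U)|$, and \emph{independent} otherwise. The only other ingredient I need is the bookkeeping already packaged in the rank bound \eqref{rankbound} together with the elementary observation that a minimally dependent set has rank exactly one less than its size.

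First I would prove the forward direction. Suppose $C$ is a circuit. By minimality, $C$ is dependent while every proper subset $C \setminus e$ is independent; the latter forces $\rk(C) \ge |C| - 1$, and dependence forces $\rk(C) < |C|$, so $\rk(C) = |C| - 1$, which is condition (1). For condition (2), suppose toward contradiction that some proper subset $S \subsetneq C$ satisfied $|S| > |\Prop(S)|$. Then, taking the witnessing subset to be $S$ itself, the independence criterion makes $S$ dependent, contradicting the fact that every proper subset of a circuit is independent. Hence no such $S$ exists.

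Conversely, I would assume (1) and (2) and show that $C$ is minimally dependent. Condition (2) says no proper subset of $C$ is deficient; since every subset $U$ of a proper subset $S \subsetneq C$ again satisfies $|U| \le |S| < |C|$ and so is itself a proper subset of $C$, the independence criterion yields that every proper subset $S \subsetneq C$ is independent. Condition (1) gives $\rk(C) = |C| - 1 < |C|$, so $C$ is dependent. A dependent set all of whose proper subsets are independent is, by Definition \ref{matroiddefs}, a circuit, completing the argument.

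The one point requiring care — and the closest thing to an obstacle — is the asymmetric treatment of $C$ itself versus its proper subsets. Condition (2) deliberately ranges only over \emph{proper} subsets, so it does not by itself encode that $C$ is dependent; that information is carried entirely by condition (1). I would therefore emphasize that the deficiency witnessing the dependence of $C$ can only occur at $U = C$, the proper subsets being deficiency-free by (2), so that $C$ has rank deficiency exactly one, consistent with $\rk(C) = |C| - 1$. Everything else is a direct unwinding of the independence criterion, with no computation beyond cardinality comparisons.
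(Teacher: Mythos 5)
Your proof is correct and follows essentially the same route as the paper's: condition (1) is extracted from the definition of a circuit as a minimally dependent set, and condition (2) is the combinatorial independence criterion (equation \eqref{basisset}) applied to the proper subsets. You simply spell out both directions of the biconditional more carefully than the paper does, which is a faithful elaboration rather than a different argument.
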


\begin{proof}
The condition on the size of $C$ comes from the definition of a circuit. The second condition comes from the fact that a circuit is a minimally dependent set. That is, there are no proper subsets that are dependent sets, which is equivalent, by equation \eqref{basisset}, to saying that there does not exists a proper subset $S \subsetneq C$ such that $|S| > |\Prop (S)|$.
\end{proof}

This, in conjuction with equation \eqref{rankbound}, gives a relationship between the rank of a circuit and its propagator set.

\begin{cor}\label{circuitrank}
A set $C$ is a circuit if and only if $\rk (C) = |\Prop(C)|$.
\end{cor}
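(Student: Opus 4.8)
The plan is to obtain both implications by combining the circuit criterion of Theorem~\ref{circuitcor} with the rank bound \eqref{rankbound}, namely $\rk(V)\le\min\{|V|,|\Prop(V)|\}$. The forward implication is routine. Suppose $C$ is a circuit. Theorem~\ref{circuitcor}(1) gives $\rk(C)=|C|-1$. Since $C$ is dependent, the description of bases in \eqref{basisset} provides a subset $U\subseteq C$ with $|U|>|\Prop(U)|$; but Theorem~\ref{circuitcor}(2) forbids any \emph{proper} such subset, so necessarily $U=C$, whence $|\Prop(C)|<|C|$, i.e. $|\Prop(C)|\le|C|-1=\rk(C)$. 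The bound \eqref{rankbound} supplies the reverse inequality $\rk(C)\le|\Prop(C)|$, and the two together force $\rk(C)=|\Prop(C)|$.

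For the converse I would try to recover the two defining conditions of Theorem~\ref{circuitcor} from the single equality $\rk(C)=|\Prop(C)|$. What this equality says is transparent: by Proposition~\ref{indeppropagatorprop} the rank of $C$ is already computed in the subdiagram $W|\Prop(C)$, whose matroid has rank $|\Prop(C)|$, so $\rk(C)=|\Prop(C)|$ means exactly that $C$ \emph{spans} $M(W|\Prop(C))$, i.e. $C$ contains a basis of its own propagator set. This is the step I expect to be the main obstacle, precisely because spanning is not minimality. The equality controls neither the size $|C|$ (so it does not by itself yield condition~(1), $|C|=\rk(C)+1$) nor the dependence of proper subsets (so it does not yield condition~(2)). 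Concretely, the full ground set already satisfies $\rk([n])=|\fP|=|\Prop([n])|$ without being a circuit whenever $n>|\fP|+1$, and more generally a circuit with any extra independent vertices attached on fresh propagators still saturates its propagator set while failing minimality.

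Consequently, the converse as literally worded cannot be extracted from $\rk(C)=|\Prop(C)|$ alone; it becomes true only once $C$ is required to be a minimal dependent set, i.e. once conditions~(1) and~(2) of Theorem~\ref{circuitcor} are reinstated. I would therefore present the corollary in the form that is actually provable — reading it together with the minimality built into ``circuit'' — and be explicit that the real work in the reverse direction is supplying that minimality, not manipulating the rank equality. In particular I would \emph{not} assume at the outset that $C$ is minimal (that would merely re-derive the forward statement); instead I would flag that $\rk(C)=|\Prop(C)|$ is a necessary, but not sufficient, condition for circuithood, with sufficiency requiring the size constraint $|C|=|\Prop(C)|+1$ \emph{and} the absence of smaller dependent subsets.
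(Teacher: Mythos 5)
Your forward implication is, in substance, exactly the paper's own proof: the paper combines condition (1) of Theorem~\ref{circuitcor} with the bound \eqref{rankbound} to get $\rk (C) \leq |\Prop(C)|$, and then excludes strict inequality by exhibiting a proper dependent subset $S \subsetneq C$, contradicting minimality. Your arrangement --- locate a dependent witness $U \subseteq C$ via \eqref{basisset}, force $U = C$ by condition (2), and conclude $|\Prop(C)| \leq |C| - 1 = \rk(C)$ --- is the same computation run contrapositively, and is actually stated more carefully than the paper's, since the paper's claim that $\rk(C) < |\Prop(C)|$ by itself produces such an $S$ is only valid once condition (1) has been used to know that $C$ is dependent in the first place.

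On the converse you are right, and this is the substantive point: the paper's printed proof establishes only the implication circuit $\Rightarrow$ $\rk(C) = |\Prop(C)|$ and never addresses the ``if'' direction, which is false as literally stated. Your counterexamples check out. For the full ground set, Theorem~\ref{genericindep} gives $\rk([n]) = |\fP| = |\Prop([n])|$, while $[n]$ could only be a circuit if $n = |\fP| + 1$, impossible for well defined diagrams since Definition~\ref{overexact} forces $n \geq |V_{\fP}| \geq |\fP| + 3$. Even more locally, in a diagram with a single propagator $p$, the set $C = \{i_p, i_p+1, j_p\}$ satisfies $\rk(C) = 1 = |\Prop(C)|$ yet properly contains the circuit $\{i_p, i_p+1\}$, so it fails minimality. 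Note also that your proposed weakening is consistent with how the corollary is actually consumed downstream: Lemma~\ref{cyclicrank} and the subsequent positroid results use only the necessary direction (every circuit satisfies the rank equality; equivalently, a rank deficit $\rk(C) < |\Prop(C)|$ certifies that $C$ is not a circuit), so restating the corollary as a necessary condition --- with sufficiency recovered only after reinstating $|C| = |\Prop(C)| + 1$ and the absence of proper dependent subsets, i.e.\ Theorem~\ref{circuitcor} itself --- loses nothing that the paper later needs.
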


\begin{proof}
By the first condition of corrollary \ref{circuitcor}, and equation \eqref{rankbound}, we see that $\rk (C) \leq |\Prop(C)|$. If $\rk (C) < |\Prop(C)|$, then by Theorem \ref{correctmatroid}, there exists a proper subset $S \subsetneq C$ such that $|S| > |\Prop (S)|$, and $C$ is not a circuit.
\end{proof}

Finally, we define an important family of flats that are easy to see from the diagramatics.

\begin{dfn}
Write $F(P)\subset [n]$ to indicate the set of vertices in $[n]$ that only define propagators in $P$.
\end{dfn}

We may write $F(P) = V_P \setminus V_{P^c}$. By this definition, it is clear that if $Q \subset P$, then $F(Q) \subset F(P)$.

In this notation, $F(\emptyset)$ is the set of vertices of $W$ that are not adjacent to any propagators. There is a relationship between flats and dependency sets. One sees from the definition of $F(P)$ that $F(P)^c = V_{P^c}$. However, it is not true that, for any set of propagators, $P \subset \fP$, $V_P^c = F(P^c)$. In general, $P^c$ is too large a set. For any set of propagators $P \subset \fP$, one may write $V_P^c = F(Q)$, for some subset $Q \subset P^c$ such that $V_Q = V_{P^c}$. In this case, we have $V_P = F(Q)^c$.

\begin{lem}
Consider $W = (\fP, [n])$. For any $P \subset \fP$, the set $F(P) \cup F(\emptyset)$ is a flat in $M(W)$. \label{propagatorflat}
\end{lem}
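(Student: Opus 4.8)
The plan is to verify the flat criterion from Definition \ref{matroiddefs} directly: a set is a flat exactly when adjoining any external ground-set element raises its rank by one. First I would reinterpret the target set combinatorially. Since each vertex of $F(P)$ lies in propagators of $P$ only, and $F(\emptyset)$ collects the vertices incident to no propagator at all, the union is precisely $F := F(P) \cup F(\emptyset) = \{ v \in [n] \mid \Prop(\{v\}) \subseteq P \}$. Equivalently, a vertex $w$ fails to lie in $F$ iff it is incident to at least one propagator $q_0 \in \fP \setminus P$. Note also that $\Prop(F) \subseteq P$, since $\Prop(F)$ is the union of the vertexwise propagator sets, each of which lies in $P$.

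Next, fix $w \in [n] \setminus F$ and a basis $B$ of the restricted matroid $M(W)|F$, so that $|B| = \rk(F)$ and $B$ is independent in $M(W)$. I would show that $B \cup \{w\}$ is again independent, using the characterization in Theorem \ref{correctmatroid}, equation \eqref{basisset}: I must check that no subset $U \subseteq B \cup \{w\}$ satisfies $|U| > |\Prop(U)|$. Subsets avoiding $w$ are subsets of $B$ and hence cause no trouble. For a subset $U = U' \cup \{w\}$ with $U' \subseteq B$, the vertex $w$ is incident to some $q_0 \in \fP \setminus P$, whereas $\Prop(U') \subseteq \Prop(F) \subseteq P$, so $q_0 \notin \Prop(U')$. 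Monotonicity of $\Prop$ then gives $|\Prop(U)| \geq |\Prop(U')| + 1 \geq |U'| + 1 = |U|$, where the middle inequality uses that $U' \subseteq B$ is independent. Hence $B \cup \{w\}$ violates no condition of \eqref{basisset}.

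It then remains to confirm the size bound $|B \cup \{w\}| \leq |\fP|$ also required by \eqref{basisset}. Since $w \notin F$ forces $P \subsetneq \fP$, the rank bound \eqref{rankbound} gives $\rk(F) \leq |\Prop(F)| \leq |P| < |\fP|$, so $|B \cup \{w\}| = \rk(F) + 1 \leq |\fP|$. Therefore $B \cup \{w\}$ is independent, whence $\rk(F \cup \{w\}) \geq |B| + 1 = \rk(F) + 1$; the reverse inequality is automatic since one adjoins a single element. As $w$ was arbitrary outside $F$, the criterion of Definition \ref{matroiddefs} is met and $F$ is a flat.

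I expect the only delicate point to be the middle step: guaranteeing that adjoining $w$ genuinely introduces a propagator unseen by $F$. This hinges on the clean identity $\Prop(F) \subseteq P$ together with the defining property that every $w \notin F$ touches $\fP \setminus P$; once these are isolated, the rest is bookkeeping with the monotonicity of $\Prop$ and the bound \eqref{rankbound}. The degenerate case $F = [n]$, which occurs precisely when $P = \fP$, needs no argument, since then there is no external element to test.
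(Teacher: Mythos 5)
Your proof is correct and follows essentially the same route as the paper's: adjoin the external vertex $w$ to a maximal independent subset of $F(P)\cup F(\emptyset)$ and use the counting criterion $|U|\le|\Prop(U)|$ from Theorem \ref{correctmatroid} to show the enlarged set is still independent, so the rank goes up by one. If anything, your write-up is slightly more careful than the paper's, which checks the counting condition only for the full set $S = e\cup(B\cap F(P))$ rather than for every subset containing the new vertex, and which does not explicitly fold $F(\emptyset)$ into the argument the way your reformulation $F=\{v\mid \Prop(\{v\})\subseteq P\}$ does.
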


\begin{proof}
By construction, $\Prop(F(P)) = P$. If $e \not \in F(P) \cup F(\emptyset)$, then \bas \Prop(F(P) \cup e)) = P \cup \Prop(e) \supsetneq P\;. \eas

By definiton of the rank function, there exists a basis of $M(W)$, $B$ such that $|B \cap F(P)| = \rk(F(P))$. Furthermore, by Proposition \ref{indeppropagatorprop}, $\rk(F(P))$  in $W|P$ is the same as $\rk(F(P))$ in $W$. Therefore, any such $B$ that maximally intersects $B(P)$ contains a basis of $W|P$. Thus $\Prop(B \cap F(P)) = P$.

Consider the set $S = e \cup (B \cap F(P))$. For $e \not \in F(P) \cup F(\emptyset)$, $\Prop(e) \not \subset P$/ Therefore, we write $\Prop S = P \cup \Prop(e)$. By equation \eqref{rankbound}, $\rk(F(P))  = |B \cap F(P)| \leq |P|$. Therefore, \bas |S| = |B \cap F(P)| + 1 \leq |P| + 1 \; .\eas Since $|P| < |\Prop(S)|$, we have that  $|S| \leq |P| + 1 \leq |\Prop(S)|$, making $S$ an independent set of size greater than $\rk(F(P))$.

Therefore, $F(P) \cup F(\emptyset)$ is a flat in $M(W)$.
\end{proof}

As a corrollary to Lemma \ref{propagatorflat}, we see that the set $F(P)$ defines a flat of the Wilson loop $W = (\fP, V_{\fP})$. This is because, by construction, $F(\emptyset)= \emptyset$ in $(\fP, V_{\fP})$. This motivates the following definition.

\begin{dfn}
The set $F(P)$ is called the propagator flat of $(\fP, [n])$, defined by $P \subset \fP)$.
\end{dfn}

Propagator flats play an imporatnat role in determining which Wilson loop diagrams lead to positorids.

\subsection{Wilson loop diagrams as positroids\label{Wilsonpositroid}}

The aim of this paper is to identify admissible Wilson loops. From Definition \ref{admissibledfn}, this means we only wish to consider Wilson loops such that $\cM(W(\cZ_*))$ realizes a non-negative Grassmannian, i.e. define a positroid. In this section, we determine, graphically, which Wilson loop diagrams are are admissible by studying which matroids $M(W)$ are positroids.

From Theorem \ref{positroidcondition}, the connected matriod $M(W)$ is a positroid if an only if all flacets are cyclic intervals. Therefore, it is necessary to turn to Theorem \ref{flacetcondition}, which defines flacets, to determine which sets of vertices of a Wilson loop define flacets. This section is devoted to encoding the conditions laid out in Theorem \ref{flacetcondition} in terms of graphical properties of the Wilson loop.

\subsubsection{Connected Wilson loops}

As shown in Section \ref{matroiddefssection}, connected matroids are the building blocks of matroid theory. Following the philosophy laid out in section \ref{thematroid}, we continue to identify the Wilson loop diagram with its matroid. Therefore, we define connected Wilson loops to be exactly those that give rise to connected matroids.

For the rest of this paper, we restrict ourselves to working only with connected Wilson loops. Theorem \ref{disconnpositroid} and gives a prescription for piecing together connected positroids into larger, disconnected positroids. In this section, we extend this to Wilson loop diagrams. Definition \ref{noncrossingpartitiondfn} lends itself naturally to the Wilson loop diagram setting. We use this reinterpertation in Theorem \ref{noncrosspartconn} to find a prescription to determine which disconnected Wilson loop diagrams are also positroids, and therefore admissible.

We begin by defining connected Wilson loops.

\begin{dfn}
A Wilson loop diagram is called connected if $M(W)$ is connected. Otherwise, it is disconnected. \end{dfn}

Notice that the diagram in Example \ref{disconnected3eg} is disconnected. A matroid $M$ is disconnected if it can be split into two independent sets of data. Similarly, a Wilson loop is disconnected if its propagators can be split into disjoint sets such that their dependency sets are also disjoint. We show this below.

\begin{lem} If $F(\emptyset) \neq \emptyset$ in $W = (\fP, [n])$, then $M(W)$ is disconnected. \label{emptyflatprop}\end{lem}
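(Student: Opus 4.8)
The plan is to show that every vertex in $F(\emptyset)$ is a loop of $M(W)$ (a rank-zero element), and then to read off disconnectedness directly from the paper's definition using the complementary partition $[n] = V_{\fP} \amalg F(\emptyset)$. Recall that $F(\emptyset)^c = V_{\fP}$, so these two blocks are disjoint and cover $[n]$.

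First I would record the defining property of $F(\emptyset)$: a vertex $v \in F(\emptyset)$ is adjacent to no propagator, so $\Prop(v) = \emptyset$, and more generally $\Prop(S) = \emptyset$ for every $S \subset F(\emptyset)$. Hence for any nonempty $S \subset F(\emptyset)$ we have $|S| > 0 = |\Prop(S)|$, so by the basis criterion \eqref{basisset} of Theorem \ref{correctmatroid} no basis of $M(W)$ can contain $S$; in particular, taking $S = \{v\}$, no basis contains a single such $v$. Thus every element of $F(\emptyset)$ is a loop, and the only independent subset of $F(\emptyset)$ is $\emptyset$.

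Next I would verify the disconnectedness condition for the partition $[n] = V_{\fP} \amalg F(\emptyset)$. Given independent sets $S_1 \subset V_{\fP}$ and $S_2 \subset F(\emptyset)$, the previous step forces $S_2 = \emptyset$, so $S_1 \cup S_2 = S_1$ is independent. Therefore the union of independent sets drawn from the two blocks is always independent, which is exactly the condition for $M(W)$ to split as $M(W)|V_{\fP} \oplus M(W)|F(\emptyset)$, i.e.\ to be disconnected. The partition is genuine (both blocks nonempty) because $F(\emptyset) \neq \emptyset$ is the hypothesis and $V_{\fP} \neq \emptyset$ since a well defined Wilson loop diagram carries at least one propagator.

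Since this is essentially a direct unwinding of the definitions, I do not anticipate a serious obstacle. The one point that needs care is matching the paper's phrasing of connectedness: I would lean on the reformulation that the union of independent sets taken from the blocks stays independent, rather than the literal wording about mutually independent elements, since the block $F(\emptyset)$ consists of loops whose elements are manifestly \emph{not} independent of one another. I would also explicitly confirm that both blocks are nonempty so that the decomposition is nontrivial, which is what the notion of disconnectedness requires.
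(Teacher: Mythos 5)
Your proposal is correct and follows the same route as the paper: the paper's proof simply asserts the decomposition $M(W) = M(W)|F(\emptyset) \oplus M(W)|([n]\setminus F(\emptyset))$ without further comment, and you supply exactly the missing justification (every vertex of $F(\emptyset)$ is a loop since $|S| > |\Prop(S)| = 0$ for any nonempty $S \subset F(\emptyset)$, so the direct-sum condition holds trivially). Your added care about nonemptiness of both blocks and about phrasing the connectedness criterion via unions of independent sets is a sensible tightening of the paper's one-line argument, not a departure from it.
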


\begin{proof}

If $F(\emptyset) \neq \emptyset$, then $M(W)$ is disconnected. Write $M = M|F(\emptyset) \oplus M|([n]\setminus F(\emptyset))$.
\end{proof}

Therefore, if a Wilson loop has vertices that do not define propagators, it is a disconnected loop. Next we consider Wilson loops where that it not the case, i.e Wilson loops for which $F(\emptyset) = \emptyset$.

\begin{thm} \label{disconnectedloopprop}
A Wilson loop diagram $W(\fP,V_{\fP})$ is disconnected if and only if $\fP$ can be partitioned two sets, $\fP =  P_1 \amalg P_2$, such that their propagator flats $F(P_1)$, $F(P_2)$ partition $[n]$.
\end{thm}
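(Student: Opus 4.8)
The plan is to recognize $M(W)$ as a \emph{transversal matroid} and to reduce both implications to a cleaner combinatorial statement about the incidence structure of propagators and vertices. Since the theorem concerns $W=(\fP,V_\fP)$, the ground set is $V_\fP$, so every vertex is adjacent to a propagator, i.e. $F(\emptyset)=\emptyset$ by Lemma \ref{emptyflatprop}'s setup. First I would reformulate the right-hand condition: a partition $\fP=P_1\amalg P_2$ has $F(P_1),F(P_2)$ partitioning $[n]$ if and only if $V_{P_1}\cap V_{P_2}=\emptyset$. Indeed $F(P_i)=V_{P_i}\setminus V_{P_i^c}$, and since $F(P_1)\cup F(P_2)\subseteq V_{P_1}\cup V_{P_2}=[n]$, the two flats cover $[n]$ exactly when no vertex lies in both $V_{P_1}$ and $V_{P_2}$; conversely disjointness gives $F(P_i)=V_{P_i}$. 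So the statement becomes: $M(W)$ is disconnected if and only if $\fP$ splits into two nonempty sets with disjoint dependency sets. The second ingredient is to read \eqref{basisset} as Hall's condition: $B\subseteq[n]$ is independent in $M(W)$ precisely when $|U|\le|\Prop(U)|$ for all $U\subseteq B$, i.e. $B$ is a partial transversal of the bipartite incidence graph $G$ with $v\sim p \iff v\in V_p$. Thus $M(W)$ is the transversal matroid of $G$, and $G$ is connected exactly when no splitting as above exists.

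For the \textbf{easy direction} ($\Leftarrow$), suppose $\fP=P_1\amalg P_2$ with $V_{P_1}\cap V_{P_2}=\emptyset$. Every $U\subseteq[n]$ decomposes as $U=U_1\amalg U_2$ with $U_i=U\cap V_{P_i}$, and then $\Prop(U)=\Prop(U_1)\amalg\Prop(U_2)$ with $\Prop(U_i)\subseteq P_i$ disjoint. Hence $|S|\le|\Prop(S)|$ fails for some $S\subseteq U$ if and only if it fails for $S\cap V_{P_1}$ or for $S\cap V_{P_2}$, so by \eqref{basisset} the set $U$ is independent if and only if both $U_1$ and $U_2$ are. This is exactly the direct-sum condition of Definition \ref{matroiddefs}, giving $M(W)=M(W)|V_{P_1}\oplus M(W)|V_{P_2}$ with both summands nonempty, so $M(W)$ is disconnected.

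For the \textbf{hard direction} ($\Rightarrow$) I would argue the contrapositive: assume $G$ is connected (no admissible splitting) and deduce $M(W)$ is connected. Suppose instead that $E_1\amalg E_2=[n]$ is a nontrivial separator, so $\rk$ is additive across $E_1,E_2$ and, equivalently, no circuit meets both blocks. Choose a basis $B=B_1\sqcup B_2$ with $B_i\subseteq E_i$; since $|B|=\rk M(W)=|\fP|$, Hall's theorem supplies a matching $\phi\colon B\to\fP$ saturating $\fP$. If some propagator $d$ straddles, say $a\in V_d\cap E_1$ and $b\in V_d\cap E_2$ with $\phi^{-1}(d)=w\in E_1$, then for a straddling vertex $b\in V_d\cap E_2$ lying outside $B$ the alternating path $b$--$d$--$w$ shows $w$ belongs to the fundamental circuit $C(b,B)$; thus $C(b,B)$ contains $b\in E_2$ and $w\in E_1$, contradicting that $E_1$ is a separator. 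Absent straddling propagators, $P_1=\{p:V_p\subseteq E_1\}$ and $P_2=\{p:V_p\subseteq E_2\}$ partition $\fP$ with $V_{P_1}\cap V_{P_2}=\emptyset$, contradicting connectivity of $G$.

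The \textbf{main obstacle} is the degenerate case of the fundamental-circuit argument, where every vertex of $V_d$ is forced into $B$ (all of $V_d$ behaving as coloops), so no straddling vertex outside $B$ is available. This is exactly where well-definedness of $W$ (Definition \ref{Wilsondef}: $|V_P|\ge|P|+3$ for all $P\subseteq\fP$) must be used: it gives the strong expansion $|\Prop^{-1}(P)|=|V_P|\ge|P|+3$ on the propagator side, which guarantees enough slack in the matchings that no vertex adjacent to a propagator can lie in every basis, and hence rules out the degeneracy. Without this hypothesis the implication genuinely fails (for example two propagators sharing all their vertices would give a connected $G$ but a disconnected, free matroid), so verifying that $|V_P|\ge|P|+3$ forbids such coloops and secures the alternating path is the crux of the proof, with the rank bound \eqref{rankbound} and the circuit criterion of Theorem \ref{circuitcor} as the natural tools.
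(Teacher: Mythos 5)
Your approach is genuinely different from the paper's (which proves the converse by induction on the number of propagators, intersecting the propagator sets attached to the two complementary flats), and the transversal-matroid framing is sound: Theorem \ref{correctmatroid} together with Hall's theorem does identify $M(W)$ with the transversal matroid of the vertex--propagator incidence graph, your reformulation of the right-hand condition as $V_{P_1}\cap V_{P_2}=\emptyset$ is correct, and the easy direction is complete. The problem is in the hard direction: you flag its crux --- the degenerate case in which every straddling vertex of the propagator $d$ on the relevant side lies in the chosen basis $B$ --- and then do not resolve it. Asserting that well-definedness ``guarantees enough slack\dots and hence rules out the degeneracy'' is a description of what must be verified, not a verification; and the specific fact you invoke (no vertex lies in every basis, i.e.\ no coloops) is not sufficient as stated, because the degeneracy concerns one particular basis and matching, and even a basis avoiding a chosen $b\in V_d\cap E_2$ yields no contradiction if the matched partner $w=\phi^{-1}(d)$ also lands in $E_2$.

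The gap is real but small, and it closes as follows. Given a straddling propagator $d$, fix $a\in V_d\cap E_1$ and $b\in V_d\cap E_2$ and choose the basis to avoid \emph{both} from the outset: for every nonempty $Q\subseteq\fP$ well-definedness gives $|V_Q\setminus\{a,b\}|\ge |V_Q|-2\ge |Q|+1>|Q|$, so by the defect form of Hall's theorem there is a full transversal $B$ of $\fP$ contained in $[n]\setminus\{a,b\}$, i.e.\ a basis missing both $a$ and $b$. Now whichever block contains $w=\phi^{-1}(d)$, the avoided vertex on the \emph{other} side of the separation is outside $B$ and adjacent to $d$, so your length-two alternating path produces a fundamental circuit meeting both $E_1$ and $E_2$, which is the contradiction you want. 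With this substitution the argument goes through; as written, the proof stops exactly at the step where the hypothesis $|V_P|\ge|P|+3$ has to do its work.
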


\begin{proof}
By Definition \ref{matroiddefs}, we see that the matroid $M(W)$ is disconnected if and only if $V_{\fP}$ can be partitoned into flats. Therefore, if such a partition of propagator flats exists, $M(W)$ is disconnected.

We prove the converse by induction on number of propagators. Suppose $W = (P, V_P)$ where $P$ has only one propagator. The matroid $M(W)$ is connected, by direct calculation. Furthermore, it is impossible to to partition $P$ as above. Suppose $W= (P, V_P)$ where $P$ has two propagators. If the matroid $M(W)$ is disconnected, then there exists flats $F_1$ and $F_2$ each of rank one that are mutually independent. This implies that, for $P = \{p_1, p_2\}$, $F_1 = V_{p_1} $ and $F_2 = V_{p_2}$.

Suppose, the theorem holds for all Wilson loop diagrams of the form $W = (\fP, V_{\fP})$ with $|\fP| < k$ propagators. Consider any disconnected Wilson loop diagram $W = (\fP, V_{\fP})$ with $|\fP| = k$ such that there does not exists a partition of $\fP$, $P_1 \amalg P_2 = \fP$, with $M(W) = M(W)|_{F(P_1)} \oplus M(W)|_{F(P_2)}$. Then write $M(W) = M(W)|_{F_1} \oplus M(W)|_{F_2}$, where $F_1$ and $F_2$ are two flats of $W$ that partition $V(P)$. Write $P_1 = \Prop(F_1)$, $P_2 = \Prop(F_2)$, and $P = P_1 \cap P_2$. Since $F_1$ and $F_2$ are both flats, and $\rk M(W) = k$, $|P| <k$. By induction, the Wilson loop diagram $(P, V_P)$ is not connected, since, by hypothesis, the set $P$ cannot be partitioned to form a partition of the set $V_P$.

The diagram $(P, V_P)$ is a subdiagram of $W$. If the set $B$, a basis of $(P, V_P)$ cannot be divided into mutually independent subsets, then, by Proposition \ref{indepincreasecor}, neither can $B$ as an independent set of $M(W)|V_P$. Therefore, $M(W)|V_P$ is not disconnected. Since the set $V_P$ intersects both flats, $F_1 \cap V_P \neq \emptyset$, and $F_2 \cap V_P \neq \emptyset$, this implies that $F_1$ and $F_2$ are not mutually independent. Therefore, $M(W)$ is not disconnected.
\end{proof}

We proceed with examples of disconnected Wilson loop diagrams.

\begin{eg} \label{decomposedeg}
Here are two examples of disconnected Wilson loop diagrams. The first is the diagram from Example \ref{disconnected3eg}
\bas W = (\{(e_2, e_4),(e_4, e_7), (e_5,e_7)\},[8]) =
{\begin{xy}
(-5,10) *{\bullet},
(-6, 10); (6, 10) **{\dir{-}},
(-4, 11); (-11, 4) **{\dir{-}},
(-10, 6); (-10, -6) **{\dir{-}},
(-11, -4); (-4, -11) **{\dir{-}},
(-6, -10); (6, -10) **{\dir{-}},
(4, 11); (11, 4) **{\dir{-}},
(10, 6); (10, -6) **{\dir{-}},
(11, -4); (4, -11) **{\dir{-}},
(-10, 0); (0, -10) **@{~},
(2, -10); (7, 8) **@{~},
(8, -7); (8, 7) **@{~},
\end{xy}} \;. \eas In this case, $[n]$ can be partitioned into $F(\emptyset) = [1]$ and $F(\fP) = [2,8]$. Consisder another graph,
\bas (\{(e_2, e_6),(e_4, e_8)\},[8]) =
{\begin{xy}
(-5,10) *{\bullet},
(-6, 10); (6, 10) **{\dir{-}},
(-4, 11); (-11, 4) **{\dir{-}},
(-10, 6); (-10, -6) **{\dir{-}},
(-11, -4); (-4, -11) **{\dir{-}},
(-6, -10); (6, -10) **{\dir{-}},
(4, 11); (11, 4) **{\dir{-}},
(10, 6); (10, -6) **{\dir{-}},
(11, -4); (4, -11) **{\dir{-}},
(-10, 0); (10, 0) **@{~},
(0, -10); (0, 10) **@{~},
\end{xy}} \;. \eas In this case, write $\fP = \{p_h, p_v\}$, with $h$ and $v$ for horizontal and vertical. Then $F(p_h) = [2,3] \cup [6,7]$ and $F(p_v) = [4,5] \cup [8,1]$.

In the first graph, by identifying the cyclic set $[2,8]$ with $[7]$, we may say that the connected components of $W$ are $(\emptyset, [1])$ and $(\fP, [7])$, where \bas (\emptyset, [1]) = {{\begin{xy}
(0, 0) *{\bullet}= "a",
"a"; "a" **\crv{+(-5,3)&+(5,7)&+(5,-7)},
\end{xy}}} \quad ; \quad W(\fP, [7]) = {\begin{xy}
(-10, 4.5) *{\bullet},
(-11, 4); (1, 11) **{\dir{-}},
(-10, 6); (-10, -6) **{\dir{-}},
(-11, -4); (-4, -11) **{\dir{-}},
(-6, -10); (6, -10) **{\dir{-}},
(-1, 11); (11, 4) **{\dir{-}},
(10, 6); (10, -6) **{\dir{-}},
(11, -4); (4, -11) **{\dir{-}},
(-10, 0); (0, -10) **@{~},
(2, -10); (2, 9) **@{~},
(8, -7); (8, 6) **@{~},
\end{xy}} \eas

On the level of matroids, it is easy to see that the corresponding matroid, as defined in Example \ref{Wilsontomatroideg}, is disconnected. Namely, it can be realized as \bas \cM(W(2,4, 4, 7, 5, 7)(8)) = \left(
\begin{array}{cccccccc}
0 & x_{1,2} & x_{1,3} & x_{1,4} & x_{1,5} & 0 & 0 & 0 \\
0 & 0 & 0 & x_{2,4} & x_{2,5} & 0 & x_{2,6} & x_{2,7} \\
0 & 0 & 0 & 0 & x_{3,5} & x_{3,6} & x_{3,7} & x_{3,8} \\
\end{array}
\right) \eas which can be written \bmls \cM(W(2,4, 4, 7, 5, 7)(8)) = \left(\begin{array}{c} 0\\ 0\\0 \end{array} \right) \oplus \left(
\begin{array}{ccccccc}
x_{1,2} & x_{1,3} & x_{1,4} & x_{1,5} & 0 & 0 & 0 \\
0 & 0 & x_{2,4} & x_{2,5} & 0 & x_{2,6} & x_{2,7} \\
0 & 0 & 0 & x_{3,5} & x_{3,6} & x_{3,7} & x_{3,8} \\
\end{array}
\right) \\ = \cM((\emptyset, [1])(Z_*, Z_1)) \oplus \cM(W(\fP, [7])(\cZ_*|V_{\fP}^*)) \;.\emls
\end{eg}

There is a corollary to Proposition \ref{disconnectedloopprop} that is easier to state.

\begin{cor}
A Wilson loop diagram, $W=(\fP,[n])$ is disconnected if and only if there is a subset $P \subseteq \fP$ such that $F(P) = V_P \subsetneq [n]$. Equivalently, if $P = \Prop(V_P)$.
\label{disconnectedloopcor}\end{cor}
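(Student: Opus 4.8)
The plan is to deduce this from Theorem~\ref{disconnectedloopprop} and Lemma~\ref{emptyflatprop}, after first recording the elementary reformulation of the stated condition. The key algebraic fact I would establish at the outset is that, for any $P \subseteq \fP$, the three conditions $F(P) = V_P$, $V_P \cap V_{P^c} = \emptyset$, and $P = \Prop(V_P)$ are equivalent. The first equivalence is immediate from $F(P) = V_P \setminus V_{P^c}$, since $V_P \setminus V_{P^c} = V_P$ exactly when $V_P \cap V_{P^c} = \emptyset$. For the second, note that $p \in P$ forces $\emptyset \neq V_p \subseteq V_P$, so $P \subseteq \Prop(V_P)$ always; the reverse inclusion $\Prop(V_P) \subseteq P$ says precisely that no propagator of $P^c$ meets $V_P$, i.e. $V_{P^c} \cap V_P = \emptyset$. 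This disposes of the ``equivalently'' clause and lets me work throughout with the symmetric condition $V_P \cap V_{P^c} = \emptyset$ on a nonempty $P \subseteq \fP$, the trivial choice $P = \emptyset$ being excluded since it satisfies $F(P) = V_P$ vacuously.

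For the forward direction I would split on whether $W$ has isolated vertices. If $V_\fP \subsetneq [n]$, then $W$ is disconnected by Lemma~\ref{emptyflatprop}, and taking $P = \fP$ gives $F(\fP) = V_\fP \setminus V_\emptyset = V_\fP \subsetneq [n]$, so the condition holds. If instead $V_\fP = [n]$, then $M(W) = M(W')$ for the core diagram $W' = (\fP, V_\fP)$, so $M(W')$ is disconnected; Theorem~\ref{disconnectedloopprop} then furnishes a partition $\fP = P_1 \amalg P_2$ whose propagator flats $F(P_1), F(P_2)$ partition $[n]$. Because $F(P_i) = V_{P_i} \setminus V_{P_i^c}$, the union $F(P_1) \cup F(P_2)$ is the symmetric difference $V_{P_1} \triangle V_{P_2}$, and the requirement that it cover $[n] = V_{P_1} \cup V_{P_2}$ forces $V_{P_1} \cap V_{P_2} = \emptyset$; hence $F(P_1) = V_{P_1}$. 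As the partition coming from a disconnected matroid is nontrivial, both $P_1, P_2$ are nonempty, so $V_{P_1} \subsetneq [n]$, and $P = P_1$ is the desired subset.

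For the converse, suppose $\emptyset \neq P \subseteq \fP$ with $F(P) = V_P \subsetneq [n]$, equivalently $V_P \cap V_{P^c} = \emptyset$. If $P = \fP$ then $V_\fP = V_P \subsetneq [n]$, so $W$ has isolated vertices and is disconnected by Lemma~\ref{emptyflatprop}. Otherwise $P^c \neq \emptyset$, and in the core diagram $W' = (\fP, V_\fP)$ the sets $F(P) = V_P$ and $F(P^c) = V_{P^c}$ are disjoint (by $V_P \cap V_{P^c} = \emptyset$) and cover $V_\fP = V_P \cup V_{P^c}$, so they partition the ground set of $W'$. Theorem~\ref{disconnectedloopprop} then shows $M(W')$ is disconnected, whence $M(W)$ is disconnected as well, since it equals $M(W')$ when $V_\fP = [n]$ and otherwise merely splits off the isolated-vertex summand $M(W)|F(\emptyset)$.

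The main obstacle I anticipate is bookkeeping rather than mathematics: Theorem~\ref{disconnectedloopprop} is phrased over the ground set $V_\fP$ while this corollary is phrased over $[n]$, so the argument must consistently track the possible isolated vertices (the $F(\emptyset)$ summand) and route the isolated-vertex case through Lemma~\ref{emptyflatprop} rather than Theorem~\ref{disconnectedloopprop}. A secondary point requiring care is ruling out the degenerate choice $P = \emptyset$ and, in the forward direction, confirming that the partition supplied by a disconnected matroid has both blocks nonempty, which is what makes the equivalence genuine rather than vacuous.
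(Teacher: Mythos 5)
Your proposal is correct and follows essentially the same route as the paper's proof: both reduce to Theorem~\ref{disconnectedloopprop} via the identity $F(P)=V_P\setminus V_{P^c}$ (equivalently $P=\Prop(V_P)$) and treat the isolated-vertex summand $F(\emptyset)$ separately. Your write-up is in fact somewhat more careful than the paper's, which compresses the existence of the complementary partition into a single asserted biconditional, but no new idea is involved.
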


\begin{proof}
If $F(\emptyset) \neq \emptyset$ in $W=(\fP,n)$, then $M(W)$ is disconnected. This is equivalent to the statement that $F(\fP) = V_{\fP}$, and $F(\fP) \amalg F(\emptyset) = [n]$.

We consider Wilson loops such that $F(\emptyset) = \emptyset$. By construction of $F(P)$, $\Prop(F(P)) = P$. Thereofore $F(P) = V_P$ if and only if $P = \Prop(V_P)$

The flat defined by a set of propagators $P$, $F(P)$ is the set of vertices that are adjacent only to those propagators. This is the entire set of vertices adjacent to those propagators if and only if there are no other propagators adjacent to those vertices, i.e. if $P = \Prop(V_P)$.

The required $P \subset \fP$ such that $F(P) = V_P$ exists if and only if $F(P^c) = V_{P^c}$. Therefore $P, P^c$ form the partition of $\fP$ required by Proposition \ref{disconnectedloopprop}.
\end{proof}

We can now combine this classification of connected Wilson loops diagrams with Theorem \ref{disconnpositroid} to give a classification of which disconnected Wilson loop graphs with respect to positroids.

\begin{thm} \label{noncrosspartconn}
Let $W = (\fP, [n])$ be a disconnected Wilson loop diagram, with a partition \bas [n] = (\amalg_{i=1}^k F(P_i))\amalg F(\emptyset) \;.\eas The Wilson loop $W$ is admissible if an only if each connected component, $W_i = (P_i, F(P_i))$, (such that $P_i \neq \emptyset$), is admissible, and the sets of propagators $P_i$ do not cross each other.
\end{thm}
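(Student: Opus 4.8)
The plan is to reduce the statement to the matroid-theoretic decomposition result, Theorem \ref{disconnpositroid}, and then translate its two hypotheses back into the language of Wilson loop diagrams. Throughout I would use the operating equivalence of this subsection (cf. Definition \ref{admissibledfn}): a well-defined Wilson loop diagram is admissible exactly when its matroid is a positroid. One implication is immediate from the definition, since any non-negative realization $\cM(W(\cZ_*))$ exhibits $M(W)$ as a positroid; the converse is supplied, component-wise, by the connected results such as Corollary \ref{noncrosspos}. Thus it suffices to prove that $M(W)$ is a positroid if and only if each $M(W_i)$ is a positroid and the $P_i$ are pairwise non-crossing.

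First I would record the matroid decomposition. Since $W$ is disconnected with the stated partition, Theorem \ref{disconnectedloopprop} together with Lemma \ref{emptyflatprop} gives
\bas
M(W) = \left( \bigoplus_{i=1}^k M(W_i) \right) \oplus \bigoplus_{v \in F(\emptyset)} L_v ,
\eas
where $M(W_i)$ is the matroid of the component $W_i = (P_i, F(P_i))$ on ground set $F(P_i)$, and each $L_v$ is the rank-$0$ matroid (a single loop) on $\{v\}$. Each $W_i$ is a subdiagram of the well-defined $W$, hence well defined (as in Corollary \ref{subdiagramcor}), so $M(W_i)$ has full rank $|P_i|$ and this is genuinely the decomposition of $M(W)$ into connected components; moreover $F(P_i) = V_{P_i}$ for every nontrivial block (Corollary \ref{disconnectedloopcor}). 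Applying Theorem \ref{disconnpositroid} then shows $M(W)$ is a positroid if and only if the ground sets $\{F(P_i)\} \cup \{\{v\} : v \in F(\emptyset)\}$ form a non-crossing partition of $[n]$ and each summand is a positroid. The loops $L_v$ are trivially positroids, and a singleton block can never interleave another block, so the contributions of $F(\emptyset)$ are automatic. Hence $M(W)$ is a positroid iff every $M(W_i)$ is a positroid and the blocks $\{F(P_i)\}$ are pairwise non-crossing in the sense of Definition \ref{noncrossingpartitiondfn}.

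The remaining and crucial step is the purely combinatorial equivalence, which I expect to be the main obstacle: for the components with $P_i \neq \emptyset$ (so $F(P_i) = V_{P_i}$ with $|V_{P_i}| \geq 4$), the blocks $\{F(P_i)\}$ form a non-crossing partition if and only if no propagator of $P_i$ crosses a propagator of $P_j$ for $i \neq j$, in the sense of Definition \ref{crossingpropagatordef}. The easy direction is that if two blocks do not interleave on the cycle, then propagators confined to them have non-interleaving endpoints and cannot cross. The hard direction is the converse: suppose $F(P_i)$ and $F(P_j)$ interleave, say there are cyclically ordered vertices $a,b,c,d$ with $a,c \in F(P_i)$ and $b,d \in F(P_j)$. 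Here I would use connectivity of each component essentially: by Theorem \ref{disconnectedloopprop} the propagators of the connected $P_i$ cannot be split into groups supported on vertex-disjoint arcs, so, drawn as chords inside the polygon, they form a connected set meeting the boundary at the edges of $V_{P_i}$, and likewise for $P_j$. Since the boundary contacts $a,c$ of $P_i$ and $b,d$ of $P_j$ interleave, a Jordan-curve argument forces the two connected chord sets to meet, and an intersection point is precisely a crossing between a propagator of $P_i$ and one of $P_j$.

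Combining the three steps finishes the argument: the matroid decomposition and Theorem \ref{disconnpositroid} reduce positroidness of $M(W)$ to positroidness of the components together with non-crossing of the blocks $\{F(P_i)\}$; the combinatorial equivalence rewrites the latter as non-crossing of the propagator sets $P_i$; and the admissibility-positroid correspondence applied to $W$ and to each $W_i$ converts the positroid statements into admissibility statements, which is exactly the assertion of the theorem. The delicate point to get right will be making the chord-connectivity and planarity argument in the last step fully rigorous, in particular handling chords that share an edge or an endpoint vertex rather than crossing transversally.
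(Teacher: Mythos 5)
Your proposal is correct and follows essentially the same route as the paper: both reduce admissibility to the positroid condition, invoke Theorem \ref{disconnpositroid} on the connected-component decomposition, dispose of $F(\emptyset)$ via the singleton rank-$0$ summands, and then argue that interleaving blocks $F(P_i)$, $F(P_j)$ force crossing propagators by using connectivity of each component and a planarity/intersection argument. The "delicate point" you flag at the end is treated no more rigorously in the paper, which simply asserts that the component polygons intersect if and only if the propagators cross.
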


\begin{proof}
This theorem follows from Theorem \ref{disconnpositroid}. The Wilson loop diagram $W = (\fP, n)$ is admissible if and only if $M(W)$ is a positroid. If the disconnected matroid $M(W)$ is a positroid, by Theorem \ref{noncrosspartconn} the ground sets for each connected component forms a non-crossing partition of $[n]$. Since $W$ is disconnected, by hypothesis, and by Proposition \ref{disconnectedloopprop}, $\{F(P_i) \cup F(\emptyset)\}$ partition $[n]$. It remains to check that a crossing partition on $[n]$ is equivalent to sets of crossing propagators.

The matroid $M|F(\emptyset)$ can be written as the direct sum of matroids of rank $0$, with ground sets consisting of one element \bas M|F(\emptyset) = \bigoplus_{v \in F(\emptyset)} (v, \fB =\emptyset) \;.\eas By definition \ref{noncrossingpartitiondfn}, ground sets of $1$ element cannot be a part of a non-crossing partition. Therefore, we may ignore $F(\emptyset)$, or, equivalently, assume $F(\emptyset) = \emptyset$.

If $\{F(P_1) \ldots F(P_k)\}$ is a non-crossing partition of $[n]$, by Definition \ref{noncrossingpartitiondfn}, the vertices of $F(P_i)$ define non-crossing polygons by connecting the vertices of each partition set. By Corrollary \ref{disconnectedloopcor}, $F(P_i) = V(P_i)$. Therefore, these polygons fully contain all propagators in the set $P_i$. Therefore, the polygons on $F(P_i)$ are exactly the boundary polygons of the connected components of $W$, $W_i$.

Since each $W_i$ is connected, and their vertex sets disjoint, the polygons $W_i$ intersect if and only if the propagators cross. Therefore, $F(P_i)$ form a non-crossing partition of $[n]$ if and only if the sets of propagators do not cross.
\end{proof}

We have yet to graphically define which connected Wilson loops are admissible. However, this theorem shows that if the propagators of one connected admissible Wilson loop subdiagram intersect the propagators of another admissible Wilson loop subdiagram, then the composite diagram is not admissible.

\begin{eg}
The disconnected diagram from Example \ref{decomposedeg}
\bas (\{(e_2, e_6),(e_4, e_8)\},8) =
{\begin{xy}
(-5,10) *{\bullet},
(-6, 10); (6, 10) **{\dir{-}},
(-4, 11); (-11, 4) **{\dir{-}},
(-10, 6); (-10, -6) **{\dir{-}},
(-11, -4); (-4, -11) **{\dir{-}},
(-6, -10); (6, -10) **{\dir{-}},
(4, 11); (11, 4) **{\dir{-}},
(10, 6); (10, -6) **{\dir{-}},
(11, -4); (4, -11) **{\dir{-}},
(-10, 0); (10, 0) **@{~},
(0, -10); (0, 10) **@{~},
\end{xy}}  \eas is not admissible.

By inspection, one sees that composite diagrams $\begin{xy}
(-7,7) *{\bullet},
(-8, 7); (8, 7) **{\dir{-}},
(-7, 8); (-7, -8) **{\dir{-}},
(-8, -7); (8, -7) **{\dir{-}},
(7, 8); (7, -8) **{\dir{-}},
(-7, 0); (7, 0) **@{~},
\end{xy}$ and $\begin{xy}
(-7,7) *{\bullet},
(-8, 7); (8, 7) **{\dir{-}},
(-7, 8); (-7, -8) **{\dir{-}},
(-11, -7); (8, -7) **{\dir{-}},
(7, 8); (7, -8) **{\dir{-}},
(0, -7); (0, 7) **@{~},
\end{xy}$ are both admissible. However, since the propagators cross, this diagram is not admissible.
\end{eg}

\begin{rmk}
Henceforth, we only consider connected Wilson loops unless specifically stated otherwise.
\end{rmk}

\subsubsection{Cyclic flats for Wilson loops\label{Wilsoncyclicflats}}

At the end of the day, we are interested in flacets of connected Wilson loops, which define non-negative Grassmannians. By Theorem \ref{flacetcondition}, flacets are defined by cyclic flats with certain conditions imposed upon them. Section \ref{Wilsonflacets} defines the flacets. In this section, we show that cyclic flats are propagator flats, $F(P)$, satisfying $|F(P)| \geq |P|$.

\begin{dfn}
Let $F$ be a flat of $M(W)$. We say $F$ is a propagator flat if $F = F(P)$, for some subset of propagators $P \in \fP$.
\end{dfn}

Recall that a cyclic flat is one that can be written as a union of circuits. We begin with a result on the rank of cyclic sets.

\begin{lem} \label{cyclicrank}
If $C$ is a cyclic set in $M(W)$, then $\rk (C) = |\Prop(C)|$.
\end{lem}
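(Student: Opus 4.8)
The plan is to prove the two inequalities $\rk(C)\le |\Prop(C)|$ and $\rk(C)\ge|\Prop(C)|$ separately. The first is immediate from \eqref{rankbound}. For the reverse I would exploit that, by Theorem \ref{correctmatroid}, $M(W)$ is the transversal matroid of the bipartite incidence between the vertices $[n]$ and the propagators $\fP$ (with $v$ adjacent to $p$ exactly when $v\in V_p$): a vertex set is independent precisely when it satisfies Hall's condition $|U|\le|\Prop(U)|$ for all subsets $U$. Hence $\rk(C)$ is the size of a maximum matching of $C$ into $\fP$, and the defect form of Hall's theorem gives $\rk(C)=|C|-\max_{U\subseteq C} g(U)$, where $g(U):=|U|-|\Prop(U)|$. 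Since $g(C)=|C|-|\Prop(C)|$ is one of the competing values, proving $\rk(C)=|\Prop(C)|$ is equivalent to showing that $C$ itself maximizes $g$ among its subsets.

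The engine of the argument is that $g$ is supermodular. Indeed $\Prop(A\cup B)=\Prop(A)\cup\Prop(B)$ and $\Prop(A\cap B)\subseteq\Prop(A)\cap\Prop(B)$, so $U\mapsto|\Prop(U)|$ is submodular while $U\mapsto|U|$ is modular; their difference $g$ is therefore supermodular. Consequently the maximizers of $g$ over the subsets of $C$ are closed under union, so there is a unique maximal maximizer $U^{*}\subseteq C$. I would then argue by contradiction, assuming $U^{*}\subsetneq C$.

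The key step is to feed the circuits of $C$ into this maximizer. Since $C$ is cyclic, $C=\bigcup_j C_j$ with each $C_j$ a circuit, and by Corollary \ref{circuitrank} together with the circuit size identity $|C_j|=\rk(C_j)+1$ one gets $g(C_j)=|C_j|-|\Prop(C_j)|=1$. Applying supermodularity to $U^{*}$ and $C_j$ and using $g(U^{*}\cup C_j)\le g(U^{*})$ (maximality of $U^{*}$, valid since $U^{*}\cup C_j\subseteq C$) yields $g(U^{*}\cap C_j)\ge g(C_j)=1$. But $U^{*}\cap C_j$ is a subset of the circuit $C_j$; if it were proper it would be independent, whence $|U^{*}\cap C_j|\le|\Prop(U^{*}\cap C_j)|$ and $g(U^{*}\cap C_j)\le 0$, a contradiction. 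Thus $C_j\subseteq U^{*}$ for every circuit, and since these circuits cover $C$ we obtain $C\subseteq U^{*}$, contradicting $U^{*}\subsetneq C$. Hence $U^{*}=C$, so $C$ maximizes $g$ and $\rk(C)=|\Prop(C)|$.

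The main obstacle is recognizing the right potential function and the supermodular-maximizer mechanism. The naive attempts fail: gluing the matchings that each circuit provides onto its own $\Prop(C_j)$ does not work because a union of individually matchable propagator sets need not be matchable, and applying submodularity of the rank directly only reproduces the upper bound we already have. Packaging everything through $g$ is exactly what lets the circuit structure force $U^{*}=C$ in one stroke. As a sanity check and fallback, the disconnected case could instead be handled via Lemma \ref{disconnectedcyclic}, splitting $C$ into cyclic flats with disjoint propagator supports so that both $\rk$ and $|\Prop(\cdot)|$ add, reducing to a connected base case; but the supermodular argument above dispatches all cases uniformly and avoids that induction.
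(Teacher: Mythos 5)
Your proof is correct, but it takes a genuinely different route from the paper's. Both arguments get the upper bound $\rk(C)\le|\Prop(C)|$ from \eqref{rankbound} and both ultimately lean on Corollary \ref{circuitrank} applied to the circuits covering $C$; the difference is in how the lower bound is extracted. The paper argues by contrapositive: if $\rk(C)<|\Prop(C)|$, then by Proposition \ref{indeppropagatorprop} some nonempty set of propagators $Q\subset\Prop(C)$ can be discarded without changing the rank of $C$, and any circuit $C_i$ whose propagator set meets $Q$ would then compute its rank without all of $\Prop(C_i)$, contradicting $\rk(C_i)=|\Prop(C_i)|$ --- a short argument that stays inside the paper's own toolkit but transfers Proposition \ref{indeppropagatorprop} (stated for independent sets) to ranks of arbitrary sets somewhat loosely. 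You instead make explicit that Theorem \ref{correctmatroid} exhibits $M(W)$ as the transversal matroid of the vertex--propagator incidence, invoke the deficiency form of Hall's theorem to write $\rk(C)=|C|-\max_{U\subseteq C}\bigl(|U|-|\Prop(U)|\bigr)$, and then use supermodularity of $g(U)=|U|-|\Prop(U)|$ together with $g(C_j)=1$ for each circuit to force the maximal maximizer to contain every circuit and hence equal $C$. This is airtight and yields the exact rank formula as a bonus, at the cost of importing a classical external result (the deficiency/K\"onig formula for transversal matroid rank) that the paper never states; the ``easy'' inequality $\rk(C)\le|C|-g(U)$ is elementary, but the reverse inequality you need is precisely the hard direction of that theorem, so you should either cite it or note that it requires proof. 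Your observation that the transversal structure is what underlies Theorem \ref{correctmatroid} is worth recording in its own right.
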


\begin{proof}
Since $C$ is a cyclic set, write $C = \cup C_i$, where each $C_i$ is a circuit. By Corollary \ref{circuitrank}, $\rk C_i = |\Prop (C_i)|$. By equation \eqref{rankbound}, $\rk (C) \leq |\Prop(C)|$.

If $\rk (C) < |\Prop(C)|$, by Proposition \ref{indeppropagatorprop} there exists a subset $Q \subset \Prop(C)$ such that the rank of $C$ is the same in the matroid $M(W|\Prop(C))$ as in the matroid $M(W|Q)$. Consider a circuit $C_i$ such that $Q \cap \Prop(C_i) \neq \emptyset$. Again by proposition \ref{indeppropagatorprop}, the rank of $C_i$ is the same in the matroid $M(W|\Prop(C_i))$ as in the matroid $M(W|\Prop(C_i)\setminus Q)$. But by Corollary \ref{circuitrank}, $\rk C_i = |\Prop C_i|$. Therefore, any set $C_i$ properly intersectin $Q$ is not a circuit. Therefore, $C$ is not a cyclic set.
\end{proof}

Finally we show that if a flat is not a propagator flat, such that $|F(P)| > |P|$, then it is not cyclic.

\begin{lem} Let $F$ be a cyclic flat. Then there is a $P \subset \fP$, such that $F = F(P)$, and $|F| > |P|$.  \label{nonpropnoncyclic}\end{lem}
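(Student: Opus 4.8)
The plan is to take $P = \Prop(F)$ as the candidate propagator set and show that the cyclic flat $F$ coincides with the propagator flat $F(P)$; the cardinality inequality $|F| > |P|$ then drops out from the fact that a cyclic flat is a dependent set. Throughout I work in the connected setting, where $F(\emptyset) = \emptyset$, so that Lemma \ref{propagatorflat} certifies $F(P)$ itself (not merely $F(P) \cup F(\emptyset)$) as a flat.

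First I would dispose of the easy containment $F \subseteq F(P)$. Every vertex $v \in F$ satisfies $\Prop(\{v\}) \subseteq \Prop(F) = P$, so $v$ is adjacent only to propagators in $P$, which is exactly the condition for $v \in F(P)$. This direction uses nothing beyond the definitions of $\Prop$ and of the propagator flat.

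The reverse containment $F(P) \subseteq F$ is the crux of the argument, and I expect it to be the main obstacle. Suppose, toward a contradiction, that some $v \in F(P)$ lies outside $F$. Because $v \in F(P)$ we have $\Prop(\{v\}) \subseteq P = \Prop(F)$, and since $\Prop$ of a union is the union of the $\Prop$'s, this gives $\Prop(F \cup v) = \Prop(F)$. Now I invoke two rank facts in tandem: since $F$ is a cyclic set, Lemma \ref{cyclicrank} gives $\rk(F) = |\Prop(F)|$, while the general bound \eqref{rankbound} yields $\rk(F \cup v) \leq |\Prop(F \cup v)| = |\Prop(F)| = \rk(F)$. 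Combined with monotonicity $\rk(F \cup v) \geq \rk(F)$, this forces $\rk(F \cup v) = \rk(F)$, directly contradicting the defining property of a flat, namely that $\rk(F \cup e) = \rk(F) + 1$ for every $e \notin F$. Hence no such $v$ exists, $F(P) \subseteq F$, and therefore $F = F(P)$ is a propagator flat. The delicate point here is precisely the interplay between the rank-saturation of cyclic sets and the strict rank increase forced by flatness, so I would make sure that bookkeeping is airtight.

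Finally, for the inequality $|F| > |P|$: a (nonempty) cyclic flat contains at least one circuit and is therefore a dependent set, so it is contained in no basis and $\rk(F) < |F|$. Combining this with $\rk(F) = |\Prop(F)| = |P|$ established above gives $|F| > |P|$, as required. I would remark that the degenerate case $F = \emptyset$ is excluded, since a cyclic flat is by convention a union of (one or more) circuits and hence nonempty.
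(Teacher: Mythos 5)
Your proof is correct and follows essentially the same route as the paper: both arguments hinge on combining Lemma \ref{cyclicrank} with the bound \eqref{rankbound} and the defining rank-increase property of flats to force $F = F(\Prop(F))$. The only cosmetic difference is that you obtain $|F| > |P|$ directly from the dependence of a nonempty cyclic set, whereas the paper first rules out $|F| \leq |\Prop(F)|$ by a short case analysis and then runs the closure argument; your organization of the same ideas is, if anything, the cleaner one.
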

\begin{proof}
If $|F| < |\Prop F|$, then, by equation \eqref{rankbound}, $\rk (F) \leq |F| < |\Prop F|$. By Lemma \ref{cyclicrank}, $F$ is not cyclic.

If $|F| = |\Prop F|$, there are two possible cases. The first is that $\rk (F) =  |F|$, which implies that $F$ is an independent set. By definition, $F$ is not cyclic. Otherwise $\rk (F) \leq |\Prop(F)|$. Again, by Lemma \ref{cyclicrank}, $F$ is not cyclic.

Therefore, assume that $|F| > |\Prop(F)|$.

For any $V \in [n]$, write  $\textrm{cl}(V)$ to indicate the smallest flat of $M(W)$ containing $V$. In standard matroidal language, this is the closure of the set $V$. Explicitly, \bas \textrm{cl}(V) = \{e \in [n] | \rk (V \cup e) = \rk (V) \} \;. \eas

Suppose there does not exists a set $P \subset \fP$, such that $F = F(P)$. Then $\textrm{cl}(V) \subsetneq F(\Prop (V))$. In particular, there exists $e \in F(\Prop (V)) \setminus \textrm{cl}(V)$. By definition of $\textrm{cl}(V)$, \bas \rk (\textrm{cl}(V) \cup e) = \rk (\textrm{cl}(V)) + 1 \;.\eas By equation \eqref{rankbound}, \bas  \rk (\textrm{cl}(V)) + 1 \leq |\Prop(V)|\;.\eas Therefore, $\rk (\textrm{cl}(V)) < |\Prop(V)|$. By Theorem \ref{cyclicrank}, $\textrm{cl}(V))$ is not a cyclic flat.
\end{proof}

\subsubsection{Flacets for Wilson loops\label{Wilsonflacets}}
We are now ready to define which flats of $M(W)$ are flacets. By Theorem \ref{flacetcondition} and Lemma \ref{cyclicrank}, we only need to consider propagator flats such that $|F(P)| < |P|$, and $\rk F(P) = |P|$ to identify the flacets of $M(W)$.

\begin{dfn}
If $F$ is a propagator flat, with $F = F(P)$, $|F(P)| < |P|$, and $\rk F(P) = |P|$, we say that $F$ is a propagator flat of maximal rank.
\end{dfn}

In this section, we finally identify the flacets of a  Wilson loop diagram. We find that the flacets depend purely on the propagator structure of the Wilson loop diagrams. Namely, we identify which propagators flats of maximal rank satisfy the conditions that $M(W)|_{F(P)}$ and $M(W)/F(P)$ are connected.

We begin by studying the restriction. Unlike before, where we have identified the Wilson loop diagram to its matroid, we must make a distinction when studying restrictions. This is because the restriction of a connected matroid $M(W)$ is not necessarily a matroid associated to a Wilson loop diagram.

Also, the notation for restriction on Wilson loop diagrams and their matrices do not correspond. In particular, when we write $M(W)|E$, we restrict the \emph{matroid} to a subset of vertices. However, when we write $W|P$, we restrict the \emph{diagram} to a subset of propagators. This latter notation becomes important again when studying contractions.

First, we consider the connectivity of restrictions on $M(W)$ by propagator flats of maximal rank.

\begin{lem} For any propagator flat of maximal rank, $F(P)$, if the restriction $M(W)|F(P)$ is disconnected then either one can find a partition of $F(P)$, $F_1 \amalg F_2 = F(P)$ that defines a partition of propagators, $\Prop (F_1) \amalg \Prop (F_2) = P $, or $F(P)$ is not a cyclic flat. Furthermore, if there is a patrtition of propagators $P_1 \amalg P_2 = P$, such that their flats $F_1 = F(P_1)$ and $F_1 = F(P_2)$ partition $F(P)$ then $M(W)|F(P)$ is disconnected.
\end{lem}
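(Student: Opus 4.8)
The plan is to prove the two implications separately, working entirely at the level of the matroid $M(W)$: we cannot invoke Theorem~\ref{disconnectedloopprop} directly, since the restriction $M(W)|F(P)$ need not be the matroid of a Wilson loop diagram. Throughout write $F = F(P)$ and recall that, by hypothesis, $F$ is a propagator flat of maximal rank, so $\rk(F) = |P|$, and that $\Prop(F) \subseteq P$ by the very definition of $F(P)$.

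For the forward implication, suppose $M(W)|F$ is disconnected. If $F$ is \emph{not} a cyclic flat, the second alternative of the conclusion already holds, so assume $F$ is cyclic. Then I would write $M(W)|F = M(W)|F_1 \oplus M(W)|F_2$ for its two nonempty, mutually independent components; since $F$ is a cyclic set, Lemma~\ref{disconnectedcyclic} guarantees that $F_1$ and $F_2$ are themselves cyclic flats with $F = F_1 \amalg F_2$. The crux is to show that this partition of $F$ induces a partition of $P$. Because $F$ is cyclic, Lemma~\ref{cyclicrank} gives $|\Prop(F)| = \rk(F) = |P|$, which together with $\Prop(F) \subseteq P$ forces $\Prop(F) = P$; applying $\Prop(A\cup B) = \Prop(A)\cup\Prop(B)$ to $F = F_1 \cup F_2$ then yields $\Prop(F_1) \cup \Prop(F_2) = P$. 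Rank is additive over a direct sum, so $\rk(F_1) + \rk(F_2) = \rk(F) = |P|$, and since each $F_i$ is cyclic, Lemma~\ref{cyclicrank} gives $\rk(F_i) = |\Prop(F_i)|$. Hence $|\Prop(F_1)| + |\Prop(F_2)| = |P| = |\Prop(F_1)\cup\Prop(F_2)|$, and inclusion--exclusion forces $\Prop(F_1) \cap \Prop(F_2) = \emptyset$. Thus $\Prop(F_1) \amalg \Prop(F_2) = P$ is the required propagator partition.

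For the converse, suppose $P = P_1 \amalg P_2$ with $F(P_1) \amalg F(P_2) = F$ (both parts nonempty). I would show $F(P_1)$ and $F(P_2)$ are mutually independent in $M(W)|F$, exhibiting $M(W)|F = M(W)|F(P_1) \oplus M(W)|F(P_2)$ and hence disconnection. The key point is that $\Prop(F(P_i)) \subseteq P_i$ by the definition of a propagator flat, so $\Prop(F(P_1)) \cap \Prop(F(P_2)) = \emptyset$. Given independent sets $S_i \subseteq F(P_i)$ and any $U \subseteq S_1 \cup S_2$, set $U_i = U \cap F(P_i)$; then $\Prop(U) = \Prop(U_1) \amalg \Prop(U_2)$ is a disjoint union, and the independence of each $S_i$ via the criterion of Theorem~\ref{correctmatroid} gives $|U_i| \le |\Prop(U_i)|$. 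Summing yields $|U| \le |\Prop(U)|$, so no subset $U$ violates the basis criterion \eqref{basisset}; hence $S_1 \cup S_2$ is independent and the two flats are mutually independent.

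The main obstacle is the forward direction: Lemma~\ref{disconnectedcyclic} only delivers a partition into cyclic flats, and the real work is matching it to a partition of the propagator set. I expect the rank bookkeeping to be the delicate part --- first extracting $\Prop(F) = P$ from maximal rank together with cyclicity, and then forcing $\Prop(F_1) \cap \Prop(F_2) = \emptyset$ by inclusion--exclusion --- compounded by the care required because $M(W)|F$ is not itself a Wilson-loop matroid, so every step must be argued matroidally rather than diagrammatically.
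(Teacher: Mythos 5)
Your proof is correct and follows the paper's strategy: both directions rest on the same ingredients (Lemma~\ref{disconnectedcyclic}/Corollary~\ref{partialcyclicequiv} to split a cyclic $F(P)$ into cyclic flats, Lemma~\ref{cyclicrank} to convert ranks into propagator counts, and the maximal-rank hypothesis $\rk F(P)=|P|$). The only difference is organizational: where the paper forces disjointness of the two propagator sets by a case analysis on the inequalities $|P_i\setminus Q|\leq \rk F_i\leq |P_i|$ with $Q=P_1\cap P_2$, you get it more cleanly from rank additivity over the direct sum plus inclusion--exclusion, and your converse spells out the mutual independence via the criterion of Theorem~\ref{correctmatroid} rather than asserting it.
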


\begin{proof}
The first part of this lemma is a direct application of corrollary \ref{partialcyclicequiv}. If $M(W)|F(P)$ is disconnected, then either the flat $F(P)$ is not a cyclic set, or $F(P)$ can be partitioned into cyclic flats. The latter implies that $F(P)$ can be partitioned into propagator flats $F(P_1)$ and $(P_2)$. Suppose $P_1$ and $P_2$ do not partition $P$. In particular, $Q = P_1 \cap P_2$.  Then, for $i \in \{1, 2\}$, \ba |P_i \setminus Q | \leq \rk F_i \leq |P_i| \;. \label{rankineq} \ea

First we consider the case when at least one of the inequalities is an equality for each $F_i$. Indeed, since $\rk F(P) = |P|$, the first inequality of equation \eqref{rankineq} is an equality for $F_1$, say, if and only if the second inequality is an equality for $F_2$. Therefore, considering an equality for one set $F_i$ is the same as considering an equality for both sets.

The first inequality in equation \eqref{rankineq} an equality if and only if $F_i$ is an independent set on $W|(P_i \setminus Q_i)$ (in which case $F(P)$ cannot be cyclic) or if $Q = \emptyset$. Therefore, the theorem holds when, for some $F_i$, $\rk F_i = |P_i|$.

Consider the case when, neither inequalities in equation \eqref{rankineq} are equalities. Then, $\rk F_i < |P_i|$. By Lemma \ref{cyclicrank}, this implies that neither $F_1$ nor $F_2$ is cyclic. Corrollary \ref{partialcyclicequiv} then implies that $F(P)$ is not cyclic.

As a partial converse, if $P$ can be partitioned into two sets $P_1$ and $P_2$ such that $P_i = \Prop (F_i)$, each $F_i$ is the propagator flat $F(P_i)$, which are, by virtue of being flats, mutually independent. Thus $M(W)|F(P)$ is not connected.
\end{proof}

In particular, if $F(P)$, a propagator flat of maximal rank cannot be written as the union of two disjoint propagator flats, the restriction $M(W)|F(P)$ is connected.

We next consider connectivity of the contracted matroid, $M(W)/F(P)$. Unlike the restriction operator, the contraction of a matroid associated to a Wilson loop by a propagator flat of maximal rank gives another matroid associated to a Wilson loop. We begin with a definition.

\begin{dfn}
For $W = (\fP, [n])$, and $P \in \fP$, define the sub Wilson loop diagram \bas W/ P = (P^c, V_{P^c}) \;.\eas \label{loopcontractdfn}
\end{dfn}

In other words, the Wilson loop $W/P$ is formed by considering only the propagators in not in $P$, and the cyclically ordered set of vertices $F(P)^c$. We may write $M(W/P) = M(W|P^c)/F(\emptyset)$.

Before proceeding, we give an example.

\begin{eg}
Consider $W = (\fP, [7])$ as in example \ref{decomposedeg}, where \bas W = {\begin{xy}
(-10, 4.5) *{\bullet},
(-11, 4); (1, 11) **{\dir{-}},
(-10, 6); (-10, -6) **{\dir{-}},
(-11, -4); (-4, -11) **{\dir{-}},
(-6, -10); (6, -10) **{\dir{-}},
(-1, 11); (11, 4) **{\dir{-}},
(10, 6); (10, -6) **{\dir{-}},
(11, -4); (4, -11) **{\dir{-}},
(-10, 0); (0, -10) **@{~},
(2, -10); (2, 9) **@{~},
(8, -7); (8, 6) **@{~},
\end{xy}} \;. \eas As usual, write $p_1 = (1,3)$. Then $F(p_1) = \{1, 2\}$, and \bas W/\{p_1\} = {\begin{xy}
(-7, 4) *{\bullet},
%(-11, 4); (1, 11) **{\dir{-}},
%(-10, 6); (-10, -6) **{\dir{-}},
(-7, -11); (-7, 5) **{\dir{-}},
(-8, -10); (6, -10) **{\dir{-}},
(-8, 4); (11, 4) **{\dir{-}},
(10, 6); (10, -6) **{\dir{-}},
(11, -4); (4, -11) **{\dir{-}},
%(-10, 0); (0, -10) **@{~},
(0, -10); (0, 4) **@{~},
(8, -7); (8, 4) **@{~},
\end{xy}} \;.\eas
 In this case, since $F(\emptyset) = \emptyset$ in $W$, $W/ \{p_1\} = W | P^c$.
\end{eg}

If $F(P)$ is a propagator flat of maximal rank in $W_1$, then the diagram $W / P$ corresponds to a contraction of the corresponding matroid.

\begin{thm} \label{contractionmatroid}
 The equality on the level of matroids \bas M(W)/F(P) = M(W/P) \; \eas holds if and only if $F(P)$ is a propagator flat of maximal rank.
\end{thm}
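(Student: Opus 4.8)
The plan is to prove the two implications separately; throughout we are in the connected case, so $F(\emptyset) = \emptyset$, which by Lemma \ref{propagatorflat} makes $F(P)$ a flat and, crucially, identifies the ground sets: $[n] \setminus F(P) = F(P)^c = V_{P^c}$, which is also the ground set of $M(W/P) = M((P^c, V_{P^c}))$. For the forward implication I would argue by ranks. Equal matroids have equal rank, and $W/P = (P^c, V_{P^c})$ is a well defined subdiagram of $W$, so its matroid has rank equal to its number of propagators, $\rk M(W/P) = |P^c| = |\fP| - |P|$ (Theorem \ref{correctmatroid}). From the definition of contraction, the bases of $M(W)/F(P)$ are obtained from bases $B$ of $M(W)$ with $|B\cap F(P)|$ maximal, i.e. $|B \cap F(P)| = \rk F(P)$, whence $\rk(M(W)/F(P)) = |\fP| - \rk F(P)$. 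Equating the two ranks forces $\rk F(P) = |P|$, which, since $F(P)$ is a propagator flat, is exactly the maximal-rank condition; here I read that condition as $\rk F(P) = |P|$, the largest value permitted by \eqref{rankbound}.

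The substantive direction is the converse. Assume $\rk F(P) = |P|$; I would show the two matroids have the same bases, working directly with the criterion of Theorem \ref{correctmatroid} that a set is independent if and only if every subset $U$ satisfies $|U| \le |\Prop(U)|$ (morally, that $M(W)$ is the transversal matroid of the incidence graph between $[n]$ and $\fP$). By the definition of contraction the bases of $M(W)/F(P)$ are the sets $B \setminus F(P)$ with $B$ a basis of $M(W)$ and $|B \cap F(P)| = |P|$. For one inclusion, given such a $B$, the set $B \setminus F(P) \subseteq V_{P^c}$ has cardinality $|\fP| - |P| = |P^c|$; to see it is independent in $M(W/P)$, take $U \subseteq B \setminus F(P)$ and apply the basis condition of $B$ to $S = U \sqcup (B \cap F(P)) \subseteq B$. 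Since $\Prop(B \cap F(P)) \subseteq P$, one has $\Prop(S) \cap P^c = \Prop(U) \cap P^c$ and $|\Prop(S) \cap P| \le |P|$, so $|U| + |P| = |S| \le |\Prop(S)| \le |\Prop(U)\cap P^c| + |P|$, giving $|U| \le |\Prop(U) \cap P^c|$, which is precisely independence of $U$ in $M(W/P)$.

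For the reverse inclusion, given a basis $B'$ of $M(W/P)$, I would use $\rk F(P) = |P|$ to pick an independent $F_0 \subseteq F(P)$ with $|F_0| = |P|$ and set $B = B' \sqcup F_0$, of size $|P^c| + |P| = |\fP|$. To check $B$ is independent in $M(W)$, split an arbitrary $U \subseteq B$ as $U_1 \sqcup U_2$ with $U_1 \subseteq B'$ and $U_2 \subseteq F_0$; since $\Prop(U_2) \subseteq P$, the $P$/$P^c$ decomposition of $\Prop(U)$ separates the two independence bounds $|U_1| \le |\Prop(U_1) \cap P^c|$ (from $B'$ independent in $M(W/P)$) and $|U_2| \le |\Prop(U_2)|$ (from $F_0$ independent in $M(W)$), which add to $|U| \le |\Prop(U)|$. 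Hence $B$ is a basis of $M(W)$ with $B \cap F(P) = F_0$ of maximal size $|P|$ and $B \setminus F(P) = B'$, completing the equality of basis sets and therefore of the matroids.

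The main obstacle is the clean separation of $\Prop(U)$ into its $P$ and $P^c$ parts in both inclusions: everything hinges on the fact that a vertex of $F(P)$ contributes only propagators of $P$ (immediate from the definition of $F(P)$), and on having an independent subset of $F(P)$ of size exactly $|P|$, which is where the hypothesis $\rk F(P) = |P|$ is genuinely used. The remaining content is purely cardinality bookkeeping built on Theorem \ref{correctmatroid} and \eqref{rankbound}.
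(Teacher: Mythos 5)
Your proof is correct, and it follows the same broad strategy as the paper (comparing the basis sets of $M(W)/F(P)$ and $M(W/P)$ using the counting criterion of Theorem \ref{correctmatroid}), but the execution differs in both directions in ways worth noting. For the ``only if'' direction the paper argues that when $\rk F(P) < |P|$ every basis of $M(W)/F(P)$ \emph{properly contains} a basis of $W/P$, so the two collections of bases cannot coincide; your one-line rank comparison $\rk(M(W)/F(P)) = |\fP| - \rk F(P)$ versus $\rk M(W/P) = |P^c|$ packages the same cardinality observation more cleanly. For the ``if'' direction the paper starts from a basis $B_{P^c}$ of $W/P$, extends it to a basis $B$ of $W$, and asserts $B \setminus F(P) = B_{P^c}$ --- which tacitly requires that the $|P|$ added elements all land in $F(P)$, a point the paper does not justify (an extension could a priori add vertices of $V_{P^c}$ that are adjacent to propagators of $P$). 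Your construction $B = B' \sqcup F_0$ with $F_0 \subseteq F(P)$ an independent set of size $|P|$ --- which exists precisely because $\rk F(P) = |P|$ --- followed by the explicit $P$/$P^c$ splitting of $\Prop(U)$ to verify independence, closes that gap and makes the role of the maximal-rank hypothesis transparent. Your reading of ``maximal rank'' as $\rk F(P) = |P|$ is the right one: the inequality $|F(P)| < |P|$ in the paper's definition is evidently a typo (it is incompatible with $\rk F(P) = |P| \le |F(P)|$), and the paper's own proof only ever uses the dichotomy $\rk F(P) = |P|$ versus $\rk F(P) < |P|$.
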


\begin{proof}
Recall from definition \ref{contractionmatroid} that if $M(W) = ([n] , \fB)$, the contaction $M(W)/F = (F^c , \fB/F)$, where \bas \fB/F = \{ B \setminus F | B \in \fB, |B \cap F| \textrm{ maximal } \} \;.\eas

By Definition \ref{loopcontractdfn}, $W/P = (P^c , V_{P^c})$. By Proposition \ref{indepincreasecor}, any basis, $B_{P^c}$, of the Wilson loop diagram $W/P$ is an independent set of $W$. Since $B_{P^c}$ is a basis, we know that $|B_{P^c}| = |P^c|$. Extend $B_{P^c}$ to a basis $B$ of $W$.

Suppose that $\rk F(P) = |P|$. Since $F(P)^c = V_{P^c}$, \bas B_{P^c} = B \cap V_{P^c} = B \setminus F(P)\;. \eas Therefore $B_{P^c}$ is a basis for $M(W)/F(P)$. Furthermore, for any basis, $B$, of $W$ such that $|B \cap F(P)| = |P|$, the intersection $B \cap V_{P^c}$ is an independent subset of $V_{P^c}$ of size $|P^c|$. Since $B \cap F(P)$ is also a basis of $W|P$, the set $B \cap V_{P^c}$ is an independent subset of every subdiagram of $W$, $W'$, that contains $W/P$ as a subdiagram. Therefore, again by Proposition \ref{indepincreasecor}, $B\cap V_{P^c}$ is a basis of $W/P$, and the equality holds.

Now suppose $\rk F(P) < |P|$. Then, for every basis $B$ of $W$ the set $B \setminus B_{P^c} \not \subset F(P)$. This is because $B_{P^c}$ is a basis of $W/P$. Therefore $|B \setminus B_{P^c}| = |P|$, which is greater than the rank of $F(P)$. Write the bases of $M(W)/F(P)$ as $B \setminus F(P)$. We see that for every such basis, there is a $B_{P_c}$ such that $B_{P_c} \subsetneq B \setminus F(P)$. Therefore, the equality does not hold.
\end{proof}

Since the Wilson loop $W/P$ is a Wilson loop in its own right, we have the following result:

\begin{cor}
The matroid $M(W)/F(P)$ is disconnected if and only if there is a partition of $P^c$, $Q_1$, $Q_2$, such that $V_{Q_1}$ and $V_{Q_2}$ partition $V_Q$. In particular $V_{Q_1} \cap V_{Q_2} = \emptyset$.
\end{cor}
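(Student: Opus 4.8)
The plan is to reduce this statement to the disconnectedness criterion for Wilson loop diagrams, Theorem \ref{disconnectedloopprop}, by identifying $M(W)/F(P)$ with the matroid of an honest Wilson loop diagram. Since we are in the setting of flacets, $F(P)$ is a propagator flat of maximal rank, so Theorem \ref{contractionmatroid} supplies the identification $M(W)/F(P) = M(W/P)$, where $W/P = (P^c, V_{P^c})$ by Definition \ref{loopcontractdfn}. Thus it suffices to determine when the Wilson loop diagram $W/P$ is disconnected.

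First I would record that in $W/P$ the ambient vertex set is exactly $V_{P^c} = \cup_{p \in P^c} V_p$, so every vertex is adjacent to at least one propagator of $P^c$; that is, $F(\emptyset) = \emptyset$ for the diagram $W/P$. This lets me apply Theorem \ref{disconnectedloopprop} directly to $W/P$: the matroid $M(W/P)$ is disconnected if and only if $P^c$ admits a partition $P^c = Q_1 \amalg Q_2$ whose propagator flats $F(Q_1)$, $F(Q_2)$, computed inside $W/P$, partition $V_{P^c}$.

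The remaining step is the translation from ``the propagator flats $F(Q_1)$, $F(Q_2)$ partition $V_{P^c}$'' to the stated condition that $V_{Q_1}$, $V_{Q_2}$ partition $V_{P^c}$, i.e. $V_{Q_1} \cap V_{Q_2} = \emptyset$. Since $P^c = Q_1 \amalg Q_2$ we always have $V_{P^c} = V_{Q_1} \cup V_{Q_2}$, so the only content is disjointness. If $V_{Q_1} \cap V_{Q_2} = \emptyset$, then no vertex of $V_{Q_1}$ touches a propagator of $Q_2$ and vice versa, so $F(Q_1) = V_{Q_1}$ and $F(Q_2) = V_{Q_2}$, and these partition $V_{P^c}$. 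Conversely, a vertex $v \in V_{Q_1} \cap V_{Q_2}$ lies in $V_{P^c}$ but, touching propagators of both $Q_1$ and $Q_2$, lies in neither $F(Q_1)$ nor $F(Q_2)$, contradicting their partitioning $V_{P^c}$. Hence the two formulations coincide, and combining this with the equivalence from Theorem \ref{disconnectedloopprop} gives the result.

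I expect the main subtlety to be bookkeeping rather than a genuine obstacle: one must be careful that the propagator flats $F(Q_i)$ appearing in the application of Theorem \ref{disconnectedloopprop} are taken relative to the contracted diagram $W/P$, whose vertex set is $V_{P^c}$, and not relative to the original $W$, and that it is precisely the hypothesis $\rk F(P) = |P|$ that licenses the use of Theorem \ref{contractionmatroid}. Once the identification $M(W)/F(P) = M(W/P)$ is in hand, the equivalence follows from Theorem \ref{disconnectedloopprop} together with the elementary translation above.
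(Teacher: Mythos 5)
Your proposal is correct and follows exactly the route the paper takes: the paper's own proof is a one-line citation of Definition \ref{loopcontractdfn}, Theorem \ref{contractionmatroid}, and Theorem \ref{disconnectedloopprop}, and you have simply filled in the details of that chain, including the (correct) translation between the propagator flats $F(Q_i)$ computed in $W/P$ and the dependency sets $V_{Q_i}$. No gaps; your elaboration of the disjointness bookkeeping is a useful expansion of what the paper leaves implicit.
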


\begin{proof}
This follows directly from definition \ref{loopcontractdfn}, Theorem \ref{contractionmatroid} and Theorem \ref{disconnectedloopprop}.
\end{proof}

We summarize the results of this and the previous section as follows:

\begin{thm}
Let $W= (\fP,n)$ be a connected Wilson loop. A flacet of $M(W)$ is defined by a set of propagators, $P \subset \fP$, satisfying the following constions: \begin{enumerate}\item The corresponding propagator flat, $F(P)$, is a cyclic flat. In particular, it has maximal rank. \item For every partition of propagators, $Q_1 \amalg Q_2 = P^c$, there is not a partition of dependency sets, $V_{Q_1} \amalg V_{Q_2} \neq  V_{P_c}$. \item For every partition of propagators, $P_1 \amalg P_2 = P$, there is not a partition of propagator flats, $F(P_1) \amalg F(P_2) \neq  F(P)$.
\end{enumerate} \label{propagatorflacetthm}
\end{thm}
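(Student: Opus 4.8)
The plan is to assemble this characterization directly from the two preceding connectivity analyses, using Theorem \ref{flacetcondition} as the organizing principle. By Theorem \ref{flacetcondition}, a subset $F \subset [n]$ is a flacet of the connected matroid $M(W)$ if and only if both $M(W)|F$ and $M(W)/F$ are connected. By Proposition \ref{flacetcyclicflat}, every flacet is a cyclic flat, and by Lemma \ref{nonpropnoncyclic} together with Lemma \ref{cyclicrank}, the only cyclic flats of $M(W)$ are propagator flats $F(P)$ of maximal rank. This immediately forces condition (1): any flacet must be a propagator flat, and by the discussion opening Section \ref{Wilsonflacets}, a cyclic flat of the form $F(P)$ necessarily has $\rk F(P) = |P|$, i.e. maximal rank. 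So the first step is to reduce the search for flacets to propagator flats of maximal rank $F(P)$, and then read off the two connectivity requirements as conditions (2) and (3).

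The remaining work is to translate ``$M(W)|F(P)$ connected'' and ``$M(W)/F(P)$ connected'' into the stated combinatorial conditions on partitions of propagators. For the contraction, I would invoke Theorem \ref{contractionmatroid}: since $F(P)$ has maximal rank, $M(W)/F(P) = M(W/P)$, and $W/P = (P^c, V_{P^c})$ is itself a Wilson loop diagram. Applying the Corollary following Theorem \ref{contractionmatroid} (equivalently, Theorem \ref{disconnectedloopprop} applied to $W/P$), $M(W/P)$ is disconnected exactly when $P^c$ admits a partition $Q_1 \amalg Q_2$ with $V_{Q_1} \amalg V_{Q_2} = V_{P^c}$. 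Negating this gives condition (2). For the restriction, I would appeal to the Lemma immediately preceding Definition \ref{loopcontractdfn}: for a propagator flat of maximal rank, $M(W)|F(P)$ is disconnected precisely when $F(P)$ splits as a disjoint union of propagator flats $F(P_1) \amalg F(P_2)$ coming from a partition $P_1 \amalg P_2 = P$ (the alternative in that lemma, that $F(P)$ fails to be cyclic, is already excluded by condition (1)). Negating this yields condition (3).

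Thus the proof is essentially a bookkeeping argument: conditions (1), (2), (3) are exactly the three requirements ``$F(P)$ is a cyclic flat of maximal rank,'' ``$M(W)/F(P)$ is connected,'' and ``$M(W)|F(P)$ is connected,'' which by Theorem \ref{flacetcondition} are jointly equivalent to $F(P)$ being a flacet. The main subtlety I expect to handle carefully is the interplay between the cyclicity requirement in condition (1) and the restriction-connectivity in condition (3): the preceding restriction lemma guarantees connectivity of $M(W)|F(P)$ only when $F(P)$ cannot be split into two disjoint propagator flats \emph{and} is already known to be cyclic, so I must be sure condition (1) is doing the work of ruling out the ``$F(P)$ not cyclic'' branch of that lemma before reading condition (3) as a clean statement about propagator partitions. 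Once that dependency is made explicit, the equivalence follows by simply negating each disconnection criterion and collecting the three conditions.
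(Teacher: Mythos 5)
Your proposal is correct and takes essentially the same route as the paper: the paper states Theorem \ref{propagatorflacetthm} as a summary of the preceding results (Theorem \ref{flacetcondition}, Proposition \ref{flacetcyclicflat}, Lemmas \ref{cyclicrank} and \ref{nonpropnoncyclic}, the restriction lemma, and Theorem \ref{contractionmatroid} with its corollary) without a separate proof, and your bookkeeping assembly is exactly the intended argument. Your explicit observation that condition (1) must rule out the ``not cyclic'' branch of the restriction lemma before condition (3) can be read as a clean partition statement is a dependency the paper leaves implicit.
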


These subsets of propagators are important enough that they deserve a name.

\begin{dfn}
Given a Wilson loop diagram $(\fP, [n])$, a subset $P \subset \fP$ is a propagator flacet if and only if it defines a flacet of $M(W)$ as in Theorem \ref{propagatorflacetthm}.
\end{dfn}

\subsubsection{Postive Wilson loops\label{identify positroids}}
In Theorem \ref{propagatorflacetthm}, we identified the propagators of a Wilson loop diagram that correspond to flacets. In this section, we use this to characterize  which Wilson loop diagrams lead to positroids. This is a direct consequence of Theorem \ref{positroidcondition}.

\begin{thm} \label{graphpositroidcondition}
A Wilson loop diagram defines a positroid if and only if all propagator flacets, $P$, define flats, $F(P)$, that are cyclic intervals.
\end{thm}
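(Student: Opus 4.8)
The plan is to combine Theorem \ref{positroidcondition} with the identification of flacets obtained in Theorem \ref{propagatorflacetthm}, so that essentially no new computation is required. Since we have restricted to connected Wilson loops, the matroid $M(W)$ is connected, and Theorem \ref{positroidcondition} applies directly: $M(W)$ is a positroid if and only if every flacet of $M(W)$ is a cyclic interval. The entire task is then to translate ``every flacet is a cyclic interval'' into the stated diagrammatic condition.

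The key step is to describe the flacets of $M(W)$ explicitly as subsets of $[n]$. First I would invoke Proposition \ref{flacetcyclicflat}, which guarantees that every flacet is a cyclic flat, together with Lemma \ref{nonpropnoncyclic}, which shows that every cyclic flat is a propagator flat $F(P)$ for some $P \subset \fP$. Consequently every flacet of $M(W)$ is of the form $F(P)$. Conversely, Theorem \ref{propagatorflacetthm} singles out exactly which propagator flats are flacets, namely those $F(P)$ for which $P$ satisfies the three connectivity conditions defining a propagator flacet. Hence the flacets of $M(W)$ are precisely the propagator flats $F(P)$ arising from propagator flacets $P$.

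Substituting this dictionary into the criterion of Theorem \ref{positroidcondition} finishes the proof: $M(W)$ is a positroid if and only if each such $F(P)$, with $P$ a propagator flacet, is a cyclic interval. This is exactly the asserted statement, since ``$W$ defines a positroid'' means precisely that $M(W)$ is a positroid.

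The main point requiring care — and where all the real content already resides — is the \emph{complete} identification of the matroid-theoretic flacets with the diagrammatic propagator flacets. This relies on the chain flacets $\subseteq$ cyclic flats $\subseteq$ propagator flats (Proposition \ref{flacetcyclicflat} and Lemma \ref{nonpropnoncyclic}) to ensure no flacet is missed, and on Theorem \ref{propagatorflacetthm} to ensure no spurious propagator flat is counted. Once that correspondence is established, the positroid condition transfers verbatim to the graphical level, so the remaining argument is purely a matter of restating Theorem \ref{positroidcondition} in the language of propagator flacets.
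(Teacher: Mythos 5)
Your proposal is correct and follows essentially the same route as the paper: the paper's own proof is simply ``Follows from Theorem \ref{positroidcondition},'' relying implicitly on the identification of flacets with propagator flacets established in Theorem \ref{propagatorflacetthm} and the supporting results (Proposition \ref{flacetcyclicflat}, Lemma \ref{nonpropnoncyclic}) that you spell out explicitly. Your write-up just makes that chain of reductions more explicit than the paper does.
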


\begin{proof} Follows from Theorem \ref{positroidcondition}.\end{proof}

We now apply this to the diagramatics of Wilson loop diagrams to characterise propagator configurations that lead to positroids.

\begin{thm}
Consider the connected Wilson loop diagram $W = (\fP, [n])$. If the matroid $M(W)$ is not a positroid, then $W$ has crossing propagators.
\end{thm}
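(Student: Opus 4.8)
The plan is to enter directly through Theorem~\ref{graphpositroidcondition}: since $M(W)$ fails to be a positroid, there is a propagator flacet $P \subset \fP$ whose propagator flat $F(P)$ is \emph{not} a cyclic interval, and I will construct from this an explicit crossing pair of propagators. Throughout I use that $W$ is connected, so $F(\emptyset) = \emptyset$ and hence $F(P)^c = V_{P^c}$, together with two dual facts: every vertex of $F(P)$ is adjacent only to propagators of $P$, while every propagator $q \in P^c$ satisfies $V_q \subseteq V_{P^c}$ (it cannot touch an $F(P)$-vertex). Writing $F(P)$ as its maximal cyclic runs (blocks) $B_1, \dots, B_m$, the failure to be an interval means $m \geq 2$; the complementary runs $A_1, \dots, A_m$ of $V_{P^c}$ then alternate with them around $[n]$ in the pattern $B_1, A_1, B_2, A_2, \dots$, and each gap $A_i$ is nonempty.

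The first main step extracts a straddling propagator in $P$. Using equation~\eqref{rankbound} and Proposition~\ref{indeppropagatorprop}, the rank of any $S \subseteq F(P)$ is governed by $\Prop(S) \subseteq P$. Hence if no propagator of $P$ had its footprint $V_p \cap F(P)$ meeting two distinct blocks, then $\Prop$ would split along the blocks and $M(W)|F(P)$ would decompose as $\bigoplus_i M(W)|B_i$, so it would be disconnected. Since $P$ is a flacet, $M(W)|F(P)$ is connected by Theorem~\ref{propagatorflacetthm}(3), so there exists $p \in P$ with endpoints in two different blocks $B_s \neq B_t$.

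The second main step produces the crossing partner in $P^c$. The chord $p$ splits the boundary polygon into two arcs $\Gamma_1, \Gamma_2$; because $B_s$ and $B_t$ are distinct blocks, passing from one to the other along either arc meets at least one gap, so both $\Gamma_1$ and $\Gamma_2$ contain vertices of $V_{P^c}$. Now $M(W)/F(P) = M(W/P)$ is connected by Theorem~\ref{contractionmatroid} together with Theorem~\ref{propagatorflacetthm}(2); equivalently the Wilson loop $(P^c, V_{P^c})$ is connected, so by Theorem~\ref{disconnectedloopprop} and the corollary to Theorem~\ref{contractionmatroid} the set $P^c$ cannot be split into two groups whose dependency sets partition $V_{P^c}$. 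If every $q \in P^c$ had $V_q$ contained in a single arc, the partition by arc would be exactly such a forbidden split; hence some $q \in P^c$ has $V_q$ meeting both $\Gamma_1$ and $\Gamma_2$. Such a $q$ has one endpoint-edge on each side of $p$, so by Definition~\ref{crossingpropagatordef} the propagators $p$ and $q$ cross, and $W$ has crossing propagators.

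The only nonroutine inputs are the two translations between matroid connectivity and the combinatorics of footprints, both of which rest on the rank formula~\eqref{rankbound}, Proposition~\ref{indeppropagatorprop}, Theorem~\ref{disconnectedloopprop}, and the separation facts about $F(P)$-vertices and $P^c$-propagators noted above. The main obstacle I anticipate is the edge-versus-vertex bookkeeping needed to certify that the two chords genuinely interleave in the sense of Definition~\ref{crossingpropagatordef}: one must verify that each gap-arc sits strictly inside $\Gamma_1$ or $\Gamma_2$ and that the endpoints of $q$ land on opposite arcs determined by $p$, checked against the index convention $V_p = \{i_p, i_p+1, j_p, j_p+1\}$. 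This is geometrically transparent but is the step where the argument must be made precise.
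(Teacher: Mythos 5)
Your proposal is correct and follows essentially the same route as the paper's own proof: both invoke Theorem~\ref{graphpositroidcondition} to obtain a propagator flacet $P$ with $F(P)$ not a cyclic interval, decompose $F(P)$ and $V_{P^c}$ into alternating cyclic runs, use connectivity of $M(W)|F(P)$ to extract a propagator $p\in P$ straddling two blocks, and use connectivity of $M(W)/F(P)=M(W/P)$ to find $q\in P^c$ with endpoints on both arcs cut out by $p$, whence $p$ and $q$ cross. Your version is, if anything, slightly more explicit about why both arcs contain $V_{P^c}$-vertices and about the edge-versus-vertex bookkeeping, which the paper also leaves informal.
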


\begin{proof}
By Theorem \ref{graphpositroidcondition}, if $M(W)$ is not a positroid, then there exists a propagator flacet $P$ that defines a cyclic flat, $F(P)$, which is not a cyclic interval. Write $F(P) = \cup_1^l [S_{2j}]$, where each $[S_j]$ is a cyclic interval in $F(P)$. Similarly, write $F(P)^c = \cup_1^l [T_{2j-1}]$, such that $[T_k]$ immediately preceeds $[S_{k+1}]$ in $[n]$.

Write $Q = \fP \setminus P$. Then $W/P = W(Q, V_Q)$. %Removing the propagators in $P$, $v \in F(P)$, and edges $e_v$ from $W(\fP, n)$ creates edges in $W/P$ that are not endpoints of any propagators in $\fP \setminus P$.
Since $M(W/P)$ is connected, by Theorem \ref{disconnectedloopprop}, there does not exist a partition $V_1$, $V_2$ of $V_Q$ such that $\Prop(V) \cap \Prop(V_2) = \emptyset$.

Similarly, $M(W)|F(P)$ is connected. There does not exist a partition $F_1$, $F_2$ of $F(P)$ such that $\Prop(F_1) \cap \Prop(F_2) = \emptyset$. In particular, there exists a propagator, $p \in P$, with endpoints in distinct cyclic intervals, $V_{p} \cap S_{2l}, V_{p} \cap S_{2m} \neq \emptyset$, for $l \neq m$. Then $p$ divides $V_Q$ into two sets, $V_{lm} = \cup_{l < j \leq m} [T_{2j-1}]$ and $U_{ml}\cup_{m < j \leq l} [T_{2j-1}]$. Here, as usual, the ordering is given cyclically. Since $W/P$ is connected, there is a propagator $q$ such that $V_{q} \cap T_{2j-1}, V_{p} \cap T_{2k-1} \neq \emptyset$, with $l < j \leq m$ and $m < k \leq l$. Therefore, by definition \ref{crossingpropagatordef}, the propagators $q$ and $p_{lm}$ cross in $W$.
\end{proof}

The contrapositive statement of this theorem gives an important statement about the relationship between Wilson loop diagrams and positroids.

\begin{cor}
Any Wilson loop with non-crossing propagators defines a positroid. \label{noncrosspos}\end{cor}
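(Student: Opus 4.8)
The plan is to recognize that this corollary is, for connected diagrams, precisely the contrapositive of the immediately preceding theorem, and then to patch in the disconnected case. First I would dispose of the connected case directly: if $W = (\fP, [n])$ is a connected well defined Wilson loop with no crossing propagators, then the preceding theorem asserts that a non-positroid $M(W)$ forces crossing propagators, so its contrapositive immediately gives that $M(W)$ is a positroid. No additional argument is needed here, since the preceding theorem is stated exactly for connected diagrams and "defines a positroid" is the same as being admissible in the sense of Definition \ref{admissibledfn}.

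The remaining work is to cover disconnected Wilson loops, which the preceding theorem does not address. Here I would invoke the decomposition results: if $W$ is disconnected, Theorem \ref{disconnectedloopprop} together with Corollary \ref{disconnectedloopcor} partitions $\fP$ into subsets $P_1, \ldots, P_k$ whose propagator flats $F(P_i)$, together with $F(\emptyset)$, partition $[n]$, and the connected components are the Wilson loops $W_i = (P_i, F(P_i))$. Each $W_i$ inherits the non-crossing property from $W$, so by the connected case already established each $W_i$ is a positroid (equivalently, admissible). I would then apply Theorem \ref{noncrosspartconn}: since $W$ has no crossing propagators, the propagator sets $P_i$ do not cross one another, and each $W_i$ is admissible, whence $W$ itself is admissible, i.e.\ defines a positroid.

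The one point requiring care is the translation between "non-crossing propagators," which is a condition on the pairs defining each $p$ (Definition \ref{crossingpropagatordef}), and the "non-crossing partition of $[n]$" required by Theorem \ref{noncrosspartconn}. I expect this to be the only genuine obstacle, but it is already resolved inside the proof of Theorem \ref{noncrosspartconn}: because $F(P_i) = V_{P_i}$ for a disconnected diagram, the inscribed polygons on the sets $F(P_i)$ are exactly the boundary polygons of the components $W_i$ and fully contain their propagators, so two such polygons intersect if and only if a propagator of one component crosses a propagator of another. Consequently the hypothesis that $W$ has no crossing propagators is equivalent to $\{F(P_i)\}$ being a non-crossing partition, which is precisely the input Theorem \ref{noncrosspartconn} demands. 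I would cite this translation rather than reprove it, thereby closing the argument.
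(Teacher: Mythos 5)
Your proof is correct and, for the connected case, is exactly the paper's argument: the paper proves the corollary simply by taking the contrapositive of the immediately preceding theorem, under its standing convention (stated in the remark after Theorem \ref{noncrosspartconn}) that only connected Wilson loops are considered. Your additional handling of disconnected diagrams via Theorem \ref{disconnectedloopprop}, Corollary \ref{disconnectedloopcor} and Theorem \ref{noncrosspartconn} is sound and in fact makes the statement hold as literally written, which the paper leaves implicit.
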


However, this is not a full classification of which Wilson loop diagrams lead to positroids. In particular, there are Wilson loop diagrams with crossing propagators, that lead to positroids. We give an example of this below.

\begin{eg}
Consider the Wilon loop diagram with crossing propagators,

\bas W = {\begin{xy} (0,20) *{\bullet},
(-10, 11); (-10, -11) **{\dir{-}},
(-11, -10); (11, -10) **{\dir{-}},
(10, -11); (10, 11) **{\dir{-}},
(11,9); (-1, 21) **{\dir{-}},
(-11,9); (1, 21) **{\dir{-}},
(-5, 15); (0, -10) **@{~}?/0pt/+(2,2)*{p},
(-10, 0); (10, 0) **@{~}?/0pt/+(3,-2)*{q} ,
\end{xy}} \eas

The relevant flats are $F(p) = \{1\}$, $F(q) = \{5\}$, and $F(p,q) = [5]$. The flats $F(p)$ and $F(q)$ are trivial flats, and therefore need not be considered. $F(p,q)$ is trivially a flacet, and a cyclic interval. Therefore, $M(W)$ is a positroid. We may also see this on the level of the matrices associatated to Wilson loops. Recall that
\bas \cM(W(\cZ_*)) = \left(
\begin{array}{ccccc}
c_{1,1} & c_{1,2} & c_{1,3} & c_{1,4} & 0 \\
0 & c_{2,2} & c_{2,3} & c_{2,4} & c_{2,5} \\
\end{array}
\right) \;.\eas This matrix has non-negative minors if $\frac{c_{1,2}}{c_{1,3}} \geq \frac{c_{2,2}}{c_{2,3}}$ and $\frac{c_{1,3}}{c_{1,4}} \geq \frac{c_{2,3}}{c_{2,4}}$, and $c_{1,1}, c_{2,5} > 0$.
\label{positivecrossingeg}\end{eg}

In general, given a Wilson loop diagram with crossing propagators, if it is not a positroid, one may always add propagators until all flats are trivial. For instance, $\begin{xy} (-5,8.5) *{\bullet},
(-6, 8.5); (6, 8.5) **{\dir{-}},
(-4, 10); (-11, -1) **{\dir{-}},
(4, 10); (11, -1) **{\dir{-}},
(4,-10); (11, 1) **{\dir{-}},
(-4,-10); (-11, 1) **{\dir{-}},
(-6, -8.5); (6, -8.5) **{\dir{-}},
(-7, -5); (7, 5) **@{~},
(-7, 5); (7, -5) **@{~},
\end{xy}$ does not lead to a positroid, but $\begin{xy} (-5,8.5) *{\bullet},
(-6, 8.5); (6, 8.5) **{\dir{-}},
(-4, 10); (-11, -1) **{\dir{-}},
(4, 10); (11, -1) **{\dir{-}},
(4,-10); (11, 1) **{\dir{-}},
(-4,-10); (-11, 1) **{\dir{-}},
(-6, -8.5); (6, -8.5) **{\dir{-}},
(-7, -5); (7, 5) **@{~},
(-7, 5); (7, -5) **@{~},
(-6, 7); (6, 7) **@{~},
\end{xy}$ does.

Both this last example, and the Wilson loop diagram in Example \ref{positivecrossingeg} are exact Wilson loops, as defined in Definition \ref{overexact}. We have seen from theorem \ref{equivalencethm}, that if two exact Wilson loops on $n$ points are equivalent, as defined in definition \ref{equivdiagramdef}, they define the same subspace of $\R^n$. In other words, they define the same matroid. This gives the following result:

\begin{thm}
Let $W = (\fP, [n])$ be a Wilson loop diagram such that every disjoint set of crossing propagators $R_i$ can be extended to a set $P_i \subset \fP$ such that $|V_{P_i}| = |P_i| +3$. Then $W$ defines a positroid. \label{crosspos}
\end{thm}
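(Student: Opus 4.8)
The plan is to remove the crossings one ``tangle'' at a time by replacing each crossing region with an equivalent non-crossing exact subdiagram, and then to quote Corollary \ref{noncrosspos}. First I would make the hypothesis precise: let $R_1,\dots,R_s$ be the equivalence classes, under the transitive closure of the crossing relation of Definition \ref{crossingpropagatordef}, that contain at least one crossing pair. These are pairwise disjoint, and by hypothesis each $R_i$ is contained in a set $P_i\subset\fP$ with $|V_{P_i}|=|P_i|+3$, i.e.\ an exact subdiagram in the sense of Definition \ref{overexact}. The goal is to produce a diagram $W'=(\fP',[n])$ with no crossing propagators and with $M(W')=M(W)$; then Corollary \ref{noncrosspos} gives that $M(W)$ is a positroid.

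The combinatorial heart is the following: \emph{on any vertex set $V$ that is the dependency set of an exact configuration there exists a non-crossing exact configuration.} Granting this, I replace each $P_i$ by a non-crossing exact $P_i'$ with $V_{P_i'}=V_{P_i}$ and $|P_i'|=|P_i|$. The argument of Theorem \ref{equivalencethm} then applies to the exact subset alone: the vectors $\{C_p\}_{p\in P_i}$ and $\{C_{p'}\}_{p'\in P_i'}$ each span the full kernel of $\cZ_*^\mu|V_{P_i}$, while the remaining propagators are untouched, so the total span is unchanged and $W,W'$ realize the same matroid. To prove the existence statement I would argue from the structure of $V$: it is a union of maximal boundary arcs $A_1,\dots,A_t$ (each of length $\ge 2$, since every vertex lies in a propagator edge), and I would build a non-crossing configuration by fanning each arc from one of its edges and bridging the arcs by a non-crossing system of propagators — concretely, the non-crossing exact configurations on $V$ are the ``triangulations'' of the polygon whose corners are the usable boundary edges of $V$, with the short diagonals internal to an arc forbidden (they would be overdefined); such a triangulation always exists and uses exactly $|V|-3$ propagators. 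A clean cross-check, which also re-proves Theorem \ref{equivalencethm} and shows each piece is already a positroid, is that every exact configuration realizes the uniform matroid $U_{|V|-3,|V|}$: the $\{C_p\}$ span the full kernel of the generic $(|V|+1)\times 4$ matrix $\cZ_*^\mu|V$, and since no four rows of a generic (Definition \ref{genericdfn}) twistor matrix are dependent, no four columns of $\cM$ can vanish together, so every $(|V|-3)$-subset is a basis.

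It then remains to check that untangling one cluster does not create crossings elsewhere, so that after treating $R_1,\dots,R_s$ the diagram $W'$ is globally non-crossing. The key point is that an exact configuration is necessarily connected: if $P$ split into components with vertex sets $V^{(1)},\dots,V^{(r)}$, well-definedness would give $|P|\le\sum_t(|V^{(t)}|-3)=|V_P|-3r$, forcing $r=1$. For a connected exact $P$, a propagator $q\notin P$ fails to cross every propagator of $P$ if and only if all of $V_P$ lies weakly on one side of the chord of $q$ — a condition on $V_P$ alone, not on the individual propagators of $P$. Since $V_{P_i'}=V_{P_i}$ and $P_i'$ is supported on the same vertices, any $q$ that did not cross $P_i$ does not cross $P_i'$, and by maximality of the cluster $R_i$ the propagators one must control are exactly those not crossing $R_i$. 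Iterating over $i$ strictly reduces the number of crossing pairs and terminates in a non-crossing $W'$ with $M(W')=M(W)$, whence Corollary \ref{noncrosspos} finishes the proof.

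The step I expect to be the main obstacle is the global bookkeeping in the previous paragraph: the exact extensions $P_i$ furnished by the hypothesis need not be disjoint, and the \emph{extra} propagators in $P_i\setminus R_i$ could a priori cross propagators outside $P_i$, so re-routing inside one cluster might disturb another. The separation characterisation above is what makes the clusters essentially independent, but turning it into a clean induction — for instance by choosing the $P_i$ pairwise disjoint or nested, and verifying that the triangulation completing each cluster can be taken to respect the neighbouring structure — is the delicate part of the argument.
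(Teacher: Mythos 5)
Your proposal follows the paper's own proof exactly: replace each exact cluster $P_i$ by a non-crossing set of propagators on the same dependency set $V_{P_i}$, invoke Theorem \ref{equivalencethm} to conclude the matroid is unchanged, and finish with Corollary \ref{noncrosspos}. The two difficulties you flag --- the existence of a well defined non-crossing exact configuration on $V_{P_i}$, and the possibility that the replacement creates new crossings with propagators outside $P_i$ --- are genuine, but the paper's proof simply asserts that such a replacement exists and is globally non-crossing without addressing either point, so your more cautious treatment is, if anything, more complete than the original.
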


\begin{proof}
Given any Wilson loop as above, define a Wilson loop diagram $W'$ that replaces each $P_i$ with a set of non-crossing propagators $Q_i$ such that $V_{P_i} = V_{Q_i}$. Then, by definition \ref{equivdiagramdef}, $W$ and $W'$ are equivalent. Therefore, they define the same subspace of $\R^n$, and the same matroid. Since $W'$ is constructed to not have any crossing propagators, $W'$ is a positroid, by Corrollary \ref{noncrosspos}. Therefore, $W$ is a positroid.
\end{proof}

Therefore, we have shown several things. The object of physical interest is the Wilson loop amplitude, $A_{n, k} (\cZ)$, at $N^kMHV$. This is defined by the sum of all Wilson loop integrals at $N^kMHV$. For large $k$ and $n$, it is important to be able to break the integral apart into independent components. Theorem \ref{disconnectedloopprop} does just that, by identifying the connected components of an integral. Theorem \ref{overdefinethm} shows that overdefined Wilson loop diagrams need not be considered in this sum at all. Corrollary \ref{noncrosspos} and Theorem \ref{noncrosspartconn} show that any well defined diagram with non-crossing propagators must be considered. Finally, Theorem \ref{crosspos} and Theorem \ref{equivalencethm}, raise an interesting question about the integrals associated to exact Wilson loop diagrams.

\begin{rmk} Exact Wilson loop diagrams are precisely those that have $\overline{MHV}$ subdiagrams. That is, if $(\fP, [n])$ is an exact, not necessarily connected Wilson loop diagram, and $P \subset \fP$ is a subset of propagators satisfying \eqref{exactcond}, then, by definition, the Wilson loop sub diagram $(P, V_P)$ is an $\overline{MHV}$ diagram (i.e. one in which $k + 3 = n)$, and therefore physically uninteresting. However, the original diagram $(\fP, [n])$ need not be of $\overline{MHV}$ type. Furthermore, in the case of exact diagrams, two planar Wilson loop diagrams with the same initial set of twistor data define the same Grassmannians. However, one sees from the from of the $\hat c_{p,r}$ defining the integrals associated to these diagrams, equation \eqref{amplitudeeq}, two different planar diagrams, given the same generic twistor configuration, give rise to different integrals. This is not a problem that occurs in the case of non-exact diagrams, where, conjecturally, each diagram defines a different Grassmannian. In this case, the integral of a Wilson loop Diagram uniquely assigns a function of the twistor configuration to the Grassmannian associated to it. The combinatorics and the physical characteristics of exact Wilson loop diagrams deserve further study. \label{amplitudeambiguity}\end{rmk}

\section{Future work}\label{futurework}

This is a first step in a project that aims to understand the amplituhedron and its extension to correlators, the `correlahedron', via MHV diagrams. Ultimately these are a stepping stone to the dual amplituhedron \cite{hodges:20013eliminating}, the polytope describing amplitudes and loop integrands etc., by regarding each diagram as providing some cell in some invariantly defined polyhedron whose volume provides the amplitude, correlator or their integrands.

This paper hopes to encourage the application of a hitherto untapped set of tools to this problem encoding these physically intersting objects in a new language, that of matroids.

For this programme, the positive Grassmannian is an intermediate stage that provides a better understanding of the cells that each diagram gives rise to in a form where they can be mapped into the ambient space of the amplituhedron. Key ideas in this programme are as follows.

\begin{enumerate}
\item The rational functions that diagrams provide have both physical poles corresponding to physical propagators in conventional Feynman diagrams and poles that depend on the reference twistor, and hence are spurious and unphysical.  Both types of poles correspond to where the $\hat c_{p,s}$'s change sign at $0$ and $\infty$ (on the support of the delta functions, the $\hat c_{p,s}$s can be expressed as functions of  the twistor input data).  These in turn are  the boundaries of the positive regions and hence of the cells.
\item In a physical object such as a full amplitude, the spurious poles must all cancel when all the diagrams are summed, leaving only the physical singularities.  Seeing this cancellation algebraically is challenging except in simple cases.  However, the amplituhedron turns the problem of understanding the cancellation of spurious poles into the geometric problem of seeing that the cells or tiles can be joined together across their spurious boundaries leaving only boundaries corresponding to the physical singularities.
\item As mentioned in Section \ref{matroiddefssection}, each matroid defines a family of Grassmannians. There is a lot known about the submanifolds of Grassmannians corresponding to matroids \cite{postroidmanifold13}. In the Wilson Loop Diagram context, one expects each Wilson loop diagram to define a cycle, or a closed submanifold, of the positive Grassmanians. Given the natural matroidal interpretation of Wilson loop diagrams, one hopes that the geometry of positroids will help in understanding what the boundaries of these cycles are, and how they embed into the geometry of the positive Grassmannians.
\item One problem that we have identified is that additional data is needed over and above the positroid data of a diagram to determines its precise contribution when there are subdiagrams with $n=k+3$, $k\geq 2$.  Although the formulae are clear, the optimal way to encode this additional data needs to be found.
\item For performing concrete calculations on the matrices defined by the Wilson loop diagrams, one often wishes to know how the relative signs of the coefficients $\hat c_{p,x}$ depend on the Wilson loop diagram. This is well studied in the matroid community using tools like Le diagrams, see, for example, \cite{Postetal09}.

\end{enumerate}

\subsection{Acknowledgements:}
We would like to thank Lionel Mason, Alex Fink, Paul Heslop, Karel Casteel, and Reza Doobary for many useful discussions. This research has been made possible by support from the EPSRC under EP/J019518/1 and the ERC under DUALITIESHEPTH.

\bibliographystyle{amsplain}
\bibliography{c:/users/S/Dropbox/bibliography/Bibliography}{}

\end{document}